\documentclass{article}

\usepackage{arxiv}

\usepackage[utf8]{inputenc} 
\usepackage[T1]{fontenc}    
\usepackage[hidelinks]{hyperref}       
\usepackage{url}            
\usepackage{booktabs}       
\usepackage{amsfonts}       
\usepackage{nicefrac}       
\usepackage{microtype}      
\usepackage{lipsum}
\usepackage{natbib}
\usepackage{caption} 
\usepackage{amssymb} 
\usepackage{amsmath} 
\usepackage{mathrsfs} 
\usepackage{graphicx}
\usepackage{enumerate}
\usepackage{url} 
\usepackage[amsmath,amsthm,thmmarks]{ntheorem}
\usepackage{bbm} 

\usepackage{subcaption} 
\usepackage{appendix}
\usepackage[ruled,vlined]{algorithm2e} 

\newtheorem{theorem}{Theorem}
\newtheorem{lemma}{Lemma}
\newtheorem{assumption}{Assumption}
\newtheorem{remark}{Remark}
\newtheorem{corollary}{Corollary}
\newtheorem{definition}{Definition}

\title{Stochastic Approximation Cut Algorithm for Inference in Modularized Bayesian Models}

\author{
  Yang Liu\thanks{Correspondence: 
    Yang Liu, MRC Biostatistics Unit, University of Cambridge, Robinson Way, CB2 0SR} \\
  MRC Biostatistics Unit\\
  University of Cambridge\\
  Cambridge, UK \\
  \texttt{yang.liu@mrc-bsu.cam.ac.uk} \\
   \And
 Robert J.B. Goudie \\
  MRC Biostatistics Unit\\
  University of Cambridge\\
  Cambridge, UK \\
  \texttt{robert.goudie@mrc-bsu.cam.ac.uk} \\
}

\begin{document}
\maketitle

\begin{abstract}
Bayesian modelling enables us to accommodate complex forms of data and make a comprehensive inference, but the effect of partial misspecification of the model is a concern. One approach in this setting is to modularize the model, and prevent feedback from suspect modules, using a cut model. After observing data, this leads to the cut distribution which normally does not have a closed-form. Previous studies have proposed algorithms to sample from this distribution, but these algorithms have unclear theoretical convergence properties. To address this, we propose a new algorithm called the Stochastic Approximation Cut algorithm (SACut) as an alternative. The algorithm is divided into two parallel chains. The main chain targets an approximation to the cut distribution; the auxiliary chain is used to form an adaptive proposal distribution for the main chain. We prove convergence of the samples drawn by the proposed algorithm and present the exact limit. Although SACut is biased, since the main chain does not target the exact cut distribution, we prove this bias can be reduced geometrically by increasing a user-chosen tuning parameter. In addition, parallel computing can be easily adopted for SACut, which greatly reduces computation time.
\end{abstract}

\keywords{Cutting feedback \and Stochastic approximation Monte Carlo \and Intractable normalizing functions \and Discretization}

\section{Introduction}
\label{sec:intro}
Bayesian models mathematically formulate our beliefs about the data and parameter. Such models are often highly structured models that represent strong assumptions. Many of the desirable properties of Bayesian inference require the model to be correctly specified. We say a set of models $f(x|\theta)$, where $\theta\in\Theta$, are misspecified if there is no $\theta_0\in\Theta$ such that data $X$ is independently and identically generated from $f(x|\theta_0)$ \citep{WALKER20131621}. In practice, models will inevitably fall short of covering every nuance of the truth. One popular approach when a model is misspecified is fractional (or power) likelihood. This can be used in both classical \citep[e.g.,][]{doi:10.1002/sim.2129, doi:10.1080/13658810802672469, liu2018geographically} and Bayesian \citep[e.g.,][]{miller2018robust, bhattacharya2019bayesian} frameworks. However, this method treats all of the model as equally misspecified.

We consider the situation when the assumptions of the model are thought to partially hold: specifically, we assume that one distinct component \citep[or module in the terminology of][]{liu2009modularization} is thought to be incorrectly specified, whereas the other component is correctly specified. In standard Bayesian inference, these distinct modules are linked by Bayes' theorem. Unfortunately, this means the reliability of the whole model may be affected even if only one component is incorrectly specified. To address this, in this paper we adopt the idea of ``cutting feedback'' \citep{lunn2009bugs, liu2009modularization, Plummer2015, jacob2017better, 2017arXiv170803625J} which modifies the links between modules so that estimation of non-suspect modules is unaffected by information from suspect modules. This idea has been used in a broad range of applications including the study of population
pharmacokinetic/pharmacodynamic (PK/PD) models \citep{lunn2009combining}, analysis of computer models \citep{liu2009modularization}, Bayesian estimation of causal effects with propensity scores \citep{CuttingFeedbackinBayesianRegressionAdjustmentforthePropensityScore, doi:10.1080/00031305.2015.1111260} and Bayesian analysis of health effect of air pollution \citep{BLANGIARDO2011379}.

Consider the generic two module model with observable quantities (data) $Y$ and $Z$ and parameters $\theta$ and $\varphi$, shown in the directed acyclic graph (DAG) in Figure \ref{F1}. The joint distribution is
\[
p(Y,Z,\theta,\varphi)=p(Y|\theta,\varphi)p(Z|\varphi)p(\theta)p(\varphi),
\]
and the standard Bayesian posterior, given observations of $Y$ and $Z$, is
\[
p(\theta,\varphi|Y,Z)=p(\theta|Y,\varphi)p(\varphi|Y,Z)=\frac{p(Y|\theta,\varphi)p(\theta)}{p(Y|\varphi)}\frac{p(Y|\varphi)p(Z|\varphi)p(\varphi)}{p(Y,Z)}.
\]

\begin{figure}
\center
  \includegraphics[scale=0.4]{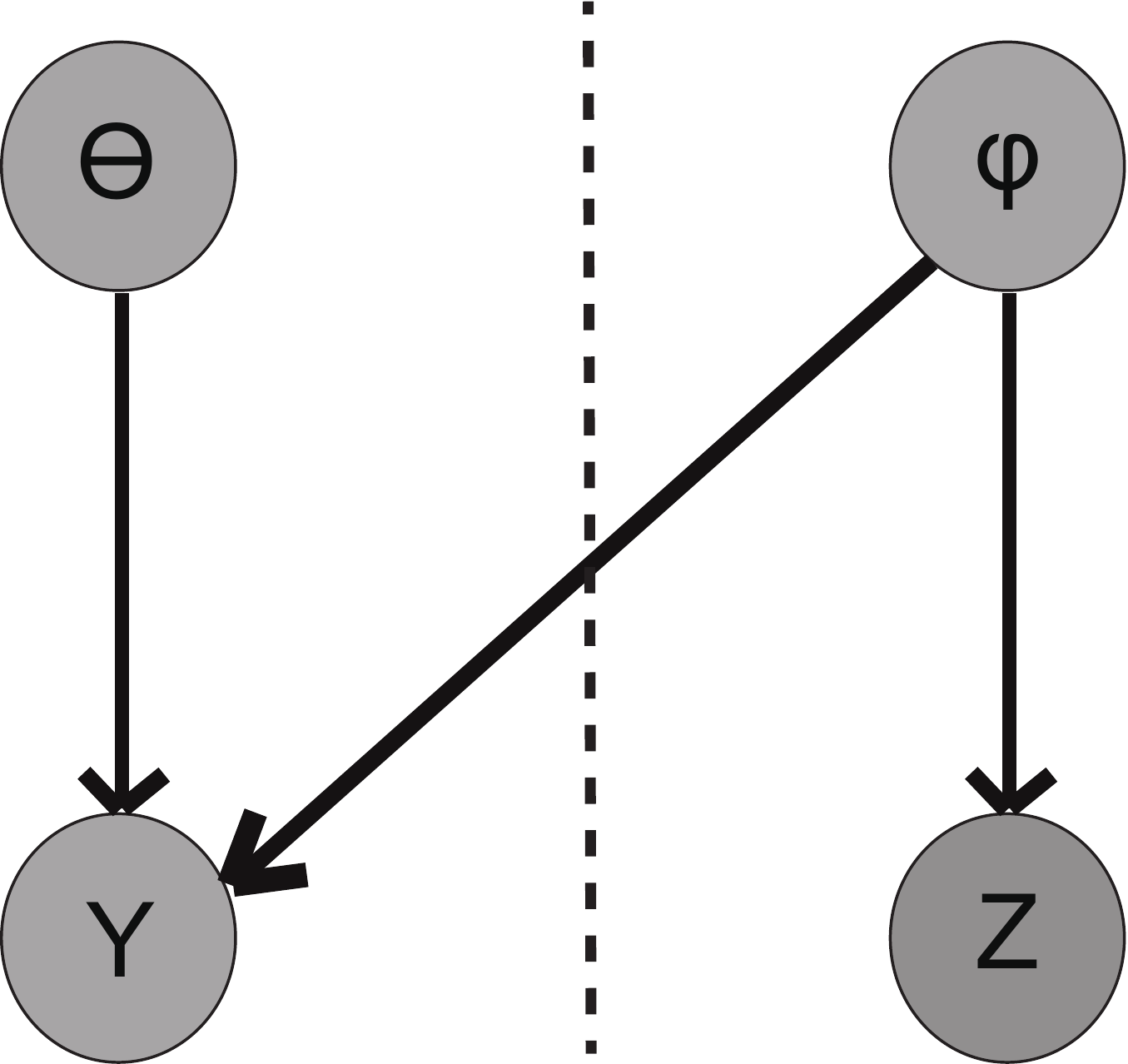}
\caption{DAG representation of a generic two module model. The two modules are separated by a dashed line.}
\label{F1}       
\end{figure}

Suppose we are confident that the relationship between $\varphi$ and $Z$ is correctly specified but not confident about the relationship between $\varphi$ and $Y$. To prevent this possible misspecification affecting estimation of $\varphi$, we can ``cut'' feedback by replacing $p(\varphi|Y,Z)$ in the standard posterior with $p(\varphi|Z)$, making the assumption that $\varphi$ should be solely estimated by $Z$,
\begin{equation}
p_{cut}(\theta,\varphi) :=p(\theta|Y,\varphi)p(\varphi|Z)=\frac{p(Y|\theta,\varphi)p(\theta)}{p(Y|\varphi)}\frac{p(Z|\varphi)p(\varphi)}{p(Z)}.
\label{E3}
\end{equation}
We call \eqref{E3} the ``cut distribution''. The basic idea of cutting feedback is to allow information to ``flow'' in the direction of the directed edge, but not in the reverse direction (i.e. a ``valve'' is added to the directed edge).

Sampling directly from $p_{cut}(\theta,\varphi)$ is difficult because the marginal likelihood $p(Y|\varphi)=\int p(Y|\theta,\varphi)p(\theta)d\theta$ depends on a parameter of interest $\varphi$ and is not usually analytically tractable, except in the simple case when $p(\theta)$ is conditionally conjugate to $p(Y|\theta,\varphi)$, which we do not wish to assume. This intractable marginal likelihood is a conditional posterior normalizing constant: it is the normalizing function for the posterior distribution $p(\theta|Y,\varphi)$, conditional on $\varphi$, of a parameter $\theta$ of interest: 
\begin{equation}
p(\theta|Y, \varphi) = \frac{p(Y, \theta| \varphi)}{p(Y | \varphi)}.
\label{E4}
\end{equation}
This differs importantly to intractable likelihood normalizing constants, as discussed in the doubly intractable literature \citep[e.g.,][]{doi:10.1080/01621459.2018.1448824}, in which the normalizing function $H(\varphi) = \int h(Y | \varphi) dY$ for the likelihood is intractable.
\[
p(Y | \varphi) = \frac{h(Y | \varphi)}{H(\varphi)}.
\]
The normalizing function $H(\varphi)$ is obtained by marginalizing the likelihood, with respect to the observable quantity $Y$, in contrast to the normalizing function $p(Y|\varphi)$, which is obtained by marginalizing likelihood $p(Y, \theta | \varphi)$ with respect to a parameter $\theta$ of interest. This difference means that standard methods for doubly intractable problems \citep[e.g.,][]{10.2307/20441294, Murray:2006:MDD:3020419.3020463, doi:10.1080/00949650902882162, doi:10.1080/01621459.2015.1009072}, which introduce an auxiliary variable, with the same distribution (or proposal distribution) as the distribution of the \textit{a posteriori} observed and fixed $Y$ to cancel the intractable normalizing function shared by them, do not directly apply to \eqref{E4}.

A simple algorithm that aims to draw samples from $p_{cut}(\theta,\varphi)$ is implemented in WinBUGS \citep{lunn2009bugs}. It is a Gibbs-style sampler that involves updating $\theta$ and $\varphi$ with a pair of transition kernels $q(\theta^\prime|\theta,\varphi^\prime)$ and $q(\varphi^\prime|\varphi)$ that satisfy detailed balance with $p(\theta|Y,\varphi^\prime)$ and $p(\varphi|Z)$ respectively. However, the chain constructed by the WinBUGS algorithm may not have the cut distribution as its stationary distribution \citep{Plummer2015} since
\[
\int p_{cut}(\theta,\varphi)q(\theta^\prime|\theta,\varphi^\prime)q(\varphi^\prime|\varphi)d\theta d\varphi=w(\theta^\prime,\varphi^\prime)p_{cut}(\theta^\prime,\varphi^\prime),
\] 
where the weight function $w$ is
\[
w(\theta^\prime,\varphi^\prime)=\int \frac{p(\theta|Y,\varphi)}{p(\theta|Y,\varphi^\prime)}q(\varphi|\varphi^\prime)q(\theta|\theta^\prime,\varphi^\prime)d\theta d\varphi.
\]

The WinBUGS algorithm is inexact since $w(\theta^\prime,\varphi^\prime)\neq 1$, except in the simple case (conditionally-conjugate) when it is possible to draw exact Gibbs updates from $p(\theta^\prime|Y,\varphi^\prime)$. \cite{Plummer2015} proposed two algorithms that address this problem by satisfying $w(\theta^\prime,\varphi^\prime)=1$ approximately. One is a nested MCMC algorithm, which updates $\theta$ from $p(\theta^\prime|Y,\varphi^\prime)$ by running a separate internal Markov chain with transition kernel $q^\ast(\theta^\prime|\theta,\varphi^\prime)$ satisfying detailed balance with the target distribution  $p(\theta|Y,\varphi^\prime)$. The other is a linear path algorithm, which decomposes the complete MCMC move from $(\theta,\varphi)$ to $(\theta^\prime,\varphi^\prime)$ into a series of substeps along a linear path from $\varphi$ to $\varphi^\prime$ and drawing a new $\theta$ at each substep. However, these methods require either the length of the internal chain or the number of substeps to go to infinity, meaning that in practice, these algorithms will not necessarily converge to $p_{cut}$.

In this article, we propose a new sampling algorithm for $p_{cut}(\theta,\varphi)$, called the Stochastic Approximation Cut Algorithm (SACut). Since $\varphi$ can be easily sampled from tractable part $p(\varphi|Z)$, our algorithm aims to sample $\theta$ from the intractable part $p(\theta|Y,\varphi)$. Our algorithm is divided into two chains that are run in parallel: the main chain that approximately targets $p_{cut}(\theta,\varphi)$; and an auxiliary chain that is used to form a proposal distribution for $\theta|\varphi$ in the main chain (see Figure \ref{F3} (b)). The auxiliary chain uses Stochastic Approximation Monte Carlo (SAMC) \citep{doi:10.1198/016214506000001202} to approximate the intractable marginal likelihood $p(Y|\varphi)$ and subsequently form a discrete set of distribution $p(\theta|Y,\varphi)$ for each $\varphi\in\Phi_0=\{\varphi_0^{(1)},...,\varphi_0^{(m)}\}$, a set of pre-selected auxiliary parameters (see Figure \ref{F3} (a)).

The basic ``naive'' form of our algorithm has convergence in distribution, but stronger convergence properties can be obtained by building a proposal distribution $p_n^{(\kappa)}(\theta|Y,\varphi)$ to target an approximation $p^{(\kappa)}(\theta|Y,\varphi)$ instead of the true distribution $p(\theta|Y,\varphi)$ (see Figure \ref{F3} (c-d)). We prove a weak law of large numbers for the samples $\{(\theta_n,\varphi_n)\}_{n=1}^N$ drawn from the main chain. We also prove that the bias due to targeting $p^{(\kappa)}(\theta|Y,\varphi)$ can be controlled by the precision parameter $\kappa$, and that the bias decreases geometrically as $\kappa$ increases. Our algorithm is inspired by the adaptive exchange algorithm \citep{doi:10.1080/01621459.2015.1009072}, but replaces the exchange step with a direct proposal distribution for $\theta$ given $\varphi$ in the main chain.

\begin{figure}[tp] 
\centering 
\includegraphics[origin=c,scale=0.2]{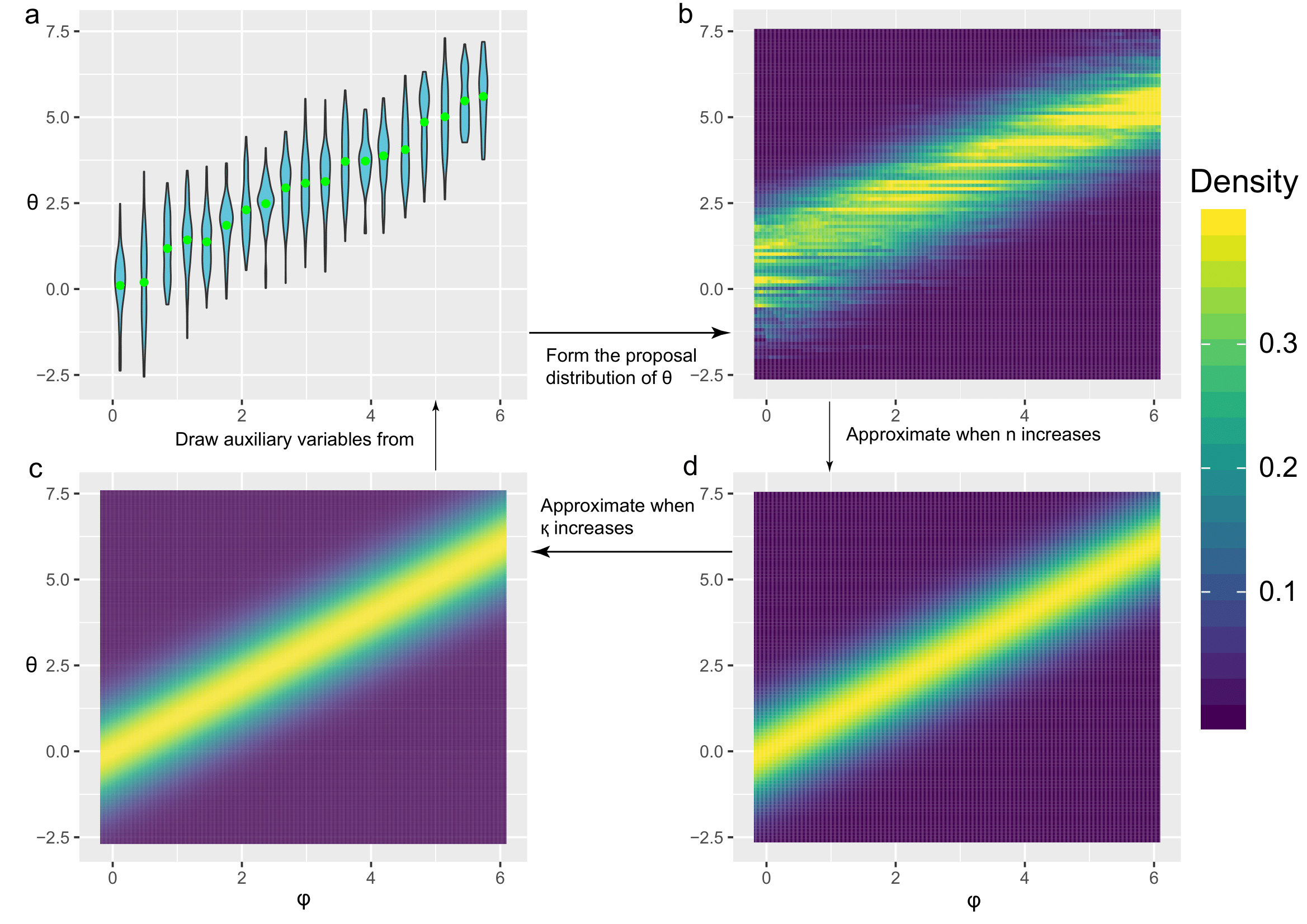}
\caption[Relationship between $p(\theta|Y,\varphi_0^{(i)})$, $p(\theta|Y,\varphi)$, $p^{(\kappa)}(\theta|Y,\varphi)$ and $p_n^{(\kappa)}(\theta|Y,\varphi)$.]{Relationship between $p(\theta|Y,\varphi_0^{(i)})$, $p(\theta|Y,\varphi)$, $p^{(\kappa)}(\theta|Y,\varphi)$ and $p_n^{(\kappa)}(\theta|Y,\varphi)$. This is a toy example when the conditional distribution of $\theta$, given $Y=1$ and $\varphi$, is $\text{N}(\varphi, Y^2)$. Samples of the auxiliary variable $\tilde{\theta}$ are drawn from a mixture of discretized densities $p(\theta|Y,\varphi_0^{(i)})$, $i=1,...,m$, shown in the violin plot in (a), with the green dots showing the median of each component (see Section~\ref{S2.1}). Then $p_n^{(\kappa)}(\theta|Y,\varphi)$, shown in (b), is formed by using these auxiliary variables (see Section~\ref{S2.2}). Lemma \ref{lem1} (Section~\ref{S2.3}) shows that $p_n^{(\kappa)}(\theta|Y,\varphi)$ converges to $p^{(\kappa)}(\theta|Y,\varphi)$, which is shown in (d), while \eqref{E18} shows that $p^{(\kappa)}(\theta|Y,\varphi)$ converges to the original density $p(\theta|Y,\varphi)$, shown in (c).} 
\label{F3}
\end{figure}

\section{Main Result}
\label{sec:main}
Let the product space $\Theta\times\Phi$ be the supports of $\theta$ and $\varphi$ under $p_{cut}$. We assume the following throughout for simplicity.
\begin{assumption}
\label{assu1}
(a) $\Theta$ and $\Phi$ are compact, (b) $p_{cut}$ is continuous with respect to $\theta$ and $\varphi$ over $\Theta\times\Phi$.
\end{assumption}
Assumption \ref{assu1}(a) is restrictive, but is commonly assumed in the study of adaptive Markov chains \citep{haario2001}. In most applied settings it is reasonable to choose a prior for $\theta$ and $\varphi$ with support on only a compact domain, making the domain of the cut distribution compact without any alteration to the likelihood. Note that Assumption \ref{assu1} implies that $p_{cut}$ is bounded over $\Theta\times\Phi$. From now on, define a probability space $(\Omega,\mathcal{F},\mathbb{P})$. Denote Lebesgue measure $\mu$ on $\Theta$ and $\Phi$ and let $P_{cut}$ be the measure on $\Theta\times\Phi$ defined by its density $p_{cut}$.

In following sections, we describe the construction of the algorithm. The naive version of our algorithm builds a discrete proposal distribution for $\theta$, based on \cite{doi:10.1080/01621459.2015.1009072}. Note that, in \cite{doi:10.1080/01621459.2015.1009072}, this proposal distribution only draws auxiliary variables that are discarded once the parameter of interest is drawn, and so strong convergence results for samples drawn by this probability distribution are not needed by \cite{doi:10.1080/01621459.2015.1009072}. This does not always apply to our problem since $\theta$ is the parameter of interest and its parameter space can be continuous. The naive version does not allow us to prove stronger convergence and theoretical properties, so we apply a simple function approximation technique with a specially-designed partition of the parameter space that enables a straightforward implementation. Although this approximation leads to bias, we show that it can be controlled.

\subsection{Naive Stochastic Approximation Cut Algorithm}
\label{S2.1}
To introduce ideas that we will use in Section \ref{S2.3}, we first describe a naive version of the Stochastic Approximation Cut Algorithm. The overall naive algorithm (Algorithm \ref{Al1}) is divided into two chains that are run in parallel. 

The auxiliary chain $h_n=(\tilde{\theta}_n,\tilde{\varphi}_n)$, $n=0,1,2,...,$ uses Stochastic Approximation Monte Carlo \citep{doi:10.1198/016214506000001202} to estimate $p(Y|\varphi)$ at a set of $m$ pre-selected auxiliary parameter values $\Phi_0=\{\varphi_0^{(1)},...,\varphi_0^{(m)}\}$. As we detail in the supplementary materials, the set $\Phi_0$ is chosen from a set of MCMC samples drawn from $p(\varphi|Z)$, chosen using the Max-Min procedure \citep{doi:10.1080/01621459.2015.1009072} that repeatedly adds the sample that has the largest Euclidean distance to the hitherto selected $\Phi_0$. This ensures that $\Phi_0$ covers the major part of the support of $p(\varphi|Z)$. A reasonably large $m$ ensures the distribution $p(\theta|Y,\varphi)$ overlaps each other for neighbouring $\varphi_0^{(a)}$ and $\varphi_0^{(b)}$. The target density for $(\tilde{\theta},\tilde{\varphi}) \in \Theta\times\Phi_0$, which is proportional to $p(\theta | Y, \varphi)$ in \eqref{E3} at the values $\Phi_0$, is
\begin{equation}
p(\tilde{\theta},\tilde{\varphi})=\frac{1}{m}\sum_{i=1}^m\frac{p(Y|\tilde{\theta},\varphi_0^{(i)})p(\tilde{\theta})}{p(Y|\varphi_0^{(i)})} \mathbbm{1}_{\{\tilde{\varphi}=\varphi_0^{(i)}\}}.
\label{E9}
\end{equation}
Given proposal distributions $q_1(\tilde{\theta}^\prime|\tilde{\theta})$ (e.g., symmetric random walk proposal) and $q_2(\tilde{\varphi}^\prime|\tilde{\varphi})$ (e.g., uniformly drawing $\tilde{\varphi}^\prime$ from a neighbouring set of $\tilde{\varphi}$) for $\tilde{\theta}$ and $\tilde{\varphi}$ individually, at each iteration $n$, proposals $\tilde{\theta}^\prime$ and $\tilde{\varphi}^\prime$ are drawn from a mixture proposal distribution, with a fixed mixing probability $p_{mix}$,
\[
q(\tilde{\theta}^\prime,\tilde{\varphi}^\prime|\tilde{\theta}_{n-1},\tilde{\varphi}_{n-1})=\left\{
             \begin{array}{lr}
             p_{mix} q_1(\tilde{\theta}^\prime|\tilde{\theta}_{n-1}),\ \text{for}\ \tilde{\theta}^\prime\neq\tilde{\theta}_{n-1} &  \\
             (1-p_{mix}) q_2(\tilde{\varphi}^\prime|\tilde{\varphi}_{n-1}),\ \text{for}\ \tilde{\varphi}^\prime\neq\tilde{\varphi}_{n-1} \\
             0,\ \text{otherwise}  
             \end{array}
\right.
\]
and accepted according to the Metropolis-Hastings acceptance probability with an iteration-specific target
\[
p_{n}(\tilde{\theta},\tilde{\varphi})\propto\sum_{i=1}^m \frac{p(Y|\tilde{\theta},\varphi_0^{(i)})p(\tilde{\theta})}{\tilde{w}_{n-1}^{(i)}}\mathbbm{1}_{\{\tilde{\varphi}=\varphi_0^{(i)}\}},\ \tilde{\theta}\in\Theta,\ \tilde{\varphi}\in\Phi_0.
\]
Here $\tilde{w}_n^{(i)}$ is the estimate of $p(Y|\varphi_0^{(i)})$, $i=1,...,m$, up to a constant, and $\tilde{w}_n=(\tilde{w}_n^{(1)},...,\tilde{w}_n^{(m)})$ is a vector of these estimates at each of the pre-selected auxiliary parameter values $\Phi_0$. We set $\tilde{w}_0^{(i)}=1$, $i=1,..., m$ at the start. As described in \cite{doi:10.1198/016214506000001202} and \cite{doi:10.1080/01621459.2015.1009072}, the estimates are updated by
\begin{equation}
\log(\tilde{w}_n^{(i)})=\log(\tilde{w}_{n-1}^{(i)})+\xi_n(e_{n,i}-m^{-1}),\ i=1,...,m,
\label{E11}
\end{equation}
where $e_{n,i}=1$ if $\tilde{\varphi}_n=\varphi_0^{(i)}$ and $e_{n,i}=0$ otherwise, and $\xi_n=n_0/\max(n_0,n)$ decreases to 0 when $n$ goes to infinity (the shrink magnitude $n_0$ is a user-chosen fixed constant). Note that in this auxiliary chain, when the number of iterations is sufficiently large, we are drawing $(\theta,\varphi)$ from \eqref{E9}. Hence, by checking if the empirical sampling frequency of each $\varphi_0^{(i)}\in \Phi_0$ strongly deviates from $m^{-1}$, we can detect potential nonconvergence of the auxiliary chain.

In the main Markov chain $(\theta_n,\varphi_n)$, $n=1,2,...$ we draw $\varphi^\prime$ from a proposal distribution $q(\varphi^\prime|\varphi)$, and then draw $\theta^\prime$ according to a random measure
\begin{equation}
P_n^\ast(\theta\in\mathcal{B}|Y,\varphi^\prime)=\frac{\sum_{j=1}^{n}\sum_{i=1}^m \tilde{w}_{j-1}^{(i)}\frac{p(Y|\tilde{\theta}_j,\varphi^\prime)}{p(Y|\tilde{\theta}_j,\varphi_0^{(i)})}\mathbbm{1}_{\{\tilde{\theta}_j\in\mathcal{B},\varphi_0^{(i)}=\tilde{\varphi}_j\}}}{\sum_{j=1}^{n}\sum_{i=1}^m \tilde{w}_{j-1}^{(i)}\frac{p(Y|\tilde{\theta}_j,\varphi^\prime)}{p(Y|\tilde{\theta}_j,\varphi_0^{(i)})}\mathbbm{1}_{\{\varphi_0^{(i)}=\tilde{\varphi}_j\}}},
\label{E12}
\end{equation}
where $\mathcal{B}\subset\Theta$ is any Borel set. Given a $\varphi$, the random measure \eqref{E12} is formed via a dynamic importance sampling procedure proposed in \cite{10.2307/3085723} with intention to approximate the unknown distribution $p(\theta|Y,\varphi)$ (see supplementary material for a detailed explanation). For any Borel set $\mathcal{B}\subset\Theta$, we have
\[
\begin{aligned}
	&\frac{1}{n}\sum_{j=1}^{n}\sum_{i=1}^m \tilde{w}_{j-1}^{(i)}\frac{p(Y|\tilde{\theta}_j,\varphi)p(\tilde{\theta}_j)}{p(Y|\tilde{\theta}_j,\varphi_0^{(i)})p(\tilde{\theta}_j)}\mathbbm{1}_{\{\tilde{\theta}_j\in\mathcal{B},\varphi_0^{(i)}=\tilde{\varphi}_j\}} \\
	&\rightarrow \sum_{i=1}^m \int_\mathcal{B} m p(Y|\varphi_0^{(i)}) \frac{p(Y|\theta,\varphi)p(\theta)}{p(Y|\theta,\varphi_0^{(i)})p(\theta)}\frac{1}{m}\frac{p(Y|\theta,\varphi_0^{(i)})p(\theta)}{p(Y|\varphi_0^{(i)})} d\theta \\
	&=m \int_\mathcal{B} p(Y|\theta,\varphi)p(\theta) d\theta,
\end{aligned}
\]
and similarly, the denominator of \eqref{E12} converges to the $m p(Y|\varphi)$. Hence, by Lemma 3.1 of \cite{doi:10.1080/01621459.2015.1009072}, since $\Theta\times\Phi$ is compact, for any Borel set $\mathcal{B}\subset\Theta$ and on any outcome $\omega$ of probability space $\Omega$, we have:
\begin{equation}
\lim_{n\rightarrow\infty}\sup_{\varphi\in\Phi}\left|P_n^\ast(\theta\in\mathcal{B}|Y,\varphi)-\int_\mathcal{B} p(\theta|Y,\varphi)d\theta\right|=0.
\label{E13}
\end{equation}
This implies that the distribution of $\{\theta_n\}$, drawn from \eqref{E12}, converges in distribution to $p(\theta | Y, \varphi)$, and this convergence occurs uniformly over $\Phi$. Note that, the probability measure $P_n^\ast(\theta\in\mathcal{B}|Y,\varphi^\prime)$ is adapted to filtration $\mathcal{G}_n=\sigma(\cup_{j=1}^{n} (\tilde{\theta}_j,\tilde{\varphi}_j,\tilde{w}_j))$ on $(\Omega,\mathcal{F},\mathbb{P})$, and has a Radon-Nikodym derivative with respect to a mixture of Dirac measures determined by $\tilde{\Theta}_n=\cup_{j=1}^{n}\{\tilde{\theta_j}\}$ \citep{doi:10.1198/106186008X386102}, because it is the law of a discrete random variable defined on $\tilde{\Theta}_n$. To achieve stronger convergence results, we will build a continuous probability distribution based on \eqref{E12}.

\begin{algorithm}[!htbp]
\SetAlgoLined
Initialize at starting points $h_0=(\tilde{\theta}_0,\tilde{\varphi}_0)$, $\tilde{w}_0$ and $(\theta_0,\varphi_0)$\;
 For $n=1,..., N$\;
 \begin{enumerate}[(a)]
 \item Auxiliary chain:
  \begin{enumerate}[(1)]
  \item Draw a proposal $(\tilde{\theta}^\prime,\tilde{\varphi}^\prime)$ according to $q(\tilde{\theta}^\prime,\tilde{\varphi}^\prime|\tilde{\theta}_{n-1},\tilde{\varphi}_{n-1})$.
  \item Accept the proposal, and set $(\tilde{\theta}_n,\tilde{\varphi}_n)=(\tilde{\theta}^\prime,\tilde{\varphi}^\prime)$ according to the iteration-specific acceptance probability.
  \item Calculate $\tilde{w}_n^{(i)}$ according to \eqref{E11}, $i=1,... ,m$.
  \end{enumerate}
 \item Main chain:
  \begin{enumerate}[(1)]
   \item Draw a proposal $\varphi^\prime$ according to $q(\varphi^\prime|\varphi_{n-1})$.
   \item Set $\varphi_n=\varphi^\prime$ with probability:
\[
\begin{aligned}
\alpha(\varphi^\prime|\varphi_{n-1})&= \min\left\lbrace 1, \frac{p(\theta^\prime|Y,\varphi^\prime)p(\varphi^\prime|Z)q(\varphi_{n-1}|\varphi^\prime)p(\theta_{n-1}|Y,\varphi_{n-1})}{p(\theta_{n-1}|Y,\varphi_{n-1})p(\varphi_{n-1}|Z)q(\varphi^\prime|\varphi_{n-1})p(\theta^\prime|Y,\varphi^\prime)}\right\rbrace \\
&=\min\left\lbrace 1, \frac{p(\varphi^\prime|Z)q(\varphi_{n-1}|\varphi^\prime)}{p(\varphi_{n-1}|Z)q(\varphi^\prime|\varphi_{n-1})}\right\rbrace.
\end{aligned}
\]
   \item If $\varphi^\prime$ is accepted, draw $\theta^\prime$ according to $P_n^\ast(\theta^\prime|Y,\varphi^\prime)$ defined in \eqref{E12} and set $\theta_n=\theta^\prime$.
   \item Otherwise if $\varphi^\prime$ is rejected, set $(\theta_n,\varphi_n)=(\theta_{n-1},\varphi_{n-1})$.
\end{enumerate}
 \end{enumerate}
 End For\;
 \caption{Naive Stochastic Approximation Cut Algorithm}
  \label{Al1}
\end{algorithm}

\subsection{Simple function approximation cut distribution}
\label{S2.2}
The convergence in distribution \eqref{E13} presented in the Naive Stochastic Approximation Cut Algorithm is not sufficiently strong to infer a law of large numbers or ergodicity of the drawn samples. We will show that these properties can be satisfied by targeting an approximation of the density function $p(\theta|Y,\varphi)$.

We adopt a density function approximation technique which uses a simple function as the basis. The use of a simple function to approximate a density function has been discussed previously \citep{Fu2002, doi:10.1080/03610910802657904}, but here we use a different partition of the support of the function, determined by rounding to a user-specified number of decimal places. The general theory is presented in the supplementary material. 

Given the $d$ dimensional compact set $\Theta$ and user-specified number of decimal places $\kappa$, we partition $\Theta$ in terms of (partial) hypercubes $\Theta_r$ whose centres $\theta_r$ are the rounded elements of $\Theta$ to $\kappa$ decimal places,
\begin{equation}
\Theta_r=\Theta\cap\{\theta:\left\|\theta-\theta_r\right\|_\infty\leq 5\times10^{-\kappa-1}\},\ r=1,...,R_\kappa,
\label{E15}
\end{equation}
where $R_\kappa$ is the total number of rounded elements. The boundary set $\bar{\Theta}_\kappa$, which has Lebesgue measure 0, is:
\[
\bar{\Theta}_\kappa = \Theta\cap\left(\bigcup_{r=1}^{R_\kappa} \{\theta:\left\|\theta-\theta_r\right\|_\infty= 5\times10^{-\kappa-1}\}\right).
\]
Using this partition, we are able to build a simple function density that approximates $p(\theta|Y,\varphi)$:
\[
p^{(\kappa)}(\theta|Y,\varphi) = \sum_{r=1}^{R_\kappa}\frac{1}{\mu(\Theta_r)}\int_{\Theta_r}p(\theta^\prime|Y,\varphi)d\theta^\prime \mathbbm{1}_{\{\theta\in\Theta_r\}},
\]
and let $P^{(\kappa)}$ be the corresponding probability measure on $\Theta$. The \textbf{simple function approximation cut distribution} is then formed by replacing the exact conditional distribution with this approximation
\[
p_{cut}^{(\kappa)}(\theta,\varphi)=p^{(\kappa)}(\theta|Y,\varphi)p(\varphi|Z).
\]
Let $P_{cut}^{(\kappa)}$ be the corresponding probability measure on $\Theta \times\Phi$. 

Given the general theory presented in the supplementary material, we have
\begin{equation}
    p^{(\kappa)}(\theta|Y,\varphi)\xrightarrow{\text{a.s.}}p(\theta|Y,\varphi),\ \ \ \text{as}\ \kappa\rightarrow\infty.
\label{E18}
\end{equation}
The rate of convergence is tractable if we further assume density $p(\theta|Y,\varphi)$ is continuously differentiable.

\begin{corollary}
\label{coro1}
If density function $p(\theta|Y,\varphi)$ is continuously differentiable, there exists a set $\mathscr{E}\subset\Theta$ with $\mu(\mathscr{E})=\mu(\Theta)$ such that the local convergence holds:
\[
|p^{(\kappa)}(\theta|Y,\varphi)-p(\theta|Y,\varphi)|\leq (\varepsilon(\theta,\kappa)+\left\|\nabla p(\theta|Y,\varphi)\right\|_2)\frac{\sqrt{d}}{10^\kappa},\ \forall \theta\in\mathscr{E},
\] 
where $\varepsilon(\theta,\kappa)\rightarrow 0$ as $\kappa\rightarrow\infty$.

In addition, the global convergence holds:
\[
\sup_{\theta\in\mathscr{E}}|p^{(\kappa)}(\theta|Y,\varphi)-p(\theta|Y,\varphi)|\leq \sup_{\theta\in\Theta}\left\|\nabla p(\theta|Y,\varphi)\right\|_2 \frac{\sqrt{d}}{10^\kappa}.
\]
\end{corollary}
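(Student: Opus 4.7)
The key geometric fact is that each cell $\Theta_r$ from \eqref{E15} has Euclidean diameter at most $\sqrt{d}/10^\kappa$: any two points of $\Theta_r$ lie in the closed $\ell_\infty$-ball of radius $5\times 10^{-\kappa-1}$ around $\theta_r$, so their $\ell_\infty$-distance is at most $10^{-\kappa}$ and hence their Euclidean distance is at most $\sqrt{d}\,10^{-\kappa}$. Define the exceptional set $\mathscr{E}:=\Theta\setminus \bigcup_{\kappa=1}^\infty \bar{\Theta}_\kappa$; since each $\bar{\Theta}_\kappa$ has Lebesgue measure zero, $\mu(\mathscr{E})=\mu(\Theta)$, and for every $\theta\in\mathscr{E}$ and every $\kappa$ there is a unique cell $\Theta_r(\theta)$ containing $\theta$ in its interior. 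By the definition of $p^{(\kappa)}$,
\[
p^{(\kappa)}(\theta|Y,\varphi)-p(\theta|Y,\varphi)=\frac{1}{\mu(\Theta_r(\theta))}\int_{\Theta_r(\theta)}\bigl[p(\theta^\prime|Y,\varphi)-p(\theta|Y,\varphi)\bigr]\,d\theta^\prime,
\]
so the entire corollary reduces to controlling this averaged increment on a set of diameter $\sqrt{d}/10^\kappa$.

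For the local bound I would invoke differentiability of $p(\cdot|Y,\varphi)$ at $\theta$ to write $p(\theta^\prime|Y,\varphi)-p(\theta|Y,\varphi)=\nabla p(\theta|Y,\varphi)^\top(\theta^\prime-\theta)+R(\theta,\theta^\prime)$ with $R(\theta,\theta^\prime)/\|\theta^\prime-\theta\|_2\to 0$ as $\theta^\prime\to\theta$. Cauchy--Schwarz yields $|\nabla p(\theta|Y,\varphi)^\top(\theta^\prime-\theta)|\le \|\nabla p(\theta|Y,\varphi)\|_2\cdot\sqrt{d}/10^\kappa$, and setting $\varepsilon(\theta,\kappa):=\sup_{\theta^\prime\in\Theta_r(\theta)}|R(\theta,\theta^\prime)|/\|\theta^\prime-\theta\|_2$ handles the remainder; this $\varepsilon(\theta,\kappa)$ tends to $0$ as $\kappa\to\infty$ because $\Theta_r(\theta)$ shrinks to $\{\theta\}$. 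Averaging both bounds over $\Theta_r(\theta)$ and combining gives the stated local estimate.

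For the global bound I would instead apply the mean value theorem directly: for any $\theta^\prime\in\Theta_r(\theta)$ there is $\xi$ on the segment from $\theta$ to $\theta^\prime$ such that $p(\theta^\prime|Y,\varphi)-p(\theta|Y,\varphi)=\nabla p(\xi|Y,\varphi)^\top(\theta^\prime-\theta)$, which is at most $\sup_{\xi\in\Theta}\|\nabla p(\xi|Y,\varphi)\|_2\cdot\sqrt{d}/10^\kappa$ in absolute value; the supremum is finite because $\nabla p$ is continuous on the compact set $\Theta$ (Assumption \ref{assu1}). Averaging over $\theta^\prime$ and then taking $\sup_{\theta\in\mathscr{E}}$ produces the uniform bound.

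The arithmetic is routine; the only real technicality I expect is ensuring the MVT segment from $\theta$ to $\theta^\prime$ lies in a region where $\nabla p(\cdot|Y,\varphi)$ is defined and bounded. This is automatic when $\Theta$ is convex, and more generally follows from interpreting ``continuously differentiable'' as extending to an open neighbourhood of the compact $\Theta$, which is the standard reading under Assumption \ref{assu1}. A secondary bookkeeping point is that $\mathscr{E}$ must be chosen \emph{once} so that the local expansion is simultaneously valid at every $\kappa$; taking a countable intersection/union as above handles this cleanly without further work.
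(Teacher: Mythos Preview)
Your proposal is correct and follows essentially the same route as the paper: diameter bound $\sqrt{d}/10^\kappa$ for each cell, then MVT/Cauchy--Schwarz to control the increment, with the supremum of $\|\nabla p\|_2$ (finite on compact $\Theta$) giving the global bound. The only cosmetic difference is that the paper first passes through the oscillation bound $|p^{(\kappa)}(\theta)-p(\theta)|\le \max_{\Theta_r}p-\min_{\Theta_r}p$ (via the integral mean value theorem) and then defines $\varepsilon(\theta,\kappa):=\sup_{a,b\in\Theta_r(\theta)}\|\nabla p(a)-\nabla p(b)\|_2$ as the gradient oscillation on the cell, whereas you work directly from the averaging identity and define $\varepsilon(\theta,\kappa)$ through the Taylor remainder; both choices satisfy $\varepsilon(\theta,\kappa)\to 0$ precisely because $\nabla p$ is \emph{continuous}, so the two formulations are equivalent.
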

\begin{proof}
See the general theory in the supplementary material.
\end{proof}

\subsection{Stochastic Approximation Cut Algorithm}
\label{S2.3}
We now refine the naive Stochastic Approximation Cut Algorithm by replacing in the main chain the proposal distribution $P_n^\ast$, which concentrates on the discrete set $\tilde{\Theta}_n$, by a distribution, with support on the compact set $\Theta$, that we will show converges almost surely to $P^{(\kappa)}$. 

Let $\mathscr{W}_n(\varphi)=(W_n(\Theta_1|Y,\varphi),...,W_n(\Theta_{R_\kappa}|Y,\varphi))$ be a random weight process based on the probability of the original proposal distribution $P_n^\ast$ taking a value in each partition component $\Theta_r$ as
\begin{equation}
W_n(\Theta_r|Y,\varphi)=\frac{P_n^\ast(\theta\in\Theta_r|Y,\varphi)+(n R_\kappa)^{-1}}{1+n^{-1}},
\label{E19}
\end{equation}
where $r=1,...,R_\kappa$. Note that $W_n(\Theta_r|Y,\varphi)$ is adapted to the auxiliary filtration $\mathcal{G}_n$. By adding a $(n R_\kappa)^{-1}$, each $W_n(\Theta_r|Y,\varphi)$, $r=1,...,R_\kappa$, is strictly positive and yet this modification does not affect the limit since $(n R_\kappa)^{-1}\rightarrow 0$. That is, on any outcome $\omega$ of probability space $\Omega$, we have
\begin{equation}
\lim_{n\rightarrow\infty}\sup_{\varphi\in\Phi;1\leq r\leq R_\kappa} \left|W_n(\Theta_r|Y,\varphi)-\int_{\Theta_r}p(\theta|Y,\varphi)d\theta\right|=0.
\label{E20}
\end{equation}

We now define the random measure process $P_n^{(\kappa)}$ that replaces $P_n^\ast$ used in the naive Stochastic Approximation Cut Algorithm. For any Borel set $\mathcal{B}$,
\begin{equation}
P_n^{(\kappa)}(\theta\in\mathcal{B}|Y,\varphi)=\int_{\mathcal{B}}\sum_{r=1}^{R_\kappa} \frac{1}{\mu(\Theta_r)} W_n(\Theta_r|Y,\varphi)\mathbbm{1}_{\{\theta\in\Theta_r\}}d\theta.
\label{E21}
\end{equation}
Clearly $P_n^{(\kappa)}(\theta\in\Theta|Y,\varphi)=1$ so $P_n^{(\kappa)}$ is a valid probability measure on $\Theta$. Additionally, since $\mathscr{W}_n(\varphi)$ is adapted to filtration $\mathcal{G}_n$, $P_n^{(\kappa)}$ is adapted to filtration $\mathcal{G}_n$. The Radon-Nikodym derivative of $P_n^{(\kappa)}$ with respect to the Lebesgue measure $\mu$ on $\Theta$ is
\begin{equation}
p_n^{(\kappa)}(\theta|Y,\varphi)=\sum_{r=1}^{R_\kappa} \frac{1}{\mu(\Theta_r)} W_n(\Theta_r|Y,\varphi)\mathbbm{1}_{\{\theta\in\Theta_r\}}.
\label{E22}
\end{equation}
This density is not continuous, but it is bounded on $\Theta $. In addition, since $\Theta$ is the support of $p(\theta|Y,\varphi)$ and $\mathscr{W}_n(\varphi)$ is strictly positive, the support of $p_n^{(\kappa)}$ is $\Theta $ for all $\varphi\in\Phi$ as well. 

Using $P_n^{(\kappa)}$ as the proposal distribution has the advantage that $p_n^{(\kappa)}$ converges almost surely to $p^{(\kappa)}$, in contrast to the convergence in distribution for the naive algorithm in \eqref{E13}.
\begin{lemma}
\label{lem1}
Given Assumption \ref{assu1}, on any outcome $\omega$ of probability space $\Omega$, we have:
\[
p_n^{(\kappa)}(\theta|Y,\varphi)\xrightarrow{\text{a.s.}}p^{(\kappa)}(\theta|Y,\varphi),
\]
and this convergence is uniform over $(\Theta\setminus\bar{\Theta}_\kappa) \times\Phi$.
\end{lemma}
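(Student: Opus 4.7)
The plan is to reduce the claim directly to the uniform weight convergence \eqref{E20}. The key structural observation is that both $p^{(\kappa)}$ and $p_n^{(\kappa)}$ are simple functions on the \emph{same} partition $\{\Theta_r\}_{r=1}^{R_\kappa}$, with cell values $\mu(\Theta_r)^{-1}\int_{\Theta_r} p(\theta'|Y,\varphi)\,d\theta'$ and $\mu(\Theta_r)^{-1} W_n(\Theta_r|Y,\varphi)$ respectively, so their difference at a single $\theta$ depends only on one cell's weight error.

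First I would fix an outcome $\omega\in\Omega$ and argue pathwise. For any $\theta\in\Theta\setminus\bar{\Theta}_\kappa$, excising the face set $\bar{\Theta}_\kappa$ ensures that $\theta$ lies in exactly one cell $\Theta_{r(\theta)}$, so the indicators $\mathbbm{1}_{\{\theta\in\Theta_r\}}$ with $r\neq r(\theta)$ all vanish, and
\[
\bigl|p_n^{(\kappa)}(\theta|Y,\varphi)-p^{(\kappa)}(\theta|Y,\varphi)\bigr| = \frac{1}{\mu(\Theta_{r(\theta)})}\left|W_n(\Theta_{r(\theta)}|Y,\varphi)-\int_{\Theta_{r(\theta)}}p(\theta'|Y,\varphi)\,d\theta'\right|.
\]

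Next I would take the supremum over $(\theta,\varphi)\in(\Theta\setminus\bar{\Theta}_\kappa)\times\Phi$. By Assumption \ref{assu1}(a) the set $\Theta$ is compact, so $R_\kappa$ is finite and every non-empty cell $\Theta_r$ has strictly positive Lebesgue measure; in particular $c_\kappa:=\min_{1\leq r\leq R_\kappa}\mu(\Theta_r)>0$. Consequently
\[
\sup_{\theta\in\Theta\setminus\bar{\Theta}_\kappa,\,\varphi\in\Phi}\bigl|p_n^{(\kappa)}(\theta|Y,\varphi)-p^{(\kappa)}(\theta|Y,\varphi)\bigr| \leq \frac{1}{c_\kappa}\sup_{\varphi\in\Phi,\,1\leq r\leq R_\kappa}\left|W_n(\Theta_r|Y,\varphi)-\int_{\Theta_r}p(\theta'|Y,\varphi)\,d\theta'\right|,
\]
and the right-hand side tends to zero on every $\omega$ by \eqref{E20}. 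Since this holds on all of $\Omega$ (and in particular almost surely), the stated uniform, almost-sure convergence of $p_n^{(\kappa)}$ to $p^{(\kappa)}$ on $(\Theta\setminus\bar{\Theta}_\kappa)\times\Phi$ follows.

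The main ``obstacle'' is essentially bookkeeping rather than analysis: one must justify excluding $\bar{\Theta}_\kappa$ so that the indicator sum collapses without double counting on shared cell faces, and verify that $c_\kappa>0$, which is only mildly delicate for partial hypercubes meeting $\partial\Theta$ but is guaranteed by finiteness of $R_\kappa$ together with each non-empty $\Theta_r$ having positive Lebesgue measure. Beyond these points, the proof is a one-line reduction to \eqref{E20} and does not require any further probabilistic input.
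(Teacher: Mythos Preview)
Your proposal is correct and matches the paper's own proof essentially line for line: the paper also observes that on $\Theta\setminus\bar{\Theta}_\kappa$ only one indicator survives, bounds the supremum of the density difference by $\sup_{\varphi,r}\mu(\Theta_r)^{-1}\bigl|W_n(\Theta_r|Y,\varphi)-\int_{\Theta_r}p(\theta'|Y,\varphi)\,d\theta'\bigr|$, and invokes \eqref{E20}. The only cosmetic difference is that you factor out the uniform constant $c_\kappa^{-1}=\max_r \mu(\Theta_r)^{-1}$ while the paper leaves $\mu(\Theta_r)^{-1}$ inside the supremum; both versions are equivalent since $R_\kappa$ is finite.
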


Note that the convergence is to $p^{(\kappa)}(\theta|Y,\varphi)$ rather than $p(\theta|Y,\varphi)$, but we will show in Corollary \ref{coro2} that this bias reduces geometrically as the precision parameter $\kappa$ increases.

The complete Stochastic Approximation Cut Algorithm (SACut) is shown in Algorithm \ref{Al2}. The key idea is that we propose samples for $\theta$ from a density $p_n^{(\kappa)}(\theta|Y,\varphi)$, which approximates $p(\theta|Y,\varphi)$ and from which we can draw samples, but we accept these proposals according to $p^{(\kappa)}(\theta|Y,\varphi)$, which then cancels. This results in the acceptance probability being determined only by the proposal distribution for $\varphi$; the proposal distribution for $\theta$ is not involved. Indeed, the acceptance probability is the same as the partial Gibbs sampler that we will discuss in Section \ref{S3.1.1}. 

\begin{algorithm}[!htbp]
\SetAlgoLined
Initialize at starting points $h_0=(\tilde{\theta}_0,\tilde{\varphi}_0)$, $\tilde{w}_0$ and $(\theta_0,\varphi_0)$\;
 For $n=1,..., N$\;
 \begin{enumerate}[(a)]
 \item Auxiliary chain:
  \begin{enumerate}[(1)]
  \item Draw a proposal $(\tilde{\theta}^\prime,\tilde{\varphi}^\prime)$ according to $q(\tilde{\theta}^\prime,\tilde{\varphi}^\prime|\tilde{\theta}_{n-1},\tilde{\varphi}_{n-1})$.
  \item Accept the proposal, and set $(\tilde{\theta}_n,\tilde{\varphi}_n)=(\tilde{\theta}^\prime,\tilde{\varphi}^\prime)$ according to the iteration-specific acceptance probability.
  \item Calculate $\tilde{w}_n^{(i)}$ according to \eqref{E11}, $i=1,... ,m$.
  \end{enumerate}
 \item Main chain:
  \begin{enumerate}[(1)]
   \item Draw a proposal $\varphi^\prime$ according to $q(\varphi^\prime|\varphi_{n-1})$.
   \item Set $\varphi_n=\varphi^\prime$ with probability:
\[
\begin{aligned}
\alpha(\varphi^\prime|\varphi_{n-1})&= \min\left\lbrace 1, \frac{p^{(\kappa)}(\theta^\prime|Y,\varphi^\prime)p(\varphi^\prime|Z)q(\varphi_{n-1}|\varphi^\prime)p^{(\kappa)}(\theta_{n-1}|Y,\varphi_{n-1})}{p^{(\kappa)}(\theta_{n-1}|Y,\varphi_{n-1})p(\varphi_{n-1}|Z)q(\varphi^\prime|\varphi_{n-1})p^{(\kappa)}(\theta^\prime|Y,\varphi^\prime)}\right\rbrace \\
&=\min\left\lbrace 1, \frac{p(\varphi^\prime|Z)q(\varphi_{n-1}|\varphi^\prime)}{p(\varphi_{n-1}|Z)q(\varphi^\prime|\varphi_{n-1})}\right\rbrace.
\end{aligned}
\]
   \item If $\varphi^\prime$ is accepted, calculate $W_n(\Theta_r|Y,\varphi^\prime)$ defined in \eqref{E19}, $r=1,..., R_\kappa$. Draw a proposal $\theta^\prime$ according to $p_n^{(\kappa)}(\theta^\prime|Y,\varphi^\prime)$ defined in \eqref{E21} and set $\theta_n=\theta^\prime$.
   \item Otherwise if $\varphi^\prime$ is rejected, set  $(\theta_n,\varphi_n)=(\theta_{n-1},\varphi_{n-1})$.
\end{enumerate}
 \end{enumerate}
 End For\;
 \caption{Stochastic Approximation Cut Algorithm (SACut)}
 \label{Al2}
\end{algorithm}

\subsection{Parallelization and Simplification of Computation}
The  main computational bottleneck of the Stochastic Approximation Cut Algorithm is the updating and storage of the cumulative set of auxiliary variable values $\tilde{\Theta}_n=\cup_{j=1}^n\{\tilde{\theta}_j\}$. Since we draw a new $\varphi^\prime$ at each iteration, in order to calculate all possible probabilities defined by \eqref{E12} and \eqref{E19}, the density $p(Y|\tilde{\theta},\varphi^\prime)$ must be calculated $|\tilde{\Theta}_n|$ times. This is equivalent to running $|\tilde{\Theta}_n|$ internal iterations at each step of external iteration for the existing approximate approaches proposed in \cite{Plummer2015}. Note that $\tilde{\Theta}_n$ is solely generated from the auxiliary chain so $|\tilde{\Theta}_n|$ is not affected by the precision parameter $\kappa$. If the calculation of this density is computationally expensive, the time to perform each update of the chain will become prohibitive when $|\tilde{\Theta}_n|$ is large. However, the calculation of $p(Y|\tilde{\theta},\varphi^\prime)$ for different values of $\tilde{\theta}$ is embarrassingly parallel so can be evaluated in parallel whenever multiple computer cores are available, enabling a considerable speed up.

The speed of the computation can be further improved by reducing the size of $\tilde{\Theta}_n$. Given the precision parameter $\kappa$, we round all elements from $\tilde{\Theta}_n$ to their $\kappa$ decimal places and let $\tilde{\Theta}_n^{(\kappa)}$ be the set that contains these rounded elements. At each iteration, the number of calculations of density $p(Y|\tilde{\theta},\varphi^\prime)$ is equal to the number of $d$-orthotopes that auxiliary chain $\{\tilde{\theta}_j\}_{j=1}^n$ has visited up to iteration $
n$ and by \eqref{E13} we know that the distribution of auxiliary samples of $\tilde{\theta}$ converges to the true distribution $p(\theta|Y,\varphi)$. Hence, the computational speed is mainly determined by the precision parameter $\kappa$ and the target distribution $p(\theta|Y,\varphi)$. In particular, for any fixed $\kappa$ and a sufficiently long auxiliary chain the computational cost is upper bounded by the case of uniform distribution since it equally distributes over the space $\Theta$.
\begin{theorem}
Given an arbitrary $d$ dimensional compact parameter space $\Theta$ and a precision parameter $\kappa$ and suppose that the auxiliary chain has converged before we start collecting auxiliary variable $\tilde{\theta}$, for any fixed number of iteration $n$. Then the expected number of $d$-orthotopes visited $\mathbb{E}\left(|\tilde{\Theta}_n^{(\kappa)}|\right)$ is maximized when the target distribution is uniform distribution.
\end{theorem}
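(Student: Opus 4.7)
The plan is to reduce this to a finite-dimensional optimization over orthotope probabilities and then apply Jensen's inequality to a convex function. First, for each orthotope $\Theta_r$ defined in \eqref{E15}, let $p_r=\int_{\Theta_r}p(\theta\mid Y,\varphi)\,d\theta$ be the probability assigned to $\Theta_r$ by the target density; since the boundary set $\bar{\Theta}_\kappa$ has Lebesgue measure zero, $(p_1,\ldots,p_{R_\kappa})$ lies on the probability simplex. Once the auxiliary chain has converged, each sample $\tilde{\theta}_j$ has the target as its marginal, and I would treat the samples as i.i.d.\ draws from the target (which is the idealisation implicit in the theorem's hypothesis). Under this interpretation, the probability that $\Theta_r$ is \emph{never} visited in $n$ iterations is exactly $(1-p_r)^n$.

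Writing $|\tilde{\Theta}_n^{(\kappa)}|=\sum_{r=1}^{R_\kappa}\mathbbm{1}\{\tilde{\theta}_j\in\Theta_r\ \text{for some}\ 1\leq j\leq n\}$ and taking expectations yields
\[
\mathbb{E}\bigl(|\tilde{\Theta}_n^{(\kappa)}|\bigr)=R_\kappa-\sum_{r=1}^{R_\kappa}(1-p_r)^n,
\]
so maximising the left-hand side is equivalent to minimising $\sum_{r=1}^{R_\kappa}(1-p_r)^n$ subject to $\sum_r p_r=1$ and $p_r\geq 0$. I would then apply Jensen's inequality to $\phi(x)=(1-x)^n$, which is convex on $[0,1]$ for $n\geq 1$ and strictly convex for $n\geq 2$: with uniform weights $1/R_\kappa$,
\[
\frac{1}{R_\kappa}\sum_{r=1}^{R_\kappa}\phi(p_r)\geq\phi\Bigl(\tfrac{1}{R_\kappa}\sum_{r=1}^{R_\kappa}p_r\Bigr)=\phi(1/R_\kappa),
\]
with equality (for $n\geq 2$) if and only if $p_r=1/R_\kappa$ for every $r$. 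When the orthotopes have a common Lebesgue measure (the generic case, since the partial orthotopes touching $\partial\Theta$ form a negligible fraction of the partition), $p_r=1/R_\kappa$ is exactly the mass assigned by the uniform distribution on $\Theta$, which identifies the maximiser as claimed. A brief supplementary comment should handle the boundary-orthotope case by noting that the same convexity argument still picks out the distribution whose mass is proportional to $\mu(\Theta_r)$, i.e.\ the uniform.

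The main obstacle is justifying the reduction to independent sampling. For a genuine MCMC chain, even in stationarity the events $\{\tilde{\theta}_j\in\Theta_r\}$ across $j$ are not independent, so the identity $\mathbb{P}(\Theta_r\ \text{never visited})=(1-p_r)^n$ is exact only in the i.i.d.\ case; for the SAMC chain the correct expression involves the transition kernel restricted to $\Theta\setminus\Theta_r$ and depends on mixing. The cleanest way forward is to read the theorem in the post-convergence i.i.d.\ idealisation that is already implicit in the earlier use of the limiting distribution of $\tilde{\theta}$ for computational-cost heuristics. A more refined argument would invoke uniform ergodicity of SAMC on the finite partition $\{\Theta_r\}$ to show that, up to a lower-order term that does not affect the argmax, the extremiser for the dependent chain coincides with that for the i.i.d.\ surrogate; I would relegate this refinement to the supplementary material rather than let it obscure the Jensen step, which is the real content of the theorem.
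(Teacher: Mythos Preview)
Your argument is correct and arrives at the same objective $R_\kappa-\sum_{r}(1-p_r)^n$ as the paper, under the same post-convergence i.i.d.\ idealisation (the paper also writes $(1-W_\infty(\Theta_r))^n$ without further justification, so your caveat about chain dependence is a refinement, not a gap). The difference is in the optimisation step: the paper sets up a Lagrangian, solves the first-order conditions to get $p_r=1/R_\kappa$, and checks that the Hessian is negative definite; you instead invoke Jensen's inequality for the convex map $x\mapsto(1-x)^n$. Your route is shorter and immediately yields the equality case for $n\geq 2$, whereas the Lagrange approach makes the second-order verification explicit. The paper sidesteps the boundary-orthotope issue by assuming an equal partition at the outset; your observation that the convexity argument actually selects mass proportional to $\mu(\Theta_r)$, i.e.\ the uniform distribution, is a cleaner way to close that gap.
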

\begin{proof}
See supplementary material (Online Resource 1).
\end{proof}
For example, given a $d$ dimensional parameter space $\Theta=[0-5\times10^{-\kappa-1},1+5\times10^{-\kappa-1}]^d$ and its partition $\Theta_r$, $r=1,...,11^{d\kappa}$, we consider the uniform distribution as the target distribution. Assume the auxiliary chain has converged, the expectation of $|\tilde{\Theta}_n^{(\kappa)}|$ is
\[
\mathbb{E}\left(|\tilde{\Theta}_n^{(\kappa)}|\right) = 11^{d\kappa}-\frac{\left(11^{d\kappa}-1\right)^n}{11^{d\kappa(n-1)}}.
\]
In the case of $d=1$, Figure \ref{F4} compares the number of orthotopes visited between the uniform distribution and truncated normal distribution when the standard deviation is 0.1 and 0.05. It shows that larger precision parameter $\kappa$ means more evaluations of $p(Y|\tilde{\theta},\varphi^\prime)$ are required. Hence, a wise choice of a small $\kappa$ can significantly reduce computation time. 

While small $\kappa$ means a loss of precision since local variations of original target distribution are smoothed by rounding the value of its samples, in most applied settings only a small number of significant figures are meaningful, and so the ability to trade-off the precision and computational speed is appealing. Comparing  short preliminary run of chains for different candidates of $\kappa$ may be useful when a suitable choice of $\kappa$ is unclear. We will discuss this in Section \ref{Sec4_1}.

\begin{figure}[tp] 
\centering 
\includegraphics[width=\textwidth]{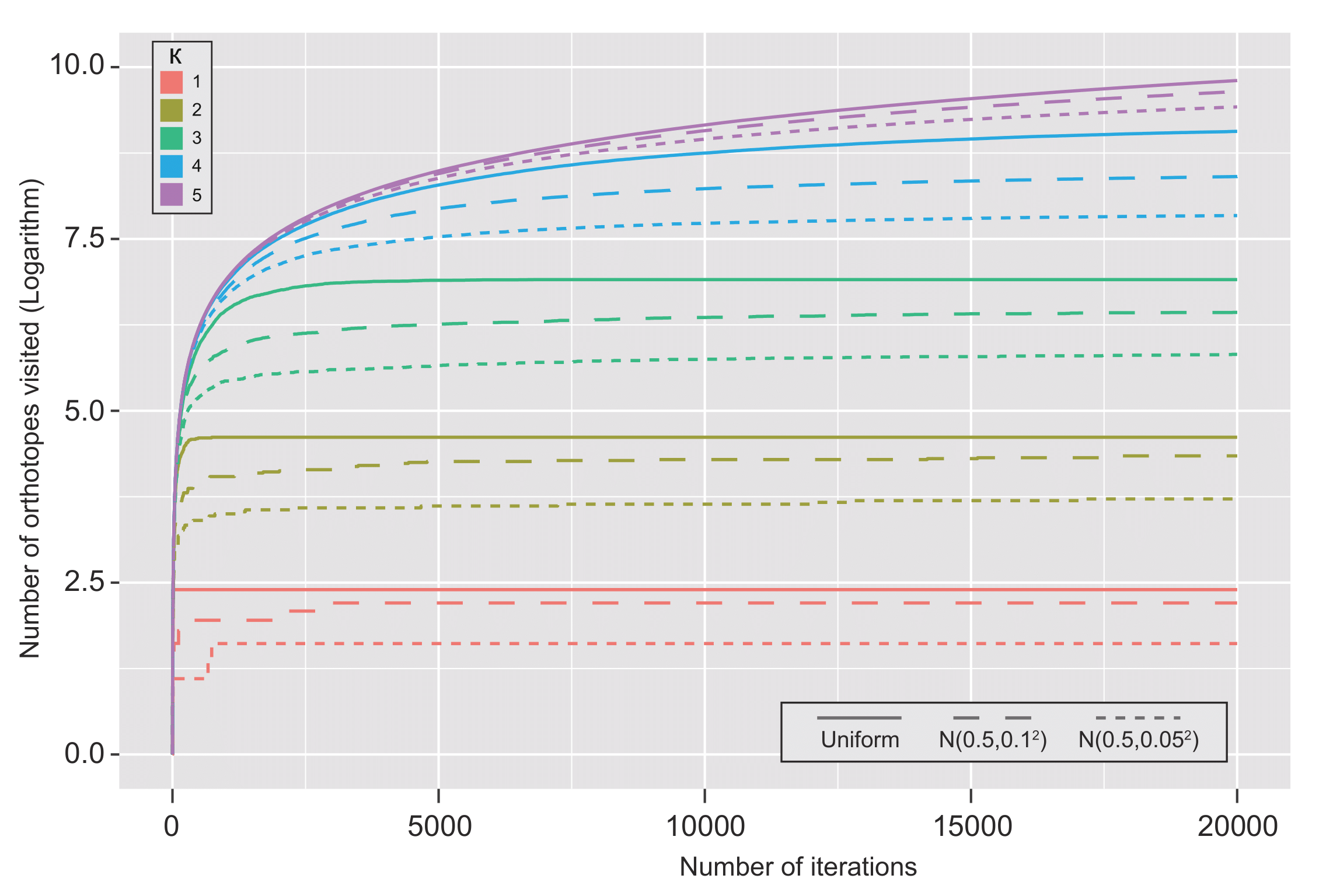}
\caption[Relationship between the number of orthotopes visited and the number of iterations]{Relationship between the number of orthotopes visited and the number of iterations when precision parameter $\kappa=1,2,3,4,5$. Separate Monte Carlo simulations were conducted for uniform distribution and truncated normal distribution with standard deviation 0.1 and 0.05. }
\label{F4}
\end{figure}

\section{Convergence Properties}
\label{sec:convergence}
In this section, we study the convergence properties of samples drawn by the Stochastic Approximation Cut Algorithm. We establish a weak law of large numbers with respect to the simple function approximation cut distribution $P_{cut}^{(\kappa)}$, under some regularity conditions, by proving that the conditions required by Theorem 3.2 in \cite{doi:10.1080/01621459.2015.1009072} are satisfied. We then prove that the bias with respect to $P_{cut}$ can be reduced geometrically by increasing the precision parameter $\kappa$. To aid exposition of the convergence properties, it is necessary to first introduce two simpler but infeasible alternative algorithms. Then we prove the convergence of the algorithm.

The framework of our proofs follow \cite{doi:10.1080/01621459.2015.1009072}. However, adjustments are made for two key differences. Firstly, the parameter of interest here has two components, instead of just one, and we require completely different proposal distributions to those in \cite{doi:10.1080/01621459.2015.1009072}: the proposal distribution of $\theta$ involves an auxiliary chain and simple function approximation, and the proposal distribution of $\varphi$ is a standard MCMC algorithm. Secondly, the parameter drawn by \eqref{E22} here is retained, rather than being discarded as in \cite{doi:10.1080/01621459.2015.1009072}. This means the distributions involved here are different and more complicated.

\subsection{Infeasible Alternative Algorithms}
\begin{definition}
Given a signed measure $\mathcal{M}$ defined on a set $E$, and a Borel set $\mathcal{B}\subset E$, define the total variation norm of $\mathcal{M}$  as
\[
\left\|\mathcal{M}(\cdot)\right\|_{TV}=\sup_{\mathcal{B}\subset E} \left|\mathcal{M}(\mathcal{B})\right|.
\]
\end{definition}

\subsubsection{A Partial Gibbs Sampler}
\label{S3.1.1}

The most straightforward algorithm that draws samples from $p_{cut}^{(\kappa)}(\theta,\varphi)$ is a standard partial Gibbs sampler, which draws proposals $\theta^\prime$ from $p^{(\kappa)}(\theta^\prime|Y,\varphi^\prime)$, given a $\varphi^\prime$ drawn from a proposal distribution $q(\varphi^\prime|\varphi_{n-1})$. The transition kernel is
\[
\resizebox{\hsize}{!}{$\begin{aligned}
&\textbf{u}^{(1)}((\theta_n,\varphi_n)|(\theta_{n-1},\varphi_{n-1})) \\
&=\alpha(\varphi_n|\varphi_{n-1})p^{(\kappa)}(\theta_n|Y,\varphi_n)q(\varphi_n|\varphi_{n-1}) \\
&\ +\left(1-\int_{\Theta \times\Phi}\alpha(\varphi|\varphi_{n-1})p^{(\kappa)}(\theta|Y,\varphi)q(\varphi|\varphi_{n-1})d\theta d\varphi\right)\delta\left((\theta_n,\varphi_n)-(\theta_{n-1},\varphi_{n-1})\right) \\
&=\alpha(\varphi_n|\varphi_{n-1})p^{(\kappa)}(\theta_n|Y,\varphi_n)q(\varphi_n|\varphi_{n-1}) \\
&\ +\left(1-\int_\Phi \alpha(\varphi|\varphi_{n-1})q(\varphi|\varphi_{n-1})d\varphi\right)\delta\left((\theta_n,\varphi_n)-(\theta_{n-1},\varphi_{n-1})\right),
\end{aligned}$}
\]
where $\delta$ is the multivariate Dirac delta function and
\[
\alpha(\varphi_n|\varphi_{n-1})=\min\left\lbrace 1, \frac{p(\varphi_n|Z)q(\varphi_{n-1}|\varphi_n)}{p(\varphi_{n-1}|Z)q(\varphi_n|\varphi_{n-1})}\right\rbrace.
\]
This transition kernel is Markovian and admits $p_{cut}^{(\kappa)}$ as its stationary distribution, provided a proper proposal distribution $q(\varphi_n|\varphi_{n-1})$ is used. We write $\textbf{U}^{(1)}$ for the corresponding probability measure.

Let $\textbf{u}^{(s)}$ denote the s-step transition kernel and write $\textbf{U}^{(s)}$ for the corresponding probability measure. By \cite{meyn_tweedie_glynn_2009}, we have ergodicity on $\Theta \times\Phi$,
\[
\lim_{s\rightarrow\infty} \left\|\textbf{U}^{(s)}(\cdot)-P_{cut}^{(\kappa)}(\cdot)\right\|_{TV}=0,
\]
and for any bounded function $f$ defined on $\Theta \times\Phi$, we have a strong law of large numbers
\[
\frac{1}{N}\sum_{n=1}^N f(\theta_n,\varphi_n)\xrightarrow{\text{a.s.}} \int_{\Theta \times\Phi} f(\theta,\varphi)P_{cut}^{(\kappa)}(d\theta,d\varphi).
\]

Note, however, that this algorithm is infeasible because $p^{(\kappa)}(\theta|Y,\varphi)$ is intractable, since $p(\theta|Y,\varphi)$ is intractable, and so we cannot directly draw proposals for $\theta$.

\subsubsection{An Adaptive Metropolis-Hastings Sampler}

An adaptive Metropolis-Hastings sampler can be built by replacing $p^{(\kappa)}$ in the calculation of acceptance probability of the Stochastic Approximation Cut Algorithm by its approximation $p_n^{(\kappa)}$, which is the exact proposal distribution for $\theta$ at the $n^{th}$ step. The acceptance probability is determined by both $\theta$ and $\varphi$,
\[
\alpha_n((\theta^\prime,\varphi^\prime)|(\theta_{n-1},\varphi_{n-1}))=\min\left\lbrace 1, \frac{p^{(\kappa)}(\theta^\prime|Y,\varphi^\prime)p(\varphi^\prime|Z)q(\varphi_{n-1}|\varphi^\prime)p_n^{(\kappa)}(\theta_{n-1}|Y,\varphi_{n-1})}{p^{(\kappa)}(\theta_{n-1}|Y,\varphi_{n-1})p(\varphi_{n-1}|Z)q(\varphi^\prime|\varphi_{n-1})p_n^{(\kappa)}(\theta^\prime|Y,\varphi^\prime)}\right\rbrace.
\]
and we can write the transition kernel,
\[
\begin{aligned}
&\textbf{v}_n^{(1)}((\theta_n,\varphi_n)|(\theta_{n-1},\varphi_{n-1}),\mathcal{G}_n) =\alpha_n((\theta_n,\varphi_n)|(\theta_{n-1},\varphi_{n-1}))p_n^{(\kappa)}(\theta_n|Y,\varphi_n)q(\varphi_n|\varphi_{n-1}) \\
&\ +\left(1-\int_{\Theta \times\Phi}\alpha_n((\theta,\varphi)|(\theta_{n-1},\varphi_{n-1}))p_n^{(\kappa)}(\theta|Y,\varphi)q(\varphi|\varphi_{n-1})d\theta d\varphi\right)\delta\left((\theta_n,\varphi_n)-(\theta_{n-1},\varphi_{n-1})\right),
\end{aligned}
\] 
where $\delta$ is the multivariate Dirac delta function. Conditional on the filtration $\mathcal{G}_n$, $\textbf{v}_n^{(1)}$ is Markovian. We write $\textbf{V}_n^{(1)}$ for the corresponding probability measure. Note that this sampler is not a standard Metropolis-Hastings algorithm since the transition kernel is not constant. Instead, it is an \emph{external} adaptive MCMC algorithm \citep{Adaptive_2011}.

Given information up to $\mathcal{G}_n$, if we stop updating auxiliary process then $P_n^{(\kappa)}$ is fixed and not random, and this sampler reduces to a standard Metropolis-Hastings sampler. The transition kernel $\textbf{V}_n^{(1)}$ admits $p_{cut}^{(\kappa)}$ as its stationary distribution provided a proper proposal distribution is used. That is, define
\[
\textbf{v}_n^{(s)}=\int_{\Theta^{s-1}\times\Phi^{s-1}}\prod_{k=1}^s \textbf{v}_n^{(1)}((\theta_k,\varphi_k)|(\theta_{k-1},\varphi_{k-1}),\mathcal{G}_n)d\theta_{1:s-1}d\varphi_{1:s-1},
\]
and $\textbf{V}_n^{(s)}$ as the corresponding probability measure. Then on $\Theta \times\Phi$ we have
\[
\lim_{s\rightarrow\infty}\left\|\textbf{V}_n^{(s)}(\cdot)-P_{cut}^{(\kappa)}(\cdot)\right\|_{TV}=0.
\]

Note, however, that this algorithm is also infeasible because, while we can draw proposals for $\theta$, since $p_n^{(\kappa)}$ is known up to $\mathcal{G}_n$, $p^{(\kappa)}(\theta|Y,\varphi)$ remains intractable so we cannot calculate the acceptance probability.

\subsection{Convergence of the Stochastic Approximation Cut Algorithm}
The infeasibility of the partial Gibbs sampler and the adaptive Metropolis-Hastings sampler motivate the development of the Stochastic Approximation Cut Algorithm, which replaces the proposal distribution $p_n^{(\kappa)}$ by its target $p^{(\kappa)}$ in the accept-reject step of the adaptive Metropolis-Hastings sampler. This leads to the same acceptance probability as is used by the partial Gibbs sampler so the proposed algorithm can be viewed as combining the advantages of both the partial Gibbs sampler and the adaptive Metropolis-Hastings sampler. The transition kernel of the Stochastic Approximation Cut Algorithm is
\[
\begin{aligned}
&\textbf{t}_n^{(1)}((\theta_n,\varphi_n)|(\theta_{n-1},\varphi_{n-1}),\mathcal{G}_n) \\
&=\alpha(\varphi_n|\varphi_{n-1})p_n^{(\kappa)}(\theta_n|Y,\varphi_n)q(\varphi_n|\varphi_{n-1}) \\
&\ +\left(1-\int_{\Theta \times\Phi}\alpha(\varphi|\varphi_{n-1})p_n^{(\kappa)}(\theta|Y,\varphi)q(\varphi|\varphi_{n-1})d\theta d\varphi\right)\delta\left((\theta_n,\varphi_n)-(\theta_{n-1},\varphi_{n-1})\right) \\
&=\alpha(\varphi_n|\varphi_{n-1})p_n^{(\kappa)}(\theta_n|Y,\varphi_n)q(\varphi_n|\varphi_{n-1}) \\
&\ +\left(1-\int_\Phi \alpha(\varphi|\varphi_{n-1})q(\varphi|\varphi_{n-1})d\varphi\right)\delta\left((\theta_n,\varphi_n)-(\theta_{n-1},\varphi_{n-1})\right),
\end{aligned}
\] 
where $\delta$ is the multivariate Dirac delta function. Conditionally to $\mathcal{G}_n$, the transition kernel $\textbf{t}_n^{(1)}$ is Markovian. We write $\textbf{T}_n^{(1)}$ for the corresponding probability measure. Given information up to $\mathcal{G}_n$ and stopping updating the auxiliary process, $P_n^{(\kappa)}$ is fixed and not random, and we define the $s$-step transition kernel as
\[
\textbf{t}_n^{(s)}=\int_{\Theta^{s-1}\times\Phi^{s-1}}\prod_{k=1}^s \textbf{t}_n^{(1)}((\theta_k,\varphi_k)|(\theta_{k-1},\varphi_{k-1}),\mathcal{G}_n)d\theta_{1:s-1}d\varphi_{1:s-1},
\]
and write $\textbf{T}_n^{(s)}$ for the corresponding probability measure.

We now present several lemmas required to prove a weak law of large numbers for this algorithm (proofs in supplementary material (Online Resource 1)), appropriately modifying the reasoning of \cite{10.2307/2245077}, \cite{10.2307/2337435} and \cite{doi:10.1080/01621459.2015.1009072} for this setting.
\begin{assumption}
\label{assu2}
The posterior density $p(\varphi|Z)$ is continuous on $\Phi$ and the proposal distribution $q(\varphi^\prime|\varphi)$ is continuous with respect to $(\varphi^\prime,\varphi)$ on $\Phi\times\Phi$.
\end{assumption}

\begin{lemma}[Diminishing Adaptation]
\label{lem2}
Given Assumptions \ref{assu1} and \ref{assu2}, then
\[
\lim_{n\rightarrow\infty}\sup_{\theta\in\Theta ,\varphi\in\Phi}\left\|\textbf{T}_{n+1}^{(1)}\left(\cdot|(\theta,\varphi),\mathcal{G}_{n+1}\right)-\textbf{T}_n^{(1)}\left(\cdot|(\theta,\varphi),\mathcal{G}_n\right)\right\|_{TV}=0.
\]
\end{lemma}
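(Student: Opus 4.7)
The plan is to exploit the fact that the only $n$-dependent piece of $\textbf{t}_n^{(1)}$ is the proposal density $p_n^{(\kappa)}$, since $\alpha$, $q$, and $p(\varphi|Z)$ do not adapt. First I would observe that the Dirac-delta ``stay put'' mass has coefficient $1-\int_\Phi\alpha(\varphi'|\varphi)q(\varphi'|\varphi)d\varphi'$, which depends on neither $n$ nor the incoming $\theta$: the factor $p_n^{(\kappa)}$ that initially appeared was already collapsed out by $\int_\Theta p_n^{(\kappa)}(\theta'|Y,\varphi')d\theta'=1$ in the simplification of the coefficient in the definition of $\textbf{t}_n^{(1)}$. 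Consequently the atomic pieces of $\textbf{T}_{n+1}^{(1)}$ and $\textbf{T}_n^{(1)}$ cancel at any common starting state $(\theta,\varphi)$, and the signed difference is absolutely continuous on $\Theta\times\Phi$ with density $\alpha(\varphi'|\varphi)q(\varphi'|\varphi)[p_{n+1}^{(\kappa)}(\theta'|Y,\varphi')-p_n^{(\kappa)}(\theta'|Y,\varphi')]$.

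Next I would bound the total variation norm by the $L^1$ norm of that density. Using $\alpha\leq 1$, Tonelli's theorem, and $\int_\Phi q(\varphi'|\varphi)d\varphi'=1$, one obtains
\[
\left\|\textbf{T}_{n+1}^{(1)}(\cdot|(\theta,\varphi),\mathcal{G}_{n+1})-\textbf{T}_n^{(1)}(\cdot|(\theta,\varphi),\mathcal{G}_n)\right\|_{TV}\leq\sup_{\varphi'\in\Phi}\int_\Theta\left|p_{n+1}^{(\kappa)}(\theta'|Y,\varphi')-p_n^{(\kappa)}(\theta'|Y,\varphi')\right|d\theta'.
\]
Since the right-hand side is independent of the base point $(\theta,\varphi)$, the supremum over $(\theta,\varphi)$ appearing in the statement of the lemma is automatically absorbed, so it suffices to show that the right-hand side tends to $0$.

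Finally I would reduce this $L^1$ gap to the uniform weight convergence \eqref{E20}. Because $p_n^{(\kappa)}$ is a simple function constant on each $\Theta_r$ with value $W_n(\Theta_r|Y,\varphi')/\mu(\Theta_r)$ by \eqref{E22},
\[
\int_\Theta\left|p_{n+1}^{(\kappa)}-p_n^{(\kappa)}\right|d\theta'=\sum_{r=1}^{R_\kappa}\left|W_{n+1}(\Theta_r|Y,\varphi')-W_n(\Theta_r|Y,\varphi')\right|.
\]
Pulling $\sup_{\varphi'}$ through the finite sum and interposing the common limit $\int_{\Theta_r}p(\theta'|Y,\varphi')d\theta'$ via the triangle inequality bounds this by $\sum_{r=1}^{R_\kappa}(\epsilon_{n,r}+\epsilon_{n+1,r})$, where $\epsilon_{n,r}=\sup_{\varphi'\in\Phi}|W_n(\Theta_r|Y,\varphi')-\int_{\Theta_r}p(\theta'|Y,\varphi')d\theta'|\to 0$ by \eqref{E20}. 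Finiteness of $R_\kappa$ — exactly what the discretization buys us — then delivers the conclusion.

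The only care point, rather than a genuine obstacle, is verifying that the Dirac-delta coefficients match between iterations; once one sees that the normalization of $p_n^{(\kappa)}$ erases the $n$-dependence in the stay-put mass, the rest is a clean pathwise reduction to Lemma \ref{lem1}, which already holds on every outcome $\omega\in\Omega$, so no further almost-sure bookkeeping is needed.
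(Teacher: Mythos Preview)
Your argument is correct, and it is in fact a bit more direct than the paper's. The paper does not subtract $\textbf{T}_{n+1}^{(1)}$ from $\textbf{T}_n^{(1)}$ directly; instead it introduces the signed measure $D_n(\cdot|(\theta,\varphi)):=\textbf{T}_n^{(1)}(\cdot|(\theta,\varphi),\mathcal{G}_n)-\textbf{U}^{(1)}(\cdot|(\theta,\varphi))$, bounds $|D_n(\mathcal{B})|$ uniformly by a constant times $\sup_{\theta^\ast,\varphi^\ast}|p^{(\kappa)}(\theta^\ast|Y,\varphi^\ast)-p_n^{(\kappa)}(\theta^\ast|Y,\varphi^\ast)|$ (using boundedness of $\alpha q$ on the compact set, then Lemma~\ref{lem1}), and finishes with the triangle inequality $\|\textbf{T}_{n+1}^{(1)}-\textbf{T}_n^{(1)}\|_{TV}\le\|D_{n+1}\|_{TV}+\|D_n\|_{TV}$.

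The difference is mostly one of packaging. You exploit immediately that the Dirac ``stay'' coefficient is $n$-free, so the atoms cancel and the difference is absolutely continuous; you then control its $L^1$ mass by the finite sum $\sum_r|W_{n+1}(\Theta_r)-W_n(\Theta_r)|$ and invoke \eqref{E20}. The paper instead routes through the fixed kernel $\textbf{U}^{(1)}$ and bounds by the sup-norm $|p^{(\kappa)}-p_n^{(\kappa)}|$ times $\mu(\Theta)\mu(\Phi)$. Your route avoids the extra volume constants and the intermediate kernel, which is tidier for this lemma in isolation. The paper's detour has a payoff elsewhere, though: the object $D_n=\textbf{T}_n^{(1)}-\textbf{U}^{(1)}$ and its uniform TV-decay are reused verbatim in the proof of Lemma~\ref{lem5} (the telescoping step), so establishing $\|D_n\|_{TV}\to 0$ here is not redundant work in the paper's overall architecture.
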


Before presenting the next lemma, we introduce the concept of \emph{local positivity}. 
\begin{definition}
A proposal distribution $q(\psi^\prime|\psi)$ satisfies local positivity if there exists $\delta>0$ and $\varepsilon>0$ such that for every $\psi\in\Psi$, $|\psi^\prime-\psi|\leq\delta$ implies that $q(\psi^\prime|\psi)>\varepsilon$.
\end{definition}

\begin{lemma}
\label{lem3}
Given Assumption \ref{assu1}, the proposal distributions with densities $p_n^{(\kappa)}:\Theta \rightarrow\mathbb{R}$ and $p^{(\kappa)}:\Theta \rightarrow\mathbb{R}$ are both uniformly lower bounded away from $0$ and satisfy local positivity uniformly for all values $\varphi\in\Phi$.
\end{lemma}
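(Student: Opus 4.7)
The plan is to treat $p^{(\kappa)}$ and $p_n^{(\kappa)}$ separately for the uniform lower bound, and then observe that local positivity follows for free. Both densities depend only on $\varphi$ (and, for $p_n^{(\kappa)}$, on the auxiliary chain) and not on the current state of $\theta$, so they act as independent proposals; the definition of local positivity then reduces to a uniform strictly positive lower bound, and any $\delta>0$ paired with the infimum $\varepsilon$ serves as the required pair.

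For $p^{(\kappa)}$ I would invoke Assumption \ref{assu1}: continuity of $p_{cut}$ on the compact set $\Theta\times\Phi$ together with the standing positivity of the cut density yields a strictly positive constant $c_1 := \min_{(\theta,\varphi)\in\Theta\times\Phi} p(\theta|Y,\varphi)$, after dividing out the continuous bounded factor $p(\varphi|Z)$. Since for $\theta\in\Theta_r$,
$$p^{(\kappa)}(\theta|Y,\varphi)=\frac{1}{\mu(\Theta_r)}\int_{\Theta_r}p(\theta'|Y,\varphi)\,d\theta' \geq c_1,$$
the bound is deterministic and uniform over $\Theta\times\Phi$.

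For $p_n^{(\kappa)}$ the key is to combine two bounds on $W_n(\Theta_r|Y,\varphi)$. Equation \eqref{E19} immediately gives $W_n(\Theta_r|Y,\varphi)\geq (R_\kappa(n+1))^{-1}$, which is positive but shrinks in $n$. To obtain uniformity in $n$ I would invoke the almost-sure uniform convergence \eqref{E20}: on a full $\mathbb{P}$-measure event, pick $N(\omega)$ so large that $W_n(\Theta_r|Y,\varphi)\geq c_1 m_\kappa/2$ for all $n\geq N(\omega)$, all $r$ and all $\varphi\in\Phi$, where $m_\kappa=\min_r \mu(\Theta_r)>0$. The crude bound $W_n\geq (R_\kappa(N(\omega)+1))^{-1}$ covers $n<N(\omega)$, and the minimum of the two is a strictly positive $c(\omega)$, so
$$p_n^{(\kappa)}(\theta|Y,\varphi)\geq \frac{c(\omega)}{\max_r \mu(\Theta_r)}>0$$
uniformly in $n\in\mathbb{N}$, $\theta\in\Theta$ and $\varphi\in\Phi$ on that event.

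The main obstacle is the very first step: Assumption \ref{assu1} only asserts continuity, so the positivity of $c_1$ hinges on the implicit requirement that $p_{cut}$ is strictly positive on its compact domain; without this, $c_1$ could vanish and both the $p^{(\kappa)}$ bound and the large-$n$ part of the $p_n^{(\kappa)}$ argument would collapse. A secondary subtlety is that the constant bounding $p_n^{(\kappa)}$ is $\omega$-dependent through $N(\omega)$, but this is the standard form needed for the diminishing-adaptation and containment scheme used later to prove the weak law of large numbers.
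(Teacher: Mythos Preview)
Your argument is correct, but it does more work than the paper's proof and takes a different route for $p_n^{(\kappa)}$. The paper fixes $n$ and argues directly: since each $W_n(\Theta_r|Y,\varphi)$ is strictly positive (by the $(nR_\kappa)^{-1}$ regularization) and continuous in $\varphi$, the function $g(\varphi)=\min_{\theta\in\Theta}p_n^{(\kappa)}(\theta|Y,\varphi)$ is a positive continuous function on the compact set $\Phi$ and therefore attains a strictly positive minimum $\varepsilon_n$. The same continuity-plus-compactness argument handles $p^{(\kappa)}$. No appeal to \eqref{E20} or to the underlying density bound $c_1$ is needed, and the lemma as stated only claims uniformity over $\varphi\in\Phi$, not over $n$.

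Your version instead derives the $p^{(\kappa)}$ bound from a pointwise lower bound $c_1$ on $p(\theta|Y,\varphi)$ and then, for $p_n^{(\kappa)}$, splices the crude regularization bound for small $n$ with the limit \eqref{E20} for large $n$ to obtain an $\omega$-dependent constant that is uniform in $n$. This is strictly stronger than what the lemma asserts, and it anticipates the kind of uniform-in-$n$ minorization that the paper only establishes later (in the Remark following Lemma~\ref{lem4} and in Lemma~\ref{lem5}). Your observation that the proposals are independent of the current $\theta$, so local positivity reduces to a uniform lower bound, is exactly how both arguments close. The caveat you flag about strict positivity of $p_{cut}$ on $\Theta\times\Phi$ is real but is implicit in the paper's convention that $\Theta\times\Phi$ is the support of $p_{cut}$.
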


\begin{lemma}[Stationarity]
\label{lem4}
Given Assumptions \ref{assu1} and \ref{assu2}, and the filtration $\mathcal{G}_n$ (i.e. $P_n^{(\kappa)}$ is not random), then if the transition kernel measures $\textbf{U}^{(1)}$ and $\textbf{V}_n^{(1)}$ both admit an irreducible and aperiodic Markov chain, then the transition kernel measure $\textbf{T}_n^{(1)}$ admits an irreducible and aperiodic chain. Moreover, if the proposal distribution $q(\varphi^\prime|\varphi)$ satisfies local positivity, then there exists a probability measure $\Pi_n$ on $\Theta \times\Phi$ such that for any $(\theta_0,\varphi_0)\in\Theta \times\Phi$,
\[
\lim_{s\rightarrow\infty}\left\|\textbf{T}_n^{(s)}(\cdot)-\Pi_n\left(\cdot\right)\right\|_{TV}=0,
\]
and this convergence is uniform over $\Theta\times\Phi$.
\end{lemma}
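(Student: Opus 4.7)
The plan is to condition on $\mathcal{G}_n$ throughout, so that $p_n^{(\kappa)}$ and hence $\textbf{t}_n^{(1)}$ become a deterministic Markov kernel on the compact product space $\Theta\times\Phi$, and then establish a uniform Doeblin-type minorization. From this, both a unique stationary measure $\Pi_n$ and the claimed uniform convergence follow by standard Markov chain theory \citep{meyn_tweedie_glynn_2009}.

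First I would handle irreducibility and aperiodicity. The key observation is that $\textbf{T}_n^{(1)}$ uses the same $\varphi$-proposal $q(\varphi_n|\varphi_{n-1})$ and the same acceptance function $\alpha(\varphi_n|\varphi_{n-1})$ as $\textbf{U}^{(1)}$, and $\alpha$ depends only on the proposed $\varphi$. Hence the $\varphi$-marginal dynamics of $\textbf{T}_n^{(1)}$ and $\textbf{U}^{(1)}$ coincide, so irreducibility and aperiodicity of the $\varphi$-component are inherited from the hypothesis on $\textbf{U}^{(1)}$. For the $\theta$-component, whenever $\varphi'$ is accepted a fresh $\theta'$ is drawn from $p_n^{(\kappa)}(\cdot|Y,\varphi')$, which by Lemma \ref{lem3} charges every nonempty open subset of $\Theta$; irreducibility on $\Theta\times\Phi$ follows. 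The holding probability in the rejection term is strictly positive (since $\alpha q < 1$ in general by compactness and continuity), giving aperiodicity of the joint chain.

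Next I would prove the minorization. By local positivity of $q$, any two points of $\Phi$ within Euclidean distance $\delta$ communicate in one step with probability at least $\varepsilon_q > 0$; by compactness $\Phi$ admits a finite $\delta/2$-cover, so there is a finite integer $s_0$ such that for any $\varphi_0,\varphi^\ast\in\Phi$ a chain of at most $s_0$ proposals can carry $\varphi_0$ into an arbitrary $\delta$-neighborhood of $\varphi^\ast$ with probability bounded below by $\varepsilon_q^{s_0}$. Assumption \ref{assu2} together with compactness of $\Phi$ ensures $p(\varphi|Z)$ and $q$ are bounded above and (on the relevant sublevel set) below, so the acceptance $\alpha$ is uniformly bounded below by some $\alpha_0>0$ along this path. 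Combining this with the uniform positive lower bound $p_n^{(\kappa)}(\theta|Y,\varphi)\geq\varepsilon_\theta>0$ from Lemma \ref{lem3}, the $s_0$-step kernel satisfies
\[
\textbf{t}_n^{(s_0)}\bigl((\theta',\varphi')\,|\,(\theta,\varphi),\mathcal{G}_n\bigr)\geq c_0\,\eta(\theta',\varphi')
\]
for some constant $c_0>0$ independent of the starting state and some probability measure $\eta$ on $\Theta\times\Phi$.

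From this uniform Doeblin minorization, classical arguments yield existence of a unique invariant probability measure $\Pi_n$ (with density $p(\varphi|Z)p_n^{(\kappa)}(\theta|Y,\varphi)$, as one checks via detailed balance, since $\alpha$ is exactly the Metropolis--Hastings acceptance for the $\varphi$-chain and $\theta$ is drawn independently from $p_n^{(\kappa)}(\cdot|Y,\varphi)$ whenever $\varphi'$ is accepted), together with the geometric bound
\[
\sup_{(\theta_0,\varphi_0)\in\Theta\times\Phi}\bigl\|\textbf{T}_n^{(s)}(\cdot)-\Pi_n(\cdot)\bigr\|_{TV}\leq (1-c_0)^{\lfloor s/s_0\rfloor},
\]
which gives the uniform convergence claimed. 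The main obstacle is arranging that $s_0$ and $c_0$ can be chosen independently of the starting point $(\theta_0,\varphi_0)$; this is where compactness of $\Theta\times\Phi$ (Assumption \ref{assu1}), continuity of $p(\varphi|Z)$ and $q$ (Assumption \ref{assu2}), local positivity of $q$, and the uniform-in-$\varphi$ lower bound on $p_n^{(\kappa)}$ from Lemma \ref{lem3} must all be combined cleanly.
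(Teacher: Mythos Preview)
Your proposal is correct and reaches the same conclusion, but by a genuinely different route from the paper's own proof.

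For irreducibility and aperiodicity, the paper does \emph{not} exploit the fact that the $\varphi$-marginal of $\textbf{T}_n^{(1)}$ coincides with that of $\textbf{U}^{(1)}$. Instead it compares $\textbf{T}_n^{(1)}$ directly to $\textbf{V}_n^{(1)}$: writing $\alpha(\varphi'|\varphi)=\min(1,\beta)$ and $\alpha_n=\min(1,\beta_n)$, it sets $r=\beta/\beta_n$, observes that $r$ is bounded below by some $r^\ast>0$ because $p_n^{(\kappa)}$ and $p^{(\kappa)}$ are bounded away from $0$ and $\infty$, and uses $\min(1,r\beta_n)\geq\min(1,r)\min(1,\beta_n)$ to get $\textbf{T}_n^{(1)}\geq\min(1,r^\ast)\,\textbf{V}_n^{(1)}$ on the move part. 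An induction on $s$ then yields $\textbf{V}_n^{(s)}(\mathcal{B})>0\Rightarrow\textbf{T}_n^{(s)}(\mathcal{B})>0$, so irreducibility and aperiodicity are inherited from the assumed $\textbf{V}_n^{(1)}$. Your argument instead uses only the hypothesis on $\textbf{U}^{(1)}$ (via the shared $\varphi$-marginal) together with Lemma~\ref{lem3}, and never touches $\textbf{V}_n^{(1)}$; this is more direct and shows that the $\textbf{V}_n^{(1)}$ hypothesis is in fact redundant for this part.

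For uniform ergodicity, the paper does not build the $s_0$-step minorization by hand. It invokes Theorem~2.2 of \cite{10.2307/2337435} (local positivity of the joint proposal plus boundedness of the target on a compact support implies every compact set is small), applied first to $\textbf{U}^{(1)}$ and then, via the same local-positivity check on $p_n^{(\kappa)}(\theta'|Y,\varphi')q(\varphi'|\varphi)$, to $\textbf{T}_n^{(1)}$. With the trivial drift $V\equiv1$, Theorem~3.1 of \cite{10.2307/2337435} then gives uniform geometric ergodicity and the existence of $\Pi_n$. Your explicit finite-cover construction of a Doeblin minorization is essentially the content of those cited theorems, so the two arguments are equivalent in substance; the paper's version is shorter but less self-contained. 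Your parenthetical identification of $\Pi_n$ as having density $p(\varphi|Z)\,p_n^{(\kappa)}(\theta|Y,\varphi)$ is correct and goes beyond what the paper states, since the paper leaves $\Pi_n$ unidentified in Lemma~\ref{lem4} and only relates it to $P_{cut}^{(\kappa)}$ later in Lemma~\ref{lem5}. One small point to tighten: your lower bound $\alpha\geq\alpha_0>0$ along the path implicitly uses that $p(\varphi|Z)$ is bounded away from zero on $\Phi$, which follows from continuity on the compact support but should be stated; the paper makes the same implicit use when it asserts the target is ``bounded away from $0$ and $\infty$''.
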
 

\begin{lemma}[Asymptotic Simultaneous Uniform Ergodicity]
\label{lem5}
Given Assumptions \ref{assu1} and \ref{assu2} and the assumptions in Lemma \ref{lem4}, for any initial value $(\theta_0,\varphi_0)\in\Theta\times\Phi$, and any $\varepsilon>0$ and $e>0$, there exist constants $S(\varepsilon)>0$ and $N(\varepsilon)>0$ such that
\[
\mathbb{P}\left(\left\lbrace P_n^{(\kappa)}:\ \left\|\textbf{T}_n^{(s)}\left(\cdot\right)-P_{cut}^{(\kappa)}\left(\cdot\right)\right\|_{TV}\leq\varepsilon\right\rbrace\right)>1-e,
\]
for all $s>S(\varepsilon)$ and $n>N(\varepsilon)$.
\end{lemma}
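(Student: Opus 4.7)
The plan is to insert the stationary distribution $\Pi_n$ of $\textbf{t}_n^{(1)}$ (viewed as a time-homogeneous kernel with $\mathcal{G}_n$ frozen) and apply the triangle inequality
\[
\|\textbf{T}_n^{(s)}(\cdot) - P_{cut}^{(\kappa)}(\cdot)\|_{TV} \leq \|\textbf{T}_n^{(s)}(\cdot) - \Pi_n(\cdot)\|_{TV} + \|\Pi_n(\cdot) - P_{cut}^{(\kappa)}(\cdot)\|_{TV},
\]
bounding each summand by $\varepsilon/2$ with appropriate control on the failure probability.

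First I would identify $\Pi_n$. With $\mathcal{G}_n$ frozen, $\textbf{t}_n^{(1)}$ is a Metropolis-within-Gibbs scheme: $\varphi$ performs a standard MH update targeting $p(\varphi|Z)$ with acceptance $\alpha(\varphi'|\varphi)$, while $\theta$ is redrawn from $p_n^{(\kappa)}(\cdot|Y,\varphi)$ whenever $\varphi$ is accepted and is otherwise left unchanged (which preserves the conditional $\theta \mid \varphi$). A direct detailed balance check, parallel to the partial Gibbs argument of Section \ref{S3.1.1} but with $p^{(\kappa)}$ replaced by $p_n^{(\kappa)}$, shows that $\Pi_n$ has density $p_n^{(\kappa)}(\theta|Y,\varphi)p(\varphi|Z)$. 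Hence
\[
\|\Pi_n - P_{cut}^{(\kappa)}\|_{TV} \leq \tfrac{1}{2}\int_\Phi p(\varphi|Z)\int_\Theta \bigl|p_n^{(\kappa)}(\theta|Y,\varphi) - p^{(\kappa)}(\theta|Y,\varphi)\bigr|\,d\theta\,d\varphi,
\]
which vanishes almost surely by Lemma \ref{lem1} (uniform convergence on $(\Theta\setminus\bar{\Theta}_\kappa)\times\Phi$, combined with $\mu(\bar{\Theta}_\kappa)=0$, compactness of $\Theta\times\Phi$, and the boundedness of $p(\varphi|Z)$). I would then choose $N(\varepsilon)$ so that the second summand is below $\varepsilon/2$ with probability at least $1-e$ for all $n > N(\varepsilon)$.

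For the first summand I would establish a uniform Doeblin minorization for the frozen kernel. Lemma \ref{lem3} supplies a lower bound $p_n^{(\kappa)}(\theta|Y,\varphi)\geq c>0$ with $c$ independent of $n$ and $\varphi$; combined with local positivity of $q(\varphi'|\varphi)$ (Assumption \ref{assu2}) and the continuity bounds on $p(\varphi|Z)$ over the compact set $\Phi$, a standard $k_0$-step coupling argument (using a finite cover of $\Phi$ by local-positivity neighborhoods to handle any starting $\varphi$) yields $\textbf{t}_n^{(k_0)}((\theta,\varphi),\cdot) \geq \gamma\,\nu(\cdot)$ for constants $k_0\geq 1$, $\gamma>0$ and a probability measure $\nu$ that do not depend on $n$. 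The Meyn--Tweedie coupling bound then delivers $\|\textbf{T}_n^{(s)} - \Pi_n\|_{TV} \leq (1-\gamma)^{\lfloor s/k_0 \rfloor}$ uniformly in $n$ and in the starting point, so $S(\varepsilon)$ may be chosen, independently of $n$, to force the first summand below $\varepsilon/2$.

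The main obstacle is securing the minorization genuinely uniformly in $n$: if the lower bound on $p_n^{(\kappa)}$ deteriorated as $n\to\infty$, $\gamma$ could shrink and the convergence rate of $\textbf{T}_n^{(s)}$ toward $\Pi_n$ could be arbitrarily slow; Lemma \ref{lem3} is the decisive ingredient that rules this out. Combining the two bounds via the triangle inequality then gives the claimed event with probability at least $1-e$ whenever $s > S(\varepsilon)$ and $n > N(\varepsilon)$.
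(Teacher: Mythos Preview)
Your route via $\Pi_n$ is different from the paper's and is in several ways cleaner: the paper inserts $\textbf{U}^{(s)}$ rather than $\Pi_n$, then splits $\textbf{T}_n^{(s)}f-\textbf{U}^{(s)}f$ with an intermediate step $s_0$ and a telescoping sum $\sum_{k=0}^{s_0-1}\textbf{U}^{(k)}(\textbf{T}_n^{(1)}-\textbf{U}^{(1)})\textbf{T}_n^{(s_0-k-1)}f$, invoking Egorov's theorem twice (once for $D_n\to 0$, once for $p_n^{(\kappa)}\to p^{(\kappa)}$) to push everything onto a set of probability at least $1-e$. Your explicit identification of $\Pi_n$ as the measure with density $p_n^{(\kappa)}(\theta\mid Y,\varphi)\,p(\varphi\mid Z)$ is correct (the paper never makes this explicit), and it lets you avoid the telescoping and one of the Egorov applications entirely.

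There is, however, a genuine gap at exactly the point you flag as the main obstacle. Lemma~\ref{lem3} does \emph{not} supply a lower bound on $p_n^{(\kappa)}$ with constant $c$ independent of $n$: its proof fixes $n$ and minimises over $(\theta,\varphi)$, so the resulting $\varepsilon$ may depend on $n$, and indeed the additive correction $(nR_\kappa)^{-1}$ in \eqref{E19} decays with $n$. What secures the uniform-in-$n$ Doeblin constant is Lemma~\ref{lem1}, not Lemma~\ref{lem3}: since $p_n^{(\kappa)}\to p^{(\kappa)}$ uniformly on $(\Theta\setminus\bar\Theta_\kappa)\times\Phi$ and $p^{(\kappa)}\geq c^\ast>0$ (this part does follow from Lemma~\ref{lem3}), one gets $p_n^{(\kappa)}\geq c^\ast/2$ for all $n$ beyond a random threshold $N_0(\omega)$. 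The paper handles this via Egorov (see the Remark following the proof of Lemma~\ref{lem4}) to obtain a deterministic $N_2$ and a set of probability at least $1-e/2$ on which the common minorisation $\textbf{T}_n^{(1)}(\mathcal B\mid(\theta,\varphi),\mathcal G_n)\geq(\delta/2)\,\mu(\mathcal B)$ holds for all $n>N_2$. You need the same device; once you have it, your geometric bound $(1-\gamma)^{\lfloor s/k_0\rfloor}$ is valid with $\gamma$ independent of $n$ on that event, and you should split the failure probability as $e/2+e/2$ between the two summands rather than spending the full $e$ on the $\|\Pi_n-P_{cut}^{(\kappa)}\|_{TV}$ term alone.
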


Lemma \ref{lem2} leads to condition (c) (Diminishing Adaptation), Lemma \ref{lem4} leads to condition (a) (Stationarity) and Lemma \ref{lem5} leads to condition (b) (Asymptotic Simultaneous Uniform Ergodicity) in Theorem 3.2 of \cite{doi:10.1080/01621459.2015.1009072}. Hence, we have the following weak law of large numbers.

\begin{theorem}[WLLN]
\label{them2}
Suppose that the conditions of Lemma \ref{lem5} hold. Let $f$ be any measurable bounded function on $\Theta\times\Phi$. Then for samples $(\theta_n,\varphi_n)$, $n=1,2,...$ drawn using the Stochastic Approximation Cut Algorithm, we have that
\[
\frac{1}{N}\sum_{n=1}^N f(\theta_n,\varphi_n)\rightarrow \int_{\Theta\times\Phi} f(\theta,\varphi)P_{cut}^{(\kappa)}(d\theta,d\varphi), \ \text{in probability}.
\]
\end{theorem}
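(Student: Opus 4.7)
The plan is to verify that the three hypotheses of Theorem 3.2 of Liang et al.\ (\cite{doi:10.1080/01621459.2015.1009072})---namely (a) Stationarity, (b) Asymptotic Simultaneous Uniform Ergodicity, and (c) Diminishing Adaptation---are satisfied for the frozen-kernel chain $\textbf{T}_n^{(s)}$ associated with SACut. Once these three conditions are in hand, their abstract theorem delivers the weak law of large numbers for any measurable bounded $f$ on $\Theta\times\Phi$ directly, and nothing further is required.

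First, I would check that the paper's preceding lemmas align one-to-one with the three hypotheses. Condition (c), Diminishing Adaptation, is exactly the statement of Lemma \ref{lem2}: the total variation distance between the successive one-step transition kernels $\textbf{T}_{n+1}^{(1)}$ and $\textbf{T}_n^{(1)}$ vanishes uniformly over $(\theta,\varphi)\in\Theta\times\Phi$. Condition (a), Stationarity, is provided by Lemma \ref{lem4}: conditional on $\mathcal{G}_n$ (i.e.\ with the auxiliary process frozen so that $P_n^{(\kappa)}$ is deterministic), the kernel $\textbf{T}_n^{(1)}$ is irreducible and aperiodic, and therefore admits a unique stationary measure $\Pi_n$; together with the reversibility argument built into the acceptance probability (which is the same as in the partial Gibbs sampler and thus targets $p_{cut}^{(\kappa)}$), this identifies $\Pi_n = P_{cut}^{(\kappa)}$. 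Condition (b), Asymptotic Simultaneous Uniform Ergodicity, is precisely Lemma \ref{lem5}: for every $\varepsilon,e>0$ there are thresholds $S(\varepsilon),N(\varepsilon)$ beyond which the event $\{\lVert \textbf{T}_n^{(s)}(\cdot)-P_{cut}^{(\kappa)}(\cdot)\rVert_{TV}\le\varepsilon\}$ has probability exceeding $1-e$.

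With these three ingredients checked, the proof reduces to invoking Theorem 3.2 of \cite{doi:10.1080/01621459.2015.1009072}. The compactness of $\Theta\times\Phi$ (Assumption \ref{assu1}(a)) guarantees that every measurable bounded $f$ is integrable with respect to $P_{cut}^{(\kappa)}$, and the external theorem then yields
\[
\frac{1}{N}\sum_{n=1}^N f(\theta_n,\varphi_n) \longrightarrow \int_{\Theta\times\Phi} f(\theta,\varphi)\, P_{cut}^{(\kappa)}(d\theta,d\varphi)
\]
in probability, which is the claim.

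The only real obstacle is a bookkeeping one: verifying that our notational setup truly matches the hypotheses of the cited Theorem 3.2, in particular that the conditional-on-$\mathcal{G}_n$ nature of the kernel $\textbf{T}_n^{(s)}$ is the correct object to compare against the non-adaptive limit, and that the acceptance probability $\alpha(\varphi_n|\varphi_{n-1})$---which does not depend on $\theta$---is compatible with their framework. This is why the paper introduces the partial Gibbs sampler and the adaptive Metropolis--Hastings sampler as intermediate objects: the proposal density for $\theta$ cancels between $p_n^{(\kappa)}$ (from which we sample) and $p^{(\kappa)}$ (which it targets), so that the analytical structure of SACut is essentially that of the partial Gibbs sampler on $\varphi$ coupled with an adaptive proposal for $\theta$. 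Since this identification was already exploited to obtain Lemmas \ref{lem2}--\ref{lem5}, no further work is needed: the WLLN follows.
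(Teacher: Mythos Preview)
Your proposal is correct and matches the paper's proof essentially verbatim: the paper notes just before the theorem that Lemmas \ref{lem2}, \ref{lem4}, and \ref{lem5} supply conditions (c), (a), and (b) of Theorem 3.2 in \cite{doi:10.1080/01621459.2015.1009072}, and the proof itself consists of the single sentence ``This follows by Theorem 3.2 in \cite{doi:10.1080/01621459.2015.1009072}.'' Your additional remarks on bookkeeping and the identification $\Pi_n = P_{cut}^{(\kappa)}$ are reasonable elaborations but go beyond what the paper records.
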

\begin{proof}
This follows by Theorem 3.2 in \cite{doi:10.1080/01621459.2015.1009072}
\end{proof}

Given further conditions and combining Corollary \ref{coro1} with Theorem \ref{them2} we have the following corollary.
\begin{corollary}
\label{coro2}
Given the conditions in Corollary \ref{coro1} hold for the cut distribution $p_{cut}$ and conditions in Theorem \ref{them2} hold. Then given a measurable and bounded function $f: \Theta\times\Phi\rightarrow\mathbb{R}$, there exists, for any $\varepsilon>0$ and $e>0$, a precision parameter $\kappa$ and iteration number $N$, such that for samples $(\theta_n,\varphi_n)$, $n=1,2,...$ drawn using the Stochastic Approximation Cut Algorithm, we have that
\[
\mathbb{P}\left(\left|\frac{1}{N}\sum_{n=1}^N f(\theta_n,\varphi_n)-\int_{\Theta\times\Phi} f(\theta,\varphi)P_{cut}(d\theta,d\varphi)\right|\leq\varepsilon\right)>1-e.
\]
More specifically, the bias
\[
\left|\int_{\Theta\times\Phi}f(\theta,\varphi)P_{cut}(d\theta,d\varphi)-\int_{\Theta\times\Phi}f(\theta,\varphi)P_{cut}^{(\kappa)}(d\theta,d\varphi)\right|
\]
can be controlled by
\[
\sup_{\theta\in\Theta,\varphi\in\Phi}\left\|\nabla_{\theta}p(\theta|Y,\varphi)\right\|_2 \frac{\sqrt{d}}{10^\kappa}\left(\int_{\Theta\times\Phi} f(\theta,\varphi)p(\varphi|Z) d\theta d\varphi\right).
\]
\end{corollary}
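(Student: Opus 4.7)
The plan is to split the quantity of interest via the triangle inequality into a Monte Carlo error against the approximation $P_{cut}^{(\kappa)}$ and a deterministic bias between $P_{cut}^{(\kappa)}$ and $P_{cut}$; the former is handled by Theorem \ref{them2} and the latter by Corollary \ref{coro1}. Specifically, I would write
\[
\left|\frac{1}{N}\sum_{n=1}^N f(\theta_n,\varphi_n)-\int f\,dP_{cut}\right|\le \underbrace{\left|\frac{1}{N}\sum_{n=1}^N f(\theta_n,\varphi_n)-\int f\,dP_{cut}^{(\kappa)}\right|}_{(\mathrm{I})}+\underbrace{\left|\int f\,dP_{cut}^{(\kappa)}-\int f\,dP_{cut}\right|}_{(\mathrm{II})}.
\]
The order of choices is important: first fix $\kappa$ so that the deterministic term $(\mathrm{II})$ is at most $\varepsilon/2$, and then apply the WLLN (Theorem \ref{them2}) with this fixed $\kappa$ to choose $N$ making $(\mathrm{I})\le\varepsilon/2$ with probability at least $1-e$.

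For $(\mathrm{II})$, since $p_{cut}(\theta,\varphi)=p(\theta\mid Y,\varphi)p(\varphi\mid Z)$ and $p_{cut}^{(\kappa)}(\theta,\varphi)=p^{(\kappa)}(\theta\mid Y,\varphi)p(\varphi\mid Z)$, I would factor
\[
\int f\,dP_{cut}-\int f\,dP_{cut}^{(\kappa)}=\int_{\Theta\times\Phi} f(\theta,\varphi)\bigl[p(\theta\mid Y,\varphi)-p^{(\kappa)}(\theta\mid Y,\varphi)\bigr]p(\varphi\mid Z)\,d\theta\,d\varphi.
\]
Corollary \ref{coro1} supplies the global bound $|p^{(\kappa)}-p|\le \sup_\theta\|\nabla_\theta p(\theta\mid Y,\varphi)\|_2\,\sqrt{d}/10^\kappa$ pointwise on a set $\mathscr{E}$ of full Lebesgue measure, so the null boundary $\bar{\Theta}_\kappa$ does not contribute. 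Taking absolute values inside the integral and pulling out the supremum over $(\theta,\varphi)\in\Theta\times\Phi$ (which is finite by Assumption \ref{assu1} and the continuous differentiability hypothesis) gives exactly the quantitative bound
\[
\sup_{\theta\in\Theta,\varphi\in\Phi}\|\nabla_\theta p(\theta\mid Y,\varphi)\|_2\,\frac{\sqrt{d}}{10^\kappa}\int_{\Theta\times\Phi}|f(\theta,\varphi)|p(\varphi\mid Z)\,d\theta\,d\varphi,
\]
matching the statement (up to an absolute value around $f$, which can be absorbed since $f$ is bounded). Since $\sqrt{d}/10^\kappa\to 0$ geometrically, we can choose $\kappa$ making $(\mathrm{II})\le\varepsilon/2$.

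With $\kappa$ now fixed, $(\mathrm{I})$ is handled by Theorem \ref{them2}: applied to the bounded measurable $f$, the partial averages converge in probability to $\int f\,dP_{cut}^{(\kappa)}$, so there exists $N=N(\varepsilon,e,\kappa)$ with $\mathbb{P}((\mathrm{I})\le\varepsilon/2)>1-e$. Combining the two estimates and a union bound (only one term is random) yields the probabilistic statement with constant $1-e$. The only non-routine point is ensuring that the deterministic bound from Corollary \ref{coro1} can indeed be integrated in $(\mathrm{II})$, i.e.\ that the exceptional set $\bar{\Theta}_\kappa$ carries zero $P_{cut}$-mass; this follows because $p_{cut}$ is absolutely continuous with respect to $\mu\otimes\mu$ on $\Theta\times\Phi$ and $\mu(\bar{\Theta}_\kappa)=0$. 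Everything else is bookkeeping: the choice of $\kappa$ before $N$, and the observation that $\sup_{\theta,\varphi}\|\nabla_\theta p(\theta\mid Y,\varphi)\|_2<\infty$ under the compactness Assumption \ref{assu1} together with the continuous differentiability inherited from Corollary \ref{coro1}.
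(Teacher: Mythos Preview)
Your proposal is correct and follows essentially the same argument as the paper: a triangle-inequality split into the deterministic bias $(\mathrm{II})$ controlled by Corollary~\ref{coro1} (choosing $\kappa$ first) and the Monte Carlo error $(\mathrm{I})$ handled by Theorem~\ref{them2} (choosing $N$ second), together with the observation that the exceptional boundary set has zero Lebesgue measure. Your remark about the implicit absolute value on $f$ in the quantitative bound is a fair observation; the paper's own derivation tacitly treats $f$ as nonnegative at that step.
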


Corollary \ref{coro2} shows that, although the convergence established by Theorem \ref{them2} is biased with respect to the true cut distribution $P_{cut}$, the bias can be geometrically reduced by selecting a large precision parameter $\kappa$.

\section{Illustrative Examples}
\label{sec:example}
We demonstrate the proposed algorithm in this section. First, we use a simulation example to introduce a simple method for choosing the precision parameter $\kappa$, and demonstrate that the proposed algorithm can eliminate the feedback from a suspect module. We then examine a simulated case designed to highlight when existing algorithms will perform poorly. We finally apply our algorithm to an epidemiological example and compare results with existing studies. The R package \emph{SACut} and code to replicate these examples can be downloaded from GitHub\footnotemark[1].
\footnotetext[1]{\url{https://github.com/MathBilibili/Stochastic-approximation-cut-algorithm}}

\begin{figure}[tp] 
\centering 
\includegraphics[width=\textwidth]{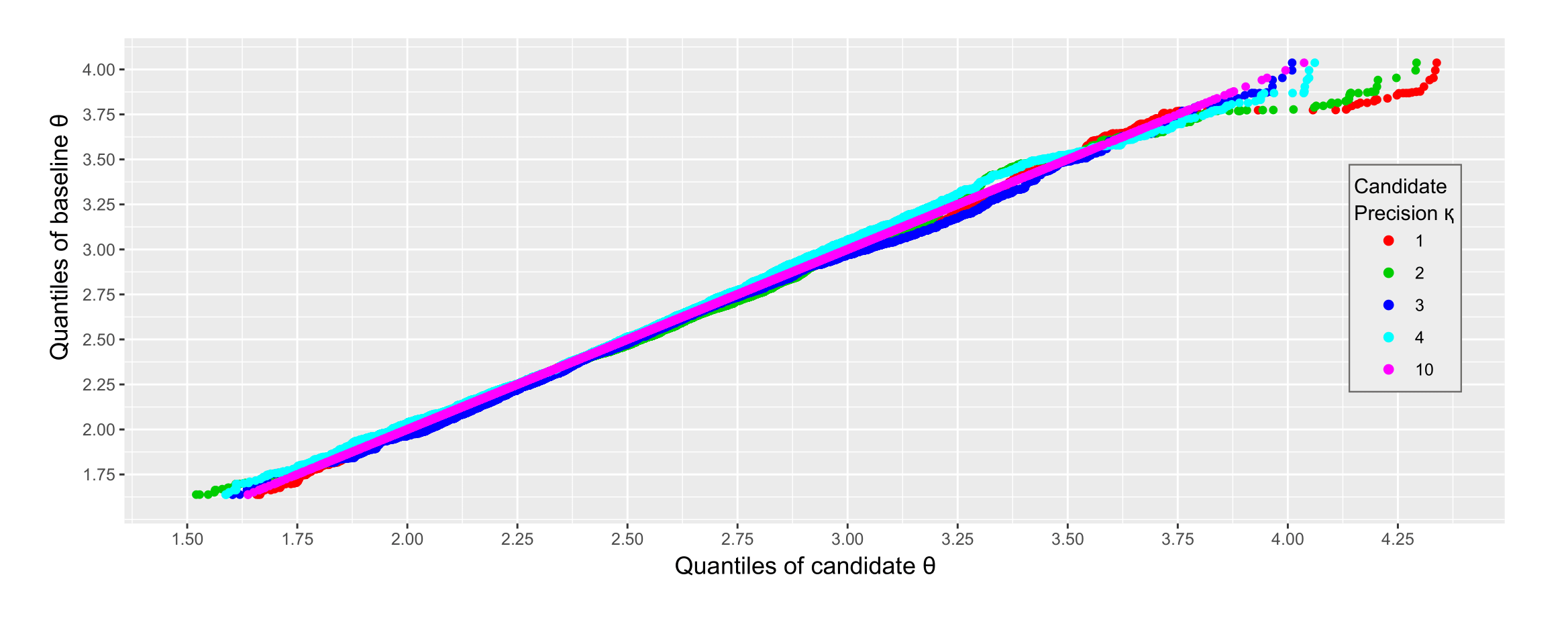}
\caption[Quantile-quantile plot for $\theta$ drawn from \eqref{E35} with precision parameter $\kappa=1,2,3,4,10$.]{Quantile-quantile plot for $\theta$ drawn from \eqref{E35} with precision parameter $\kappa=1,2,3,4,10$. The x-axis of the quantile-quantile plot is the quantile of samples under different $\kappa$, the y-axis is the quantile of samples under the gold standard $\kappa=10$.}
\label{F5}
\end{figure}

\subsection{Simulated Random Effects Example}
\label{Sec4_1}
In this example, we discuss a simple method for selecting the precision parameter $\kappa$ and show that the proposed algorithm can effectively cut the feedback from a suspect module. 

We consider a simple normal-normal random effect example previously discussed by \cite{liu2009modularization}, with groups $i=1,...,100=N$, observations $Y_{ij} \sim \text{N}(\beta_i,\varphi_i^2)$, $j=1,...,20$ in each group, and random effects distribution $\beta_i \sim \text{N}(0,\theta^2)$. Our aim is to estimate the random effects standard deviation $\theta$ and the residual standard deviation $\varphi=(\varphi_1,...,\varphi_N)$. By sufficiency, the likelihood can be equivalently represented in terms of the group-specific means $\bar{Y}_i=\frac{1}{20}\sum_{j=1}^{20} Y_{ij}$ and the sum of squared deviations $s_i^2=\sum_{j=1}^{20} (Y_{ij}-\bar{Y}_i)^2$ as
\[
\begin{aligned}
&\bar{Y}_i\sim \text{N}(\beta_i,\frac{\varphi_i^2}{20}), \\
&s_i^2\sim \text{Gamma}\left(\frac{20-1}{2},\frac{1}{2 \varphi_i^2}\right).
\end{aligned}
\]
Given the sufficient statistics $\bar{Y}=(\bar{Y}_1,...,\bar{Y}_N)$ and $s^2=(s_1^2,...,s_N^2)$, the model consists of two modules: module 1 involving $(s^2, \varphi)$ and module 2 involving $(\bar{Y}, \beta, \varphi)$, where $\beta=(\beta_1,...,\beta_N)$. 

We consider the situation when an outlier group is observed, meaning that module 2 is misspecified, and compare the standard Bayesian posterior distribution with the cut distribution. Specifically, we simulate data from the model with $\theta^2=2$, and $\varphi_i^2$ drawn from a $\text{Unif}(0.5, 1.5)$ distribution ($\varphi_1^2=1.60$), but we artificially set $\beta_1 = 10$, making the first group an outlier and thus our normal assumption for the random effects misspecified.  Given priors $p(\varphi_i^2)\propto (\varphi_i^2)^{-1}$ and $p(\theta^2|\varphi^2)\propto (\theta^2+\bar{\varphi}^2/20)^{-1}$, \cite{liu2009modularization} showed the standard Bayesian marginal posterior distribution for the parameters of interest is:
\[
\begin{aligned}
&p(\theta,\varphi|\bar{Y},s^2)=p(\theta|\bar{Y},\varphi)p(\varphi|\bar{Y},s^2) \\
&\propto\frac{1}{\theta^2+\bar{\varphi}^2/20}\prod_{i=1}^{100} (\varphi_i^2)^{-\frac{21}{2}}\exp\left(-\frac{s_i^2}{2\varphi_i^2}\right)\frac{1}{(\theta^2+\varphi_i^2/20)^{1/2}}\exp\left(-\frac{\bar{Y}_i^2}{2(\theta^2+\varphi_i^2/20)}\right).
\end{aligned}
\]

Since we are confident about our assumption of normality of $Y_{ij}$ but not confident about our distributional assumption for the random effects $\beta_i$, following \cite{liu2009modularization}, we consider the cut distribution in which we remove the influence of $\bar{Y}$ on $\varphi$, so that possible misspecification of the first module does not affect $\varphi$:
\[
p_{cut}(\theta,\varphi):=p(\theta|\bar{Y},\varphi)p(\varphi|s^2),
\]
where
\[
p(\varphi|s^2) \propto \prod_{i=1}^{100} \varphi_i^{-21}\exp(-\frac{s_i^2}{2\varphi_i^2}).
\]

To apply the proposed algorithm we first construct the auxiliary parameter set for the parameter $\varphi$ by selecting 70 samples selected from posterior samples of $p(\varphi|s^2)$ by the Max-Min procedure \citep{doi:10.1080/01621459.2015.1009072}. We set the shrink magnitude $n_0=1000$ and run only the auxiliary chain for $10^4$ iterations before starting to store the auxiliary variable $h_n$, as suggested by \cite{doi:10.1080/01621459.2015.1009072}.

The precision parameter $\kappa$ should be chosen large enough to obtain accurate results, whilst being small enough that computation is not prohibitively slow. To illustrate this, we compare results with $\kappa=10$, which we regard as the gold standard, to results with $\kappa = 1, 2, 3, 4$. Different values of $\kappa$ affect the sampling of $\theta$ only via \eqref{E21}, so we compare samples drawn from $p_n^{(\kappa)}(\theta|\bar{Y},\varphi)$, averaged over the marginal cut distribution of $\varphi$:
\begin{equation}
p_n^{(\kappa)}(\theta|\bar{Y},s^2):=\int p_n^{(\kappa)}(\theta|\bar{Y},\varphi)p_{cut}(\varphi) d\varphi,
\label{E35}
\end{equation}
where the marginal cut distribution $p_{cut}(\varphi)$ is
\[
p_{cut}(\varphi):=\int p_{cut}(\theta,\varphi)d\theta=p(\varphi|s^2)\propto p(s^2|\varphi)p(\varphi).
\]

We draw $10^5$ samples from \eqref{E35} for each value of $\kappa$, after running the proposed algorithm with few iterations ($10^4$) as a preliminary trial. Figure \ref{F5} shows the quantile-quantile plot for 5 choices for $\kappa$. The fit appears good for all choices of $\kappa$, except in the tails, where $\kappa=3$ and $\kappa=4$ provide a closer match to the gold standard. Thus, we choose $\kappa=3$ as it gives a sufficiently accurate approximation.

We apply both the standard Bayesian approach and the Stochastic Approximation Cut Algorithm ($\kappa=3$), each with 10 independent chains. All chains were run for $10^5$ iterations and we retain only every $100^{th}$ value, after discarding the first $10\%$ of the samples, and we summarise the results by the mean and credible interval (CrI). Pooling the 10 chains for the cut distribution gave estimates of $\theta^2 = 2.54$ ($95\%$ CrI 1.93 - 3.44) and $\varphi_1^2 = 1.58$ ($95\%$ CrI 0.88 - 3.18), whereas the standard Bayesian approach gave estimates of $\theta^2 = 2.53$ ($95\%$ CrI 1.93 - 3.44) and $\varphi_1^2 = 1.69$ ($95\%$ CrI 0.91 - 3.76). Figure \ref{F6} presents the medians for the parameter of interest $\varphi_1^2$ under each of the 10 independent runs for the cut distribution and the standard Bayesian posterior. Recalling the true value for $\varphi_1^2 = 1.60$, it is clear that when using the Stochastic Approximation Cut algorithm the medians locate around its true value rather than deviating systematically towards one side. This indicates the proposed algorithm has successfully prevented the outlying observation from influencing the estimation of $\varphi_1^2$.

\begin{figure}[t] 
\centering 
\includegraphics[scale=0.3]{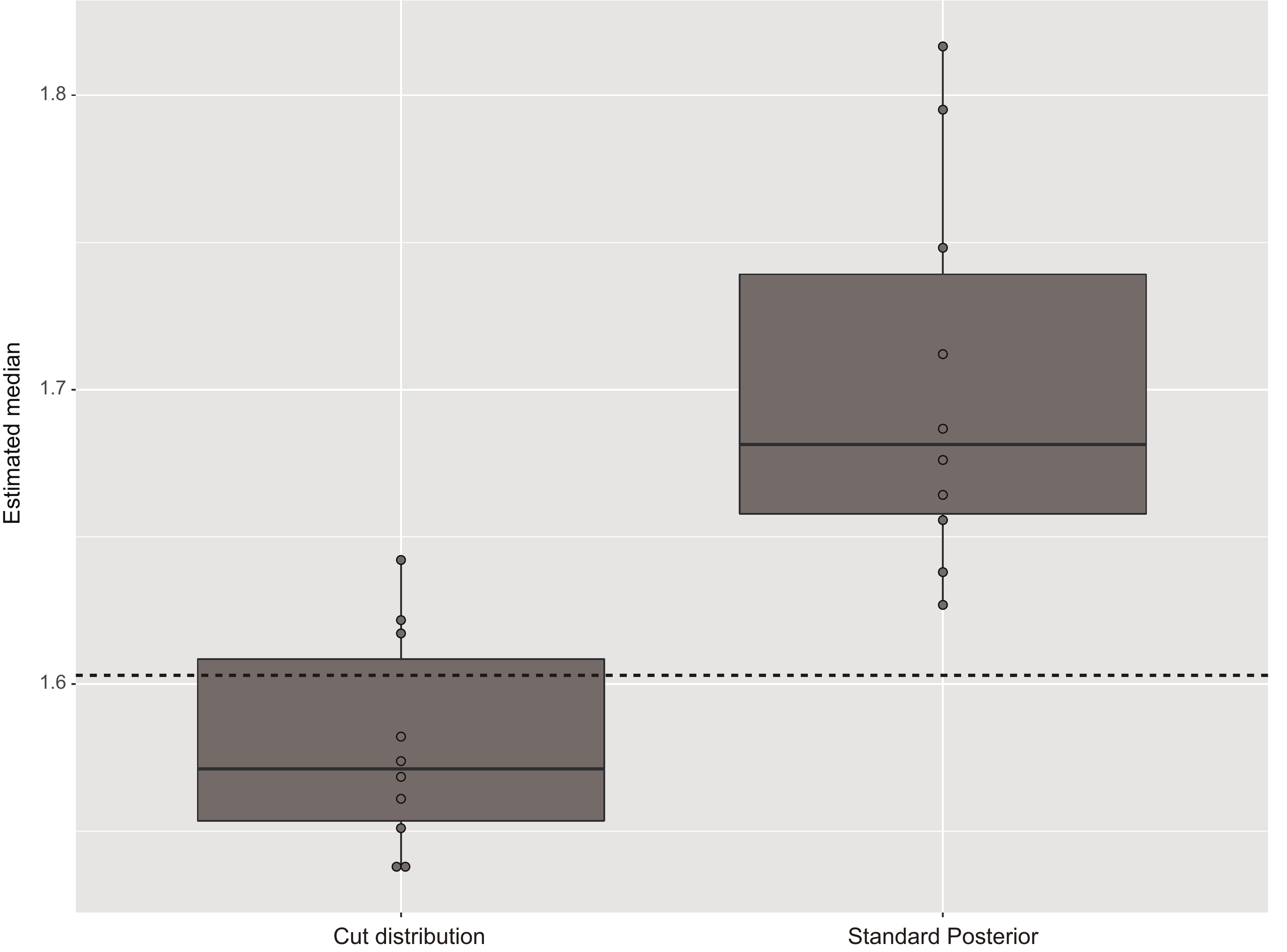}
\caption[Box plot of median estimates for $\varphi_1^2$ from each of 10 independent runs, under the cut distribution and the standard Bayesian posterior.]{Box plot of median estimates for $\varphi_1^2$ from each of 10 independent runs, under the cut distribution and the standard Bayesian posterior. The dashed line indicates the true value of $\varphi_1^2$.} 
\label{F6}
\end{figure}

\subsection{Simulated Strong Dependence Between $\theta$ and $\varphi$}
In this section, we apply our algorithm in a simulated setting that illustrates when nested MCMC \citep{Plummer2015} can perform poorly. Consider the case when the distribution of $\theta$ is highly dependent on $\varphi$. In this case, if the distance between successive values $\varphi^\prime$ and $\varphi$ is large in the external MCMC chain, the weight function may not be close to 1 and so the internal chain will typically require more iterations to reach convergence. This will be particularly problematic if the mixing time for the proposal distribution is large. 

To simulate this scenario, we consider a linear regression for outcomes $Y_i$, $i = 1,...,50$, in which the coefficient vector $\theta=(\theta_1,...,\theta_d)$ is closely related to the coefficient $\varphi$ for covariate $X_i=(X_{\theta,i},X_{\varphi,i})$. To assess the performance under small and moderate dimension of $\theta$, we consider $d=1$ and 20 in this illustration. As well as observations of the outcome $Y_i$ and the covariate $X_i$, we assume we have separate observations $Z_j$, $j=1,...,100$ related to the coefficient $\varphi$.
\begin{equation}
\begin{aligned}
&Y_i\sim \text{N}(\theta^\intercal X_{\theta,i} +\varphi X_{\varphi,i},3), \ i=1,...,50; \\
&Z_j\sim \text{N}(\varphi,1),\ j=1, ..., 100;
\end{aligned}
\label{E37}
\end{equation}
Suppose that we wish to estimate $\varphi$ solely on the basis of $Z=(Z_1,...,Z_{100})$, and so we cut the feedback from $Y=(Y_1,...,Y_{50})$ to $\varphi$.

We generate $Y$ and $Z$ according to \eqref{E37}, with $\varphi=1$ and $\theta_p=\sin(p)$, $p = 1,..., d$, and compare the results of Stochastic Approximation Cut Algorithm (SACut), naive SACut and nested MCMC with internal chain length $n_{int} =$ 1, 10, 200, 500, 1000, 1500 and 2000. Notably, nested MCMC with $n_{int} =1$ is the WinBUGS algorithm. The proposal distribution for each element of $\varphi$ is a normal distribution, centred at the previous value and with standard deviation 0.25; and the proposal distribution for $\theta$ used in the nested MCMC is a normal distribution, centred at the previous value and with standard deviation $10^{-5}$. The priors for both parameters are uniformly distributed within a compact domain. We set the shrink magnitude $n_0=2000$ and precision parameter $\kappa_p=4$, $p=1,...,20$. The SACut and naive SACut algorithms are processed in parallel on ten cores of Intel Xeon E7-8860 v3 CPU (2.2 GHz) and the (inherently serial) nested MCMC algorithm is processed on a single core. All algorithms were independently run 20 times and the results are the averages across runs. Each run consists $5\times 10^4$ iterations. We retain only every $10^{th}$ value after discarding the first $40\%$ samples as burn-in. 

To assess the performance of these algorithms, we compare their estimation of $\mathbb{E}(\theta)$, lag-1 auto-correlation of samples, the Gelman-Rubin diagnostic statistic $\hat{R}$ \citep{gelman1992} and the average time needed for the whole run. The precision of the estimation of $\theta$ is measured by the mean square error (MSE) across its $d$ (either 1 or 20) components. The convergence is evaluated by averaging the Gelman-Rubin diagnostic statistic of $d$ components.

Results are shown in Table \ref{T1}. The time required to run the nested MCMC algorithm increases as the length of the internal chain or dimension of $\theta$ increases, although the influence of dimension of $\theta$ is relatively small. In a low dimensional case ($d=1$), the time needed to run SACut and naive SACut are more than the time needed to run the WinBUGS algorithm and nested MCMC algorithm when the length of internal chain is less than 500, but both the MSE and the Gelman-Rubin statistic are lower when using the SACut algorithm. In particular, the bias of the WinBUGS algorithm is large. There is only trivial difference in bias between SACut and nested MCMC when $n_{int}\geq 1000$, but SACut is significantly faster than nested MCMC. In the higher dimensional case ($d=20$), both SACut and naive SACut significantly outperform the WinBUGS and nested MCMC algorithm in terms of MSE. Although the difference between SACut and nested MCMC with $n_{int}=1000$ is small, the Gelman-Rubin statistic of the nested MCMC is still larger than the threshold 1.2 suggested by \cite{doi:10.1080/10618600.1998.10474787}. The MCMC chains produced by the nested MCMC converge better and the bias is smaller when $n_{int}\geq 1500$, but the SACut algorithm still outperforms it according to MSE and Gelman-Rubin statistic, and takes less time. It is also clear that nested MCMC samples show very strong auto-correlation for both cases and thinning may not efficiently solve this issue \citep{https://doi.org/10.1111/j.2041-210X.2011.00131.x}; both SACut and naive SACut do not show any auto-correlation. We also note that the estimates provided by SACut and naive SACut are almost identical in practice. However, since the time needed for both algorithm is almost the same, providing the full approach with a more solid theoretical foundation is a valuable contribution to the computational statistics literature for the cut distribution.

\cite{2017arXiv170803625J} recently proposed an unbiased coupling algorithm which can sample from the cut distribution. It requires running coupled Markov chains where samples from each chain marginally target the same true distribution. The estimator is completely unbiased when two chains meet. Drawing samples from the cut distribution using the unbiased coupling algorithm typically involves two stages. In general the first stage involves running coupled chains for $\varphi$ until they meet. For each sampled $\varphi$, the second stage involves running another set of coupled chains for $\theta$ until they meet. Although the algorithm is unbiased, as illustrated in Section 4.2, 4.3 and the discussion of \cite{2017arXiv170803625J}, the number of iterations for coupled chains is determined by meeting times, which can be very large especially when the dimension of the parameter is high. As a comparison, we apply the unbiased coupled algorithm on this example by using the R package ``unbiasedmcmc'' provided by \cite{2017arXiv170803625J}. To simplify the implementation and computation of the unbiased coupling algorithm, we consider a simplified scenario with an informative conjugate prior for $\varphi$, meaning we can omit the first stage and instead directly draw $5\times 10^4$ samples from $p(\varphi|Z)$. This prior is normal with mean equal to the true value of $\varphi$. We then ran preliminary coupled chains for $\theta$ that target $p(\theta|Y,\varphi)$ given these samples of $\varphi$ so as to sample the meeting times. Over the $5\times 10^4$ independent runs, the $95\%$ and $99\%$ quantile of meeting times were 44 and 147 respectively when $d=1$. Although the majority of meeting times are relatively small, their $95\%$ and $99\%$ quantile were 3525 and 5442 respectively when $d=20$. To ensure that the total number of iterations covers the majority of meeting times, following \cite{2017arXiv170803625J}, we set the minimum number of iterations for each coupled chain to 10 times the $95\%$ quantile of meeting times. The algorithm was processed in parallel on the same ten cores as SACut and final result is shown in Table \ref{T1}. Notably, unlike the nested MCMC algorithm, the computational time of the unbiased coupling algorithm increases significantly when the dimension of $\theta$ increases because it takes more time for coupled chains to couple in high dimensional cases. In the low dimensional case ($d=1$), the unbiased coupling algorithm performs better according to all metrics. In the higher dimensional case ($d=20$), the unbiased coupling algorithm achieves similar MSE to the SACut algorithm, but it takes considerably more computation time than SACut, even though the unbiased coupling algorithm was been conducted under a simplified setting (i.e., no coupled chain for $\varphi$).

\begin{table}
\caption{Mean squared error (MSE), lag-1 Auto-correlation (in absolute value) $|\text{AC}|$, Gelman-Rubin statistic $\hat{R}$, and clock time for the Stochastic Approximation Cut Algorithm (SACut), Naive SACut algorithm, WinBUGS algorithm, the nested MCMC algorithm (with varying internal chain length $n_{int}$) and unbiased coupling algorithm. All values are means across 20 independent runs.}
\label{T1}       
\resizebox{\hsize}{!}{$\begin{tabular}{l l r r r r r}
\hline\noalign{\smallskip}
$d$ & Algorithm & $n_{int}$ & MSE $\times 10^3$  & $|\text{AC}|$ &\multicolumn{1}{c}{$\hat{R}$} & Time (mins)  \\
\noalign{\smallskip}\hline\noalign{\smallskip}
& SACut & - & 0.112 &0.019 & 1.00 & 311 \\
& Naive SACut & - & 0.114 &0.016 & 1.00 & 308 \\
& WinBUGS & 1 & 355.280 &0.999 & 308.78 & 1 \\
&Nested MCMC & $10$ & 217.907 &0.999 & 29.87 & 10 \\
&Nested MCMC & $200$ & 0.158 &0.997 & 1.74 & 182 \\
1 &Nested MCMC & $500$ & 0.138 &0.993 & 1.25 & 454 \\
&Nested MCMC & $1000$ & 0.109 &0.990 & 1.07 & 910 \\
&Nested MCMC & $1500$ & 0.113 &0.986 & 1.08 & 1349 \\
&Nested MCMC & $2000$ &  0.118 &0.981 & 1.05 & 1771 \\
& Unbiased Coupling & - & 0.114 &0.012 & 1.01 & 22 \\
&  &  &  & &  & \\
& SACut & - & 1.42 &0.009 & 1.00 & 1239 \\
& Naive SACut & - & 1.47 &0.002 &1.01 &  1219 \\
& WinBUGS & 1 & 16387.69 &0.999 & 209.55 & 2 \\
&Nested MCMC & $10$ & 12490.25 &0.999 & 22.73 & 11 \\
&Nested MCMC & $200$ & 249.18 &0.999 & 2.38 & 259 \\
20 &Nested MCMC & $500$ & 10.76 &0.997 & 1.33 & 517 \\
&Nested MCMC & $1000$ & 1.86 &0.994 & 1.22 & 1010 \\
&Nested MCMC & $1500$ & 1.69 &0.991 & 1.19 & 1515 \\
&Nested MCMC & $2000$ &  1.60 &0.988 & 1.11 & 2058 \\
& Unbiased Coupling & - & 1.36 &0.013 & 1.00 & 2030 \\
\noalign{\smallskip}\hline
\end{tabular}$}
\end{table}

\subsection{Epidemiological Example}
We now consider an epidemiological study of the relation between high risk human papillomaviru (HPV) prevalence and cervical cancer incidence \citep{Maucort-Boulch717}, which was previously discussed by \cite{Plummer2015}. In this study, age-stratified HPV prevalence data and cancer incidence data were collected from 13 cities. The model is divided into two modules. The first module concerns the number of people with HPV infection in city $i$, denoted as $Z_i$, out of a sample of $N_i$ women:
\[
Z_i\sim\text{Bin}(N_i,\varphi_i).
\]
The second module describes the relation between the number of cancer cases $Y_i$ from $T_i$ person-years and incidence which is assumed to be linked with $\varphi_i$ by a log linear relationship:
\[
Y_i \sim \text{Poisson}\left(T_i\left(\exp(\theta_1+\theta_2\varphi_i)\right)\right).
\]
The log-linear dose-response relationship is speculative, so we apply the cut algorithm to prevent the feedback from the second module to the estimation of $\varphi_i$ \citep{Plummer2015}.

We apply the Stochastic Approximation Cut Algorithm and compare results with the standard Bayesian approach (i.e. without a cut). Both algorithms were run 10 times independently, each with $1.4\times 10^5$ iterations. We set the shrink magnitude $n_0=20000$ and precision parameter $\kappa_1=3$ for $\theta_1$ and $\kappa_2=2$ for $\theta_2$. We retain only every $100^{th}$ value after discarding the first $4\times 10^4$ samples as burn-in. The pooled results of $\theta$ are shown in Figure \ref{F7},  highlighting the considerable effect of cutting feedback in this example. Our results are consistent with existing studies: specifically the scatter plot and density plot agree with \cite{jacob2017better} and \cite{2020arXiv200306804C}. Our results are also consistent with the results of nested MCMC algorithm when its internal chain length is largest (see \cite{Plummer2015}). This again shows that the SACut algorithm provides similar estimates to the nested MCMC algorithm with a large internal chain length.

\begin{figure}[tp] 
\centering 
\includegraphics[width=\textwidth]{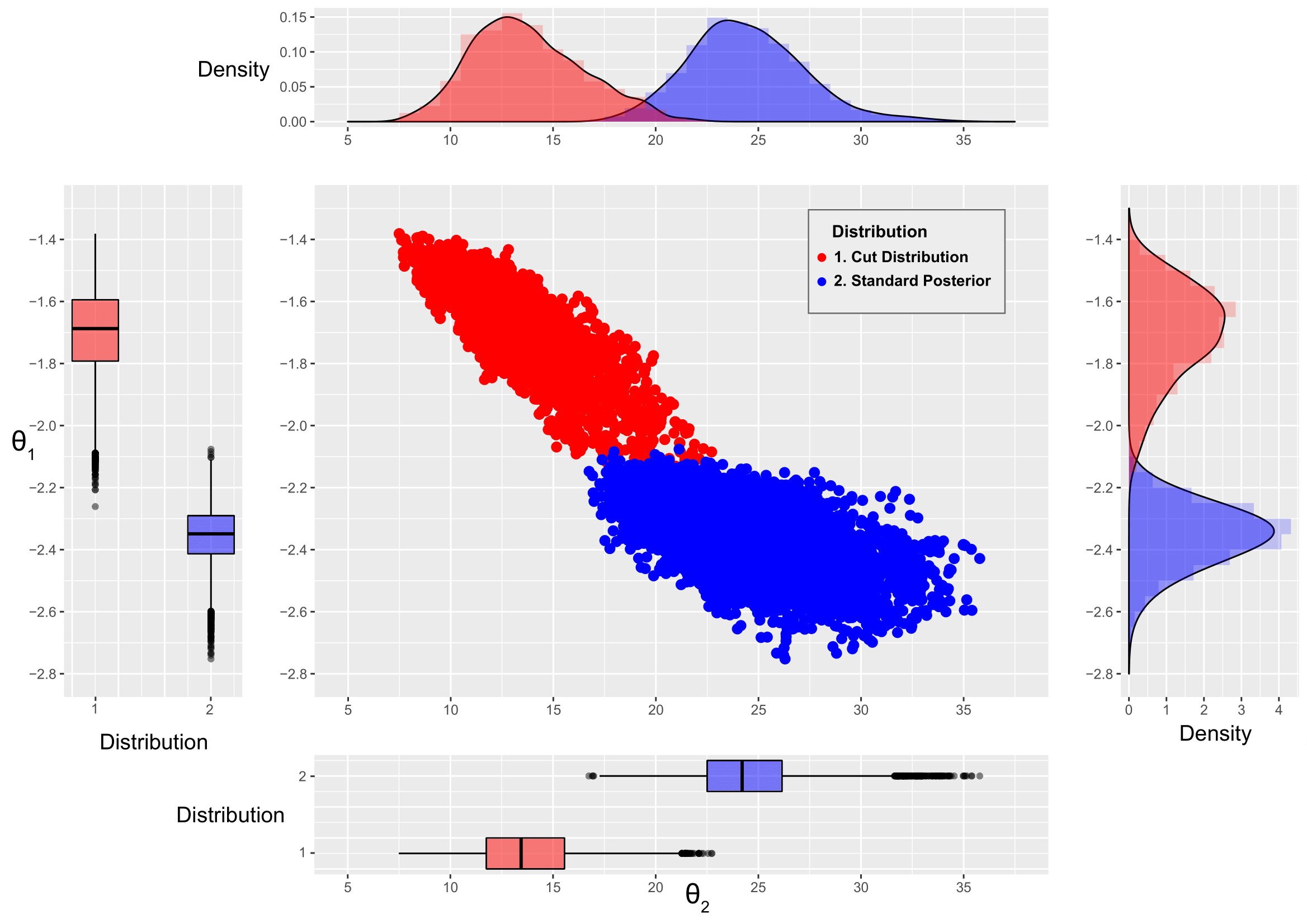} 
\caption[Comparison of the distribution of $\theta_1$ and $\theta_2$ drawn from the cut distribution (red) and standard Bayesian posterior (blue).]{Comparison of the distribution of $\theta_1$ and $\theta_2$ drawn from the cut distribution (red) and standard Bayesian posterior (blue).} 
\label{F7}
\end{figure}

\section{Conclusion}
\label{sec:conc}
We have proposed a new algorithm for approximating the cut distribution that improves on the WinBUGS algorithm and approximate approaches in \cite{Plummer2015}. Our approach approximates the intractable marginal likelihood $p(Y|\varphi)$ using Stochastic Approximation Monte Carlo \citep{doi:10.1198/016214506000001202}. The algorithm avoids the weakness of approximate approaches that insert an ``internal limit'' into each iteration of the main Markov chain. Obviously, one can argue that approximate approaches can be revised by setting the length of the internal chain to the number of iterations, i.e. $n_{int}=n$ so that the internal length diverges with $n$. However, since the sampling at each iteration is still not perfect and bias is inevitably introduced, the convergence of the main Markov chain remains unclear and the potential limit is not known. We proved convergence of the samples drawn by our algorithm and present the exact limit, though its convergence rate is not fully studied and needs further investigations. Although the bias is not completely removed by our algorithm, the degree of the bias is explicit in the sense that the shape of $p^{\kappa}(\theta|Y,\varphi)$ is known since the shape of $p(\theta|Y,\varphi)$ is normally obtainable given a fixed $\varphi$. Corollary \ref{coro2} shows that the bias in our approach can be reduced by increasing the precision parameter $\kappa$. We proposed that $\kappa$ be selected by comparing results across a range of choices; quantitative selection of this precision parameter still needs further study.

Existing approximate approaches \citep{Plummer2015} which need an infinitely long internal chain may be computationally slow, because the internal chain requires sequential calculation so parallelization is not possible. In contrast, thanks to the embarrassingly parallel calculation of \eqref{E12}, our algorithm can be more computationally efficient when multiple computer cores are available, although the per-iteration time of our algorithm decays as the Markov chain runs due to the increasing size of collection of auxiliary variables. 

Lastly, while the adaptive exchange algorithm \citep{doi:10.1080/01621459.2015.1009072} is used for intractable normalizing problems when the normalizing function is an integral with respect to the observed data, it would be interesting to investigate the use of our algorithm for other problems involving a normalizing function that is an integral with respect to the unknown parameter. For example, our algorithm can be directly extended to sample from the recently developed Semi-Modular Inference distribution \citep{2020arXiv200306804C} which generalizes the cut distribution. 

\section*{Supplementary Materials}
The supplementary appendix contains all technical proofs of results stated in the paper.

\section*{Acknowledgement}
The authors thank Daniela De Angelis and Simon R. White for helpful discussions and suggestions. Yang Liu was supported by a Cambridge International Scholarship from the Cambridge Commonwealth, European and International Trust. Robert J.B. Goudie was funded by the UK Medical Research Council [programme code MC\textunderscore UU\textunderscore 00002/2]. The views expressed are those of the authors and not necessarily those of the NIHR or the Department of Health and Social Care.



\appendix
\appendixpage
\addappheadtotoc
\section{The construction of the auxiliary parameter set $\Phi_0$}
According to \cite{doi:10.1080/01621459.2015.1009072}, in order to have a good auxiliary parameter set $\Phi_0$ so that the set of $p(\theta|Y,\varphi)$, $\varphi\in\Phi_0$, reasonably reflects the truth $p(\theta|Y,\varphi)$, where $\varphi\sim p(\varphi|Z)$, two conditions should be satisfied.

\begin{itemize}
\item Full representation: Let $C_{\Phi_0}$ be the convex hull constructed from $\Phi_0$, then we require $\int_{C_{\Phi_0}}p(\varphi|Z)d\varphi \approx 1$. This ensures that the selected $\Phi_0$ has fully represented the original domain of $\varphi$.

\item Reasonable overlap: For neighbouring $\varphi_0^{(i)}$ and $\varphi_0^{(j)}$, we require $p(\theta|Y,\varphi_0^{(i)})$ and $p(\theta|Y,\varphi_0^{(j)})$ should have a reasonable overlap. Hence, the probability of accepting new proposal of $\varphi\in\Phi_0$ is reasonably large given a fixed $\theta$ for the auxiliary chain. This ensures that the auxiliary chain can mix well.
\end{itemize}

The grid $\Phi_0$ is chosen by following the Max-Min procedure \citep{doi:10.1080/01621459.2015.1009072} and the purpose is to use this discrete set as a representation of the domain of $p(\varphi|Z)$. After we have decided the number of auxiliary parameters (that is $m$), the auxiliary parameter set $\Phi_0$ is formed by selecting $\varphi$ from a larger set $\Phi^{(M)}=\{\varphi^{(1)},\cdots,\varphi^{(M)}\}$ (an arbitrary but substantially larger $M>m$) which are drawn from the marginal posterior $p(\varphi|Z)$ by any standard MCMC algorithm. Before starting the iterative process, we standardize all 
$\varphi_0^{(i)}, i=1,\cdots,M$ (i.e, $\varphi_{new} = (\varphi - \varphi_{min})/ (\varphi_{max} - \varphi_{min})$). We then add values to the grid using the following iterative process, initialised by randomly selecting a $\varphi_0^{(1)}$ as the first step. Then suppose at the $k^{th}$ step we have $\Phi_0^{(k)}=\{\varphi_0^{(1)},\cdots,\varphi_0^{(k)}\}$. For each $\varphi\in \Phi^{(M)}$ that has not yet been selected, we calculate the minimum distance to the set $\Phi_0^{(k)}$ according to some pre-defined distance measure (e.g., Euclidean distance). This is the `Min' process. We then find the $\varphi$ that has not been selected but has the maximum distance to the set $\Phi_0^{(k)}$. This is the `Max' process. We then add this particular $\varphi$ into $\Phi_0^{(k)}$ to form $\Phi_0^{(k+1)} = \Phi_0^{(k)} \cup \{\varphi\}$.

Here we use a toy example to illustrate the procedure of selecting $m$. To make clear visualization of $\varphi$ straightforward,  we use the example derived from Section 4.1 in the main text but reduce the dimension of $\varphi$ to 2. 

We first draw a large number ($M$) of samples of $\varphi$ from its marginal posterior $p(\varphi|Z)$ by a standard MCMC method and pool all samples together as a benchmark sample set. This step is feasible because $p(\varphi|Z)$ is a standard posterior distribution so there is no double intractability. If $M$ is large enough, it is appropriate to regard the convex hull of the benchmark sample set as a good approximation of the true domain of $p(\varphi|Z)$.

Next, we select several candidates values of $m$. In this illustration, we simply select $m$ from $\{10,20,50,100,500\}$. Given a selected value of $m$, we use the Max-Min procedure \citep{doi:10.1080/01621459.2015.1009072} to construct the $\Phi_0^{(m)}$, and calculate its corresponding convex hull $C_{\Phi_0^{(m)}}$ numerically using the R package \textit{geometry}. See the attached Figure \ref{FF1}. It can be seen that the shape of the convex hull approximates the benchmark convex hull increasingly accurately as $m$ increases. 

\begin{figure}[thp]
\centering 
\includegraphics[angle=270,scale=0.5]{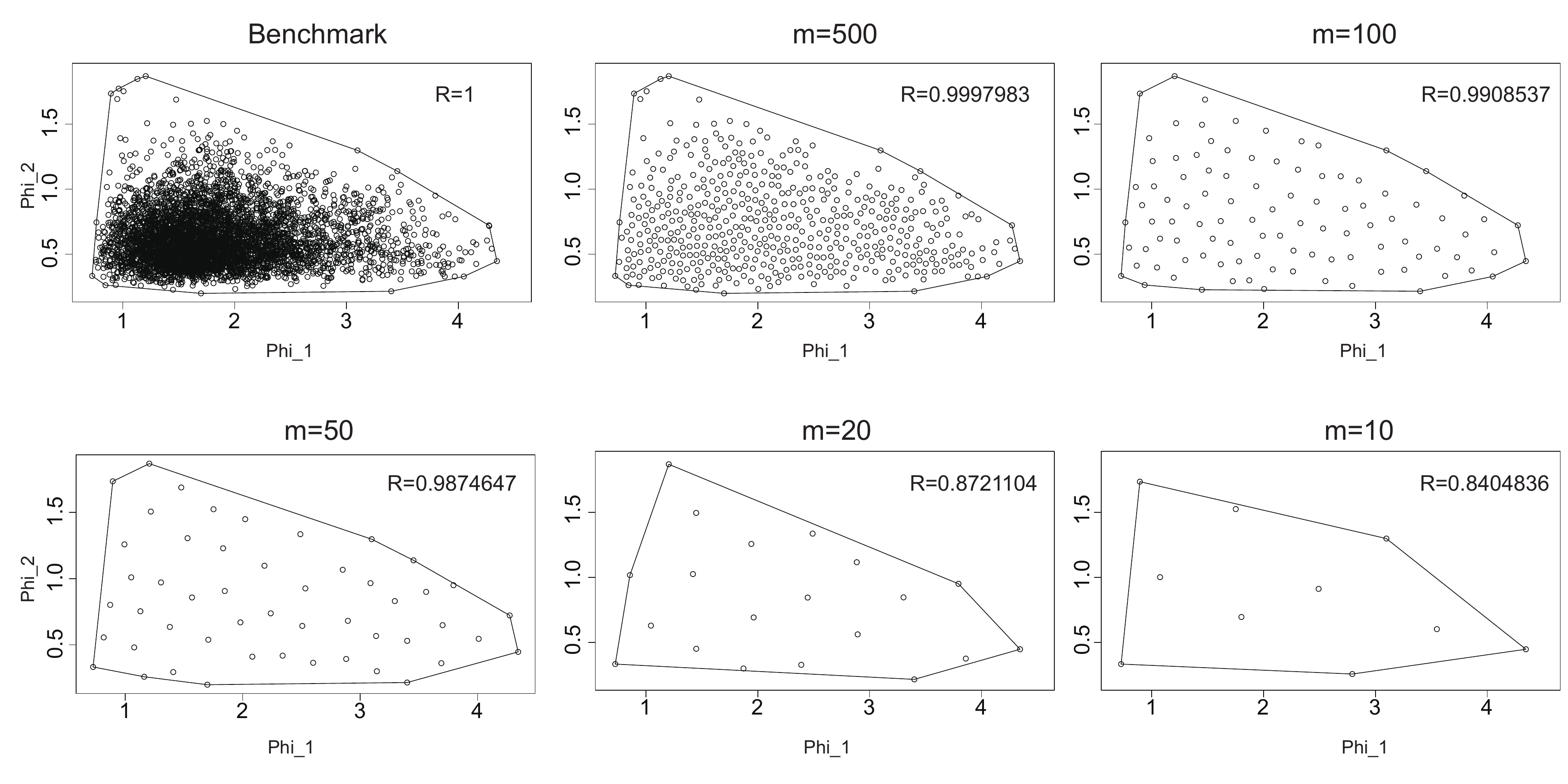}
\caption[Convex hull]{Convex hull $C_{\Phi_0}$ (indicated by the black circle) and samples of $\varphi$ (indicated by dots) when $m=$ 10, 20, 50, 100, 500 and Benchmark. The coverage ratio is shown on the upper right corner of each subplot.} 
\label{FF1}
\end{figure}

Let $C_{\Phi_0^{(M)}}$ be the benchmark convex hull, and define the coverage ratio $R_m$ by:
\begin{equation}
R_m = \frac{\int_{C_{\Phi_0^{(m)}}}p(\varphi|Z)d\varphi}{\int_{C_{\Phi_0^{(M)}}}p(\varphi|Z)d\varphi},
\end{equation}
where $R_m$ can be numerically approximated using Monte Carlo. The $m$ can be selected by choosing the smallest $m$ that gives a high coverage ratio $R_m$ (e.g., $R_m > 0.95$). In this example, since the the coverage ratio is $>0.98$ we choose $m=50$ to satisfy the first condition (Full representation).

We then check whether this value of $m$ also satisfies the second condition (Reasonable overlap). This involves checking that the $m$ different $p(\theta|Y,\varphi_0^{(i)})$, where $\varphi_0^{(i)}\in \Phi_0^{(m)}$, overlap sufficiently. To do this, for each $\varphi_0^{(i)}$, we draw samples of $\theta$ using a standard MCMC method and compare the empirical distribution of $\theta\sim p(\theta|Y,\varphi_0^{(i)})$ with the empirical distribution of $\theta\sim p(\theta|Y,\varphi_0^{(i1)})$ and $p(\theta|Y,\varphi_0^{(i2)})$, where $\varphi_0^{(i1)}$ and $\varphi_0^{(i2)}$ are the closest and second closest values in $\Phi_0^{(m)}$ to $\varphi_0^{(i)}$ in terms of Euclidean distance. This can be visually shown as a grouped box-plot in Figure \ref{FF2}. It is clear that the majority part (inter-quartile area) of the empirical distribution of $\theta$ given each $\varphi_0^{(i)}\in \Phi_0^{(m)}$ overlaps inter-quartile area given its neighbouring $\varphi_0^{(i1)}$ and $\varphi_0^{(i2)}$. Hence, it is appropriate to argue that $m=50$ satisfies the second condition (Reasonable overlap). 

\begin{figure}[thp]
\centering 
\includegraphics[scale=0.95]{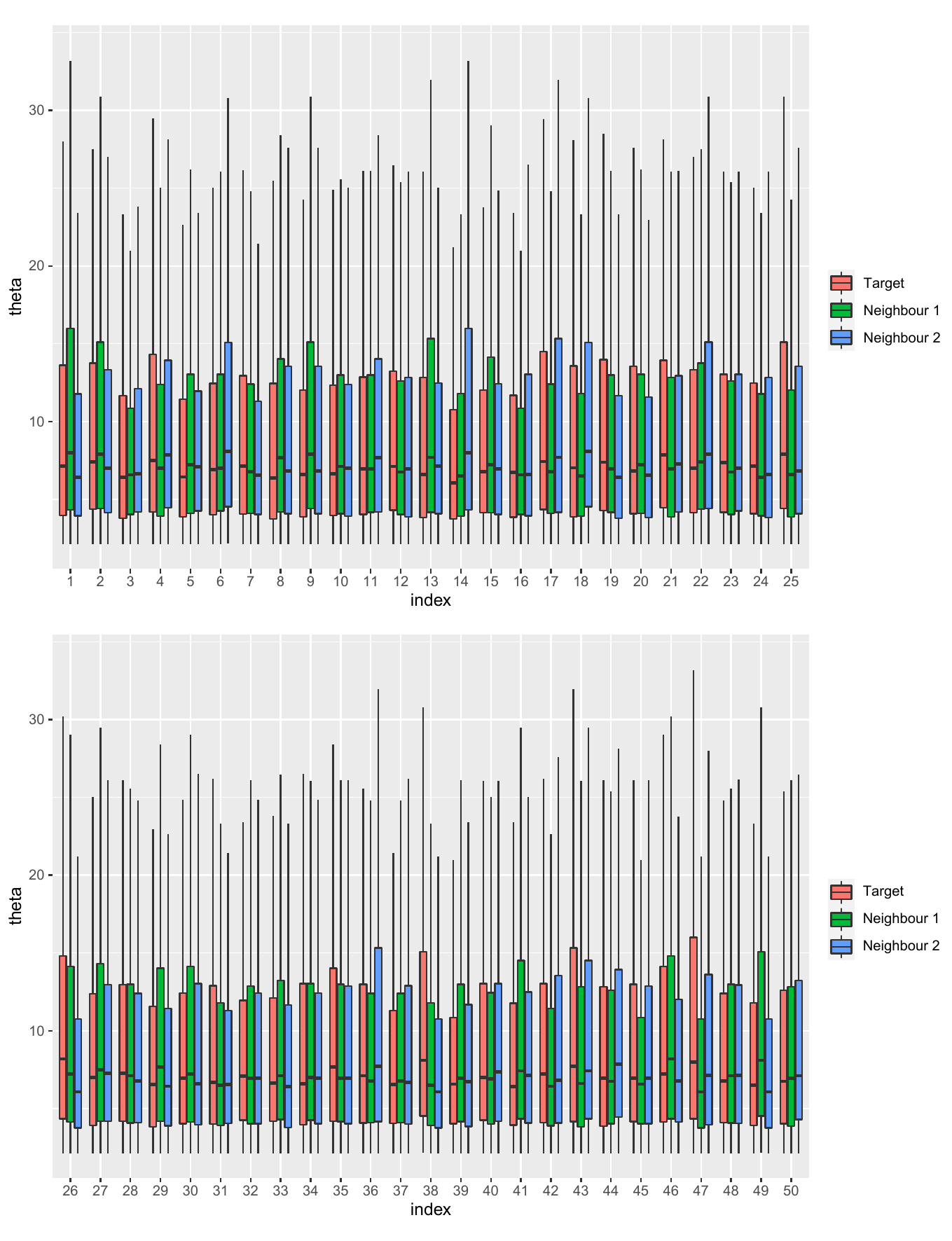}
\caption[Boxplot]{Boxplot of the empirical distribution of $\theta$ given $\varphi$ (i.e., $p(\theta|Y,\varphi)$). For an arbitrary fixed index $i$, the red boxplot refers to $\varphi_0^{(i)}$. The green boxplot refers to $\varphi_0^{(i1)}$ (the closest neighbour). The blue boxplot refers to $\varphi_0^{(i2)}$ (the second closest neighbour).} 
\label{FF2}
\end{figure}

In general, $m$ grows with the dimension of $\varphi$. However, the exact relationship between them depends on the context of the real problem. Also, a large $m$ is not a necessary condition for proposed theorems to be theoretically valid. This is because $m$ is involved in the construction of the proposal distribution for the importance sampling procedure that forms the numerator and denominator of $P_n^\ast(\theta|Y,\varphi)$ and we could use any proposal distribution only if it has a correct domain, although a proposal distribution based on a smaller $m$ could leads to a slow convergence of the auxiliary chain. In the special case when the marginal cut distribution $p(\theta|Y,\varphi)$ is not sensitive to the change of $\varphi$, a small $m$ might be good enough. A larger $m$ will bring fewer benefits when two conditions hold in practice. This differs significantly from the fact that we always require $n$ to go to infinity. Hence, increasing $m$ has a diminishing marginal utility after two conditions hold. Notwithstanding, when the dimension of $\varphi$ increases, it will be more difficult to check whether two conditions hold because the numerical calculation of the convex hull will become extremely time-consuming and also checking for overlap visually will be infeasible. A more practical way could be simply checking whether the auxiliary chain can converge well given a reasonable time period by running some preliminary trials as suggested in \cite{doi:10.1080/01621459.2015.1009072}. 

\section{Construction of $P_n^\ast(\theta|Y,\varphi)$}
Given a particular $\varphi^\prime$, measure $P_n^\ast(\theta|Y,\varphi)$ is actually a weighted average of $\mathbbm{1}_{\{\theta\in\mathcal{B}\}}$. It is used to approximate $\int_\mathcal{B} p(Y|\theta,\varphi^\prime)p(\theta)d\theta/p(Y|\varphi^\prime)$, where numerator and denominator are separately approximated by dynamic importance sampling. The only difference is the domain of integration (i.e., $\mathcal{B}$ versus $\Theta$). Here, we only show the numerator. We have
\[
\int_\mathcal{B} p(Y|\theta,\varphi^\prime)p(\theta) d\theta \propto \mathbb{E}_{\theta}\left(\mathbbm{1}_{\{\theta\in\mathcal{B}\}}\right),\ \textbf{where}\ \theta\sim \frac{p(Y|\theta,\varphi^\prime)p(\theta)}{K(\varphi^\prime)},
\]
where $K(\varphi^\prime)$ is a intractable normalizing constant of the target distribution. Hence, we cannot sample $\theta$ from $p(Y|\theta,\varphi^\prime)p(\theta)/K(\varphi^\prime)$. We have shown in the main text that, when iteration number $j$ is large enough, we actually sample $(\tilde{\theta},\tilde{\varphi})$ from an iteration-specific proposal distribution
\[
p_{j}(\tilde{\theta},\tilde{\varphi})\propto\sum_{i=1}^m \frac{p(Y|\tilde{\theta},\varphi_0^{(i)})p(\tilde{\theta})}{\tilde{w}_{j-1}^{(i)}}\mathbbm{1}_{\{\tilde{\varphi}=\varphi_0^{(i)}\}},\ \tilde{\theta}\in\Theta,\ \tilde{\varphi}\in\Phi_0.
\]
Hence, we have
\[
\begin{aligned}
&\mathbb{E}_{\theta}\left(\mathbbm{1}_{\{\theta\in\mathcal{B}\}}\right) = \mathbb{E}_{\tilde{\theta},\tilde{\varphi}}\left(\mathbbm{1}_{\{\tilde{\theta}\in\mathcal{B}\}}\frac{p(Y|\tilde{\theta},\tilde{\varphi}^\prime)p(\tilde{\theta})}{p_{j}(\tilde{\theta},\tilde{\varphi})K(\varphi^\prime)}\right) \\
&\ =\mathbb{E}_{\tilde{\theta},\tilde{\varphi}}\left\{\mathbbm{1}_{\{\tilde{\theta}\in\mathcal{B}\}} \frac{p(Y|\tilde{\theta},\varphi^\prime)p(\tilde{\theta})}{K(\varphi^\prime)} \left(\sum_{i=1}^m  \frac{\tilde{w}_{j-1}^{(i)}}{p(Y|\tilde{\theta},\varphi_0^{(i)})p(\tilde{\theta})}\mathbbm{1}_{\{\varphi_0^{(i)}=\tilde{\varphi}\}} \right)\right\},\ \textbf{where}\ (\tilde{\theta},\tilde{\varphi})\sim p_{j}(\tilde{\theta},\tilde{\varphi}).
\end{aligned}
\]
The above expectation can be approximated by a step $j$ single sample Monte Carlo estimator
\[
\frac{1}{K(\varphi^\prime)}\sum_{i=1}^m \tilde{w}_{j-1}^{(i)} \frac{p(Y|\tilde{\theta}_j,\varphi^\prime)}{p(Y|\tilde{\theta}_j,\varphi_0^{(i)})} \mathbbm{1}_{\{\tilde{\theta}_j\in\mathcal{B},\varphi_0^{(i)}=\tilde{\varphi}_j\}},\ \textbf{where}\ (\tilde{\theta}_j,\tilde{\varphi}_j)\sim p_{j}(\tilde{\theta},\tilde{\varphi}).
\]
Note that, $K(\varphi^\prime)$ will be canceled out in the denominator and numerator of measure (5) so we can omit it. Now we have a total of $n$ samples $\{(\tilde{\theta}_j,\tilde{\varphi}_j)\}_{j=1}^n$, we then add all single sample Monte Carlo estimators and calculate the average
\[
\frac{1}{n}\sum_{j=1}^n\sum_{i=1}^m \tilde{w}_{j-1}^{(i)} \frac{p(Y|\tilde{\theta}_j,\varphi^\prime)}{p(Y|\tilde{\theta}_j,\varphi_0^{(i)})} \mathbbm{1}_{\{\tilde{\theta}_j\in\mathcal{B},\varphi_0^{(i)}=\tilde{\varphi}_j\}},\ \textbf{where}\ (\tilde{\theta}_j,\tilde{\varphi}_j)\sim p_{j}(\tilde{\theta},\tilde{\varphi}).
\]
Similarly, $1/n$ will be canceled out.

\section{Density Function Approximation by Simple Function}
Here, we show how a density function $f$ can be approximated by a simple function that is constant on a hypercube. We show that the degree of approximation can be easily controlled, and is dependent on the gradient of $f$. The use of a simple function to approximate a density function has been discussed previously \citep{Fu2002, doi:10.1080/03610910802657904}, but here we use a different partition of the support of the function, determined by rounding to a user-specified number of decimal places. 

For a compact set $\Psi\subset \mathbb{R}^d$ with dimension $d$, define a map $\mathcal{R}_\kappa:\ \Psi\rightarrow\Psi$ that rounds every element of $\psi\in\Psi$ to $\kappa$ decimal places, where $\kappa\in\mathbb{Z}$, as $\mathcal{R}_\kappa(\psi)=\lfloor10^\kappa\psi+0.5\rfloor/10^\kappa.$ Since $\Psi$ is compact, $\mathcal{R}_\kappa(\Psi)$ is a finite set and we let $R_\kappa$ denote its cardinality. We partition $\Psi$ in terms of (partial) hypercubes $\Psi_r$ whose centres $\psi_r\in\mathcal{R}_\kappa(\Psi)$ are the rounded elements of $\Psi$,
\begin{equation}
\Psi_r=\Psi\cap\{\psi:\left\|\psi-\psi_r\right\|_\infty\leq 5\times10^{-\kappa-1}\},\ r=1,...,R_\kappa,
\label{EE15}
\end{equation}
and the boundary set $\bar{\Psi}_\kappa$,
\begin{equation}
\bar{\Psi}_\kappa=\Psi\cap\left(\bigcup_{r=1}^{R_\kappa} \{\psi:\left\|\psi-\psi_r\right\|_\infty= 5\times10^{-\kappa-1}\}\right).
\label{EE16}
\end{equation}
It is clear that $\bigcup_{r=1}^{R_\kappa}\Psi_r=\Psi$. Hence $\{\Psi_r\backslash \bar{\Psi}_\kappa\}_{r=1}^{R_\kappa}$ and $\bar{\Psi}_\kappa$ form a partition of $\Psi$.

Using this partition, we are able to construct a simple function density that approximates a density function. Let $\mathcal{C}$ be the set of all continuous and integrable probability density functions $f:\Psi\rightarrow \mathbb{R}$, and let $\mathcal{S}$ be the set of all simple functions $f:\Psi \rightarrow \mathbb{R}$. Define a map $\mathcal{S}_\kappa:\ \mathcal{C}\rightarrow \mathcal{S}$ for any $f\in \mathcal{C}$ as
\[
\mathcal{S}_\kappa(f)(\psi)=\sum_{r=1}^{R_\kappa}\frac{1}{\mu(\Psi_r)}\int_{\Psi_r}f(\psi^\prime)d\psi^\prime \mathbbm{1}_{\{\psi\in\Psi_r\}},\ \forall \psi\in\Psi .
\]
The sets $\Psi_r$, $r=1,...,R_\kappa$, are the level sets of the simple function approximation, and the value $\mathcal{S}_\kappa(f)(\psi)$, $\psi\in\Psi\backslash \bar{\Psi}_\kappa$, is the (normalized) probability of a random variable with density $f$ taking a value in $\Psi_r$, $r=1,...,R_\kappa$. Note that, when $\Psi_r$ is a full hypercube, $\mu(\Psi_r)=10^{-d\kappa}$; and if the set $\Psi$ is known, then $\mu(\Psi_r)$ is obtainable for partial hypercubes. Figure \ref{FF3} illustrates how this simple function approximates the truncated standard normal density function $f_{\text{norm}}:\ [-4,4]\rightarrow\mathbb{R}$, when $\kappa=0$ and $\kappa=1$. Note that this is the optimal simple function for the approximation in terms of  Kullback-Leibler  divergence \citep{doi:10.1080/03610910802657904}.

\begin{figure}[t] 
\centering 
\includegraphics[scale=0.3]{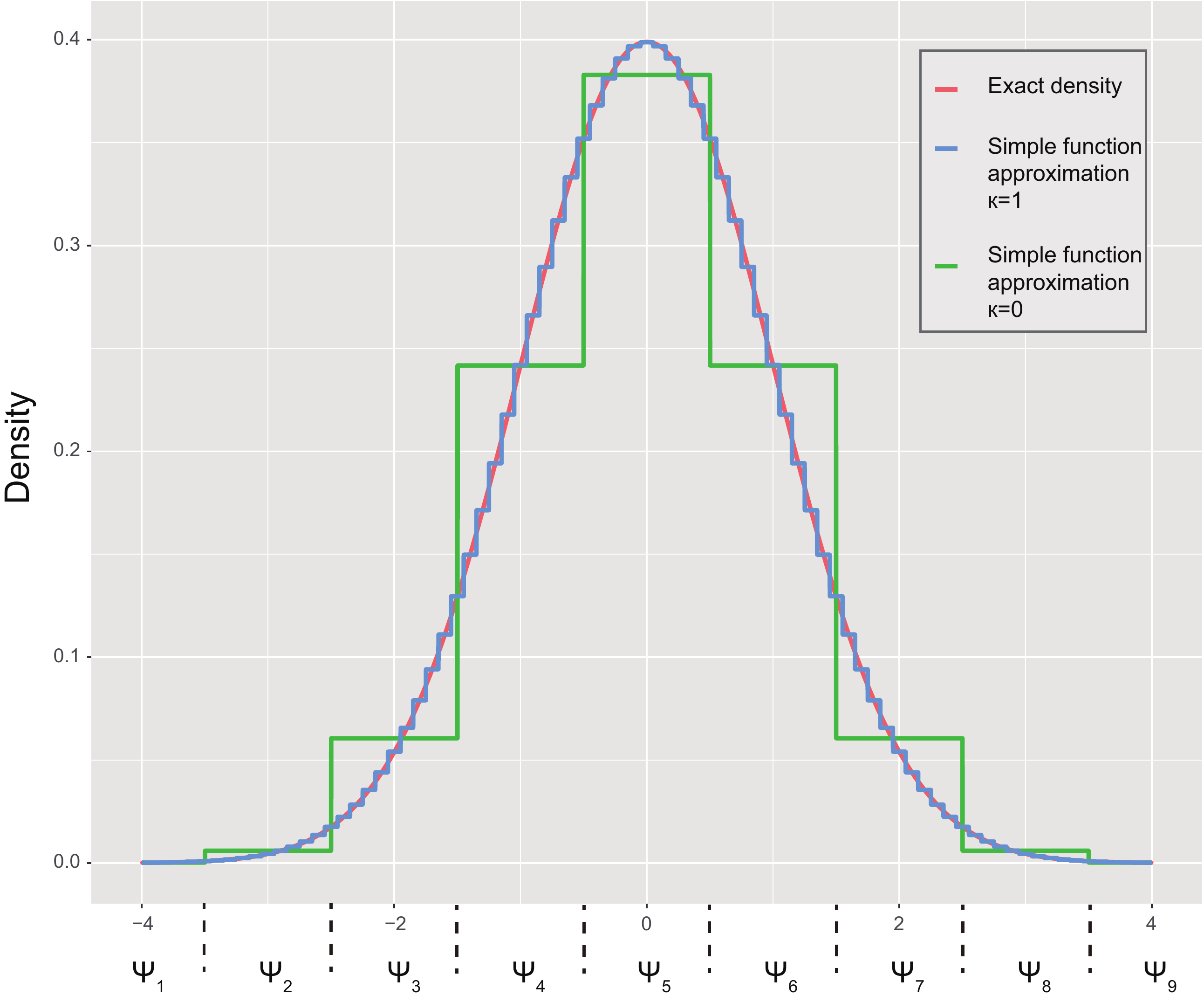} 
\caption[Simple function approximation of a truncated normal distribution.]{Simple function approximation of a truncated normal distribution. When $\kappa=0$ the sets $\Psi_1=[-4,-3.5]$, $\Psi_2=[-3.5,-2.5]$,..., $\Psi_{8}=[2.5,3.5]$, $\Psi_{9}=[3.5,4]$ are the intervals partitioning $[-4,4]$ and $\bar{\Psi}_0=\{-3.5,-2.5,...,2.5,3.5\}$.} 
\label{FF3}
\end{figure}

Since $\mu(\bar{\Psi}_\kappa)=0$, it is clear that
\[
\int_{\Psi}\mathcal{S}_\kappa(f)(\psi)d\psi=\int_{\Psi}f(\psi)d\psi=1.
\]
Hence, $\mathcal{S}_\kappa(f)$ is a well-defined density function. We have the following theorem.

\begin{theorem}
\label{them1}
Given any continuous density function $f$,
\[
\mathcal{S}_\kappa(f)\xrightarrow{\text{a.s.}}f,\ \ \ \text{as}\ \kappa\rightarrow\infty.
\]
\end{theorem}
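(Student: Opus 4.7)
The plan is to establish pointwise convergence on a full-Lebesgue-measure subset of $\Psi$ via a direct local-averaging argument that exploits the uniform shrinkage of the partition cells and the continuity of $f$. First I would set
\[
\Psi^{\ast} \;=\; \Psi \setminus \bigcup_{\kappa=1}^{\infty} \bar{\Psi}_\kappa
\]
and observe that $\mu(\Psi^{\ast}) = \mu(\Psi)$: each $\bar{\Psi}_\kappa$ is a finite union of intersections of $\Psi$ with the boundary hyperplanes $\{\psi : \|\psi - \psi_r\|_\infty = 5 \times 10^{-\kappa - 1}\}$ and therefore has Lebesgue measure zero, so a countable union of null sets is still null.

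Next, for any $\psi \in \Psi^{\ast}$ and any $\kappa$, let $r(\psi,\kappa)$ denote the unique index with $\psi \in \Psi_{r(\psi,\kappa)} \setminus \bar{\Psi}_\kappa$; uniqueness is precisely what excising the boundary set buys us. Writing the approximation in the form of an average over this cell, we get
\[
\mathcal{S}_\kappa(f)(\psi) - f(\psi) \;=\; \frac{1}{\mu(\Psi_{r(\psi,\kappa)})} \int_{\Psi_{r(\psi,\kappa)}} \bigl( f(\psi') - f(\psi) \bigr)\, d\psi',
\]
which is bounded in modulus by the local oscillation $\sup_{\psi' \in \Psi_{r(\psi,\kappa)}} |f(\psi') - f(\psi)|$.

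Now I would invoke continuity of $f$. By the triangle inequality and definition \eqref{EE15}, the $\ell_\infty$-diameter of any cell $\Psi_{r(\psi,\kappa)}$ is at most $10^{-\kappa}$, so for any fixed $\epsilon > 0$ and $\psi \in \Psi^{\ast}$, continuity of $f$ at $\psi$ supplies a $\delta > 0$ with $|f(\psi') - f(\psi)| < \epsilon$ whenever $\|\psi' - \psi\|_\infty \leq \delta$; choosing $\kappa$ large enough that $10^{-\kappa} \leq \delta$ yields $|\mathcal{S}_\kappa(f)(\psi) - f(\psi)| \leq \epsilon$. Letting $\epsilon \downarrow 0$ delivers $\mathcal{S}_\kappa(f)(\psi) \to f(\psi)$ for every $\psi \in \Psi^{\ast}$, i.e.\ the desired almost-everywhere convergence.

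I do not anticipate a real obstacle; the only delicate bookkeeping concerns points on cell boundaries (where the index $r(\psi,\kappa)$ would otherwise be ambiguous and $\mathcal{S}_\kappa(f)(\psi)$ multiply defined), and this is handled once and for all by excising the null set $\bigcup_{\kappa} \bar{\Psi}_\kappa$. Partial hypercubes arising near the boundary of $\Psi$ only shrink the cells further, so the diameter bound $10^{-\kappa}$ still applies and the argument goes through unchanged. (As a small bonus, because $\Psi$ is compact $f$ is uniformly continuous, so the $\kappa$ threshold can be chosen independently of $\psi$; this would upgrade the conclusion to uniform convergence on $\Psi^{\ast}$, though it is not needed for the stated theorem.)
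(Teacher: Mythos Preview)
Your proposal is correct and follows essentially the same approach as the paper: excise a Lebesgue-null set so that each remaining point lies in a unique cell, then bound $|\mathcal{S}_\kappa(f)(\psi)-f(\psi)|$ by the oscillation of $f$ over that cell and use continuity together with the shrinking diameter $10^{-\kappa}$ to conclude. The only cosmetic differences are that the paper takes the null set to be the points with at least one rational coordinate (a subset of your $\bigcup_\kappa \bar{\Psi}_\kappa$) and invokes the integral mean value theorem rather than your direct averaging bound; neither difference is substantive.
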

\begin{proof}
Let $\mathbb{Q}$ be the set of all rational numbers in $\mathbb{R}$ and hence $\mathbb{Q}^c$ is the set of all irrational numbers in $\mathbb{R}$. Let $\mathscr{E}={\mathbb{Q}^c}^d\cap\Psi$ and it is easy to see that $\mu(\mathscr{E})=\mu(\Psi)$ since $\mu(\mathbb{Q})=0$. We first show that, $\forall \kappa<\infty$ and $\forall \psi\in\mathscr{E}$, we have $\psi\notin\bar{\Psi}_\kappa$.

Given a $\kappa<\infty$, every element of set $\mathcal{R}_\kappa(\Psi)$ is a $d$-dimensional rational vector. We also have that $5\times10^{-\kappa-1}$ is a rational number. Therefore, at least one element of $d$-dimensional vector $\psi$ is a rational number if $\psi\in\bigcup_{r=1}^{R_\kappa} \{\psi:\left\|\psi-\psi_r\right\|_\infty= 5\times10^{-\kappa-1}\}$. Now $\forall \psi\in\mathscr{E}$, because $\psi$ is a $d$-dimensional irrational vector, $\psi\notin\bigcup_{r=1}^{R_\kappa} \{\psi:\left\|\psi-\psi_r\right\|_\infty= 5\times10^{-\kappa-1}\}$, and hence $\psi\notin\bar{\Psi}_\kappa$.

Now given a fixed $\kappa<\infty$, $\forall \psi\in\mathscr{E}$, since $\psi\notin\bar{\Psi}_\kappa$, $\psi$ is always in the inner set of one of $\Psi_r$, $r=1,...,R_\kappa$. Re-write this $\Psi_r$ as $\Psi_\psi^{(\kappa)}$. Since the set $\Psi_\psi^{(\kappa)}$ is compact and function $f$ is continuous, we have $f_{\psi,min}=\min_{y\in\Psi_\psi^{(\kappa)}}f(y)$ and $f_{\psi,max}=\max_{y\in\Psi_\psi^{(\kappa)}}f(y)$. By the first mean value theorem, there is a $\psi^\ast\in\Psi_\psi^{(\kappa)}$ with $f_{\psi,min}\leq f(\psi^\ast)\leq f_{\psi,max}$, such that
\[
\mathcal{S}_\kappa(f)(\psi)=\frac{1}{\mu(\Psi_\psi^{(\kappa)})}\int_{\Psi_\psi^{(\kappa)}}f(y)dy=f(\psi^\ast)\frac{1}{\mu(\Psi_\psi^{(\kappa)})}\int_{\Psi_\psi^{(\kappa)}}dy=f(\psi^\ast).
\]

It is clear that, when $\kappa$ increases, $\mu(\Psi_\psi^{(\kappa)})$ monotonically decreases since $\Psi_\psi^{(\kappa+1)}\subset\Psi_\psi^{(\kappa)}$ (i.e. a much smaller hypercube is formed). This leads to the fact that $(f_{\psi,max}-f_{\psi,min})$ monotonically decreases to 0. Hence, there is a $N$ such that $\forall \kappa>N$, $(f_{\psi,max}-f_{\psi,min})\leq\varepsilon$. Then we have $\forall \kappa>N$,
\[
|\mathcal{R}_\kappa^\ast(f)(\psi)-f(\psi)|=|f(\psi^\ast)-f(\psi)|\leq (f_{\psi,max}-f_{\psi,min})\leq\varepsilon.
\]
Hence,
\[
\mathcal{S}_\kappa(f)\xrightarrow{\text{a.s.}}f,\ \ \ \text{as}\ \kappa\rightarrow\infty.
\]
\end{proof}

When the density function $f$ is also continuously differentiable, we can obtain the following result about the rate of convergence. 

\begin{corollary}
\label{scoro1}
Given a density function $f$ that is continuously differentiable, there exists a set $\mathscr{E}\subset\Psi$ with $\mu(\mathscr{E})=\mu(\Psi)$ such that the local convergence holds:
\[
|\mathcal{S}_\kappa(f)(\psi)-f(\psi)|\leq (\varepsilon(\psi,\kappa)+\left\|\nabla f(\psi)\right\|_2)\frac{\sqrt{d}}{10^\kappa},\ \forall \psi\in\mathscr{E},
\] 
where $\varepsilon(\psi,\kappa)\rightarrow 0$ as $\kappa\rightarrow\infty$.

In addition, the global convergence holds:
\[
\sup_{\psi\in\mathscr{E}}|\mathcal{S}_\kappa(f)(\psi)-f(\psi)|\leq \sup_{\psi\in\Psi}\left\|\nabla f(\psi)\right\|_2 \frac{\sqrt{d}}{10^\kappa}.
\]
\end{corollary}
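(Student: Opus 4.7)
}
The plan is to inherit the set $\mathscr{E}$ introduced in the proof of Theorem \ref{them1}, namely $\mathscr{E}=(\mathbb{Q}^c)^d\cap\Psi$, which has $\mu(\mathscr{E})=\mu(\Psi)$ and is disjoint from $\bar{\Psi}_\kappa$ for every finite $\kappa$. Fix $\psi\in\mathscr{E}$; then $\psi$ lies in the interior of a unique hypercube $\Psi_\psi^{(\kappa)}\in\{\Psi_r\}_{r=1}^{R_\kappa}$, so $\mathcal{S}_\kappa(f)(\psi)=\mu(\Psi_\psi^{(\kappa)})^{-1}\int_{\Psi_\psi^{(\kappa)}}f(y)\,dy$. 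Writing the difference as an average,
\[
\mathcal{S}_\kappa(f)(\psi)-f(\psi)=\frac{1}{\mu(\Psi_\psi^{(\kappa)})}\int_{\Psi_\psi^{(\kappa)}}\bigl(f(y)-f(\psi)\bigr)\,dy,
\]
reduces the problem to bounding $|f(y)-f(\psi)|$ uniformly for $y\in\Psi_\psi^{(\kappa)}$.

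Next, since $f$ is continuously differentiable (and we may assume, as is standard in the cut-distribution setting, that $\Psi$ is convex so that the segment joining $\psi$ and $y$ lies in $\Psi$), the multivariate mean value theorem supplies a point $\xi_y$ on this segment with $f(y)-f(\psi)=\nabla f(\xi_y)^\top(y-\psi)$. Because $\psi,y\in\Psi_\psi^{(\kappa)}$, we have $\|y-\psi\|_\infty\leq 10^{-\kappa}$, hence
\[
\|y-\psi\|_2\leq\sqrt{d}\,\|y-\psi\|_\infty\leq\frac{\sqrt{d}}{10^\kappa},
\]
and Cauchy--Schwarz gives $|f(y)-f(\psi)|\leq\|\nabla f(\xi_y)\|_2\,\sqrt{d}/10^\kappa$.

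For the global bound, I would substitute $\|\nabla f(\xi_y)\|_2\leq\sup_{\xi\in\Psi}\|\nabla f(\xi)\|_2$, which is finite because $\nabla f$ is continuous on the compact set $\Psi$, and then average over $y\in\Psi_\psi^{(\kappa)}$; the resulting bound is independent of $\psi$, giving the claimed uniform inequality on $\mathscr{E}$. For the local bound, I would instead split $\|\nabla f(\xi_y)\|_2\leq\|\nabla f(\psi)\|_2+\|\nabla f(\xi_y)-\nabla f(\psi)\|_2$ and define
\[
\varepsilon(\psi,\kappa):=\sup_{y\in\Psi_\psi^{(\kappa)}}\|\nabla f(\xi_y)-\nabla f(\psi)\|_2,
\]
which is finite by continuity. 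Since $\xi_y$ lies within Euclidean distance $\sqrt{d}/10^\kappa$ of $\psi$ and $\Psi_\psi^{(\kappa+1)}\subset\Psi_\psi^{(\kappa)}$, continuity of $\nabla f$ at $\psi$ forces $\varepsilon(\psi,\kappa)\to 0$ as $\kappa\to\infty$, completing the local estimate.

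The only genuine obstacle is ensuring the MVT segment stays inside a domain where $\nabla f$ is defined and that $\varepsilon(\psi,\kappa)$ really shrinks; both are handled by combining the hypercube's $\ell_\infty$ diameter $10^{-\kappa}$ with continuity of $\nabla f$ on compact $\Psi$. Everything else is routine averaging. Note the constant $\sqrt{d}/10^\kappa$ in the statement matches exactly the $\ell_2$ diameter of a full hypercube $\Psi_r$, so no slack is lost in the proof.
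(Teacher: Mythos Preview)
Your proposal is correct and follows essentially the same route as the paper: both arguments use the set $\mathscr{E}=(\mathbb{Q}^c)^d\cap\Psi$ from Theorem~\ref{them1}, apply the mean value theorem together with Cauchy--Schwarz to produce the factor $\sqrt{d}/10^\kappa$, split off $\|\nabla f(\psi)\|_2$ via the triangle inequality for the local bound, and take the supremum of $\|\nabla f\|_2$ for the global bound. The only cosmetic difference is that the paper first bounds $|\mathcal{S}_\kappa(f)(\psi)-f(\psi)|$ by the oscillation $f_{\psi,\max}-f_{\psi,\min}$ (via the integral mean value theorem, as in Theorem~\ref{them1}) and then applies the MVT once between the extremal points $\psi_{\max},\psi_{\min}$, whereas you write the difference directly as an average and apply the MVT pointwise in $y$; accordingly your $\varepsilon(\psi,\kappa)$ is defined relative to $\psi$ while the paper's is the full oscillation $\sup_{a,b\in\Psi_\psi^{(\kappa)}}\|\nabla f(a)-\nabla f(b)\|_2$, but both vanish as $\kappa\to\infty$ by the same continuity argument.
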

\begin{proof}
Following the result of Theorem 1, for a given $\psi\in\mathscr{E}$, we have
\[
|\mathcal{S}_\kappa(f)(\psi)-f(\psi)|\leq (f_{\psi,max}-f_{\psi,min}).
\]
Since $f$ has a continuous gradient on a compact set, then by the mean value theorem we have:
\[
(f_{\psi,max}-f_{\psi,min})=|\langle\nabla f(y), (\psi_{max}-\psi_{min})\rangle|.
\]
where $\langle\cdot,\cdot\rangle$ means inner product, $f(\psi_{max})=f_{\psi,max}$, $f(\psi_{min})=f_{\psi,min}$, $y\in\Psi_\psi^{(\kappa)}$.
By the Cauchy-Schwarz inequality, we have
\[
|\langle\nabla f(y), (\psi_{max}-\psi_{min})\rangle|\leq \left\|\nabla f(y)\right\|_2\times \left\|(\psi_{max}-\psi_{min})\right\|_2
\]

Now we prove the local convergence result. Since $\nabla f$ is continuous on the $d$-dimensional compact set $\Psi$, we can write
\[
\varepsilon(\psi,\kappa)=\sup_{a,b\in\Psi_\psi^{(\kappa)}}\left\|\nabla f(a)-\nabla f(b)\right\|_2.
\]
Since $\mu(\Psi_\psi^{(\kappa)})\rightarrow 0$, it is easy to check that $\varepsilon(\psi,\kappa)\rightarrow 0$ when $\kappa\rightarrow\infty$. Moreover, we have both $\psi_{max}$ and $\psi_{min}$ are in set $\Psi_\psi^{(\kappa)}$, and we have 
\[
\sup_{a,b\in\Psi_\psi^{(\kappa)}}\left\|(a-b)\right\|_2=\sqrt{d10^{-2\kappa}}.
\]
Then by the triangle inequality, we have
\[
\begin{aligned}
\left\|\nabla f(y)\right\|_2\times \left\|(\psi_{max}-\psi_{min})\right\|_2 &\leq \left(\left\|\nabla f(y)-\nabla f(\psi)\right\|_2+\left\|\nabla f(\psi)\right\|_2\right)\frac{\sqrt{d}}{10^\kappa} \\
&\leq (\varepsilon(\psi,\kappa)+\left\|\nabla f(\psi)\right\|_2)\frac{\sqrt{d}}{10^\kappa}.
\end{aligned}
\]
and hence
\[
|\mathcal{S}_\kappa(f)(\psi)-f(\psi)|\leq (\varepsilon(\psi,\kappa)+\left\|\nabla f(\psi)\right\|_2)\frac{\sqrt{d}}{10^\kappa}.
\]

Now we prove the global convergence result. Since $\nabla f$ is continuous on compact set $\Psi$, then $\left\|\nabla f\right\|_2$ is bounded. We have
\[
\left\|\nabla f(y)\right\|_2\times \left\|(\psi_{max}-\psi_{min})\right\|_2 \leq \sup_{\psi\in\Psi}\left\|\nabla f(\psi)\right\|_2 \frac{\sqrt{d}}{10^\kappa}.
\]
Therefore, we have
\[
|\mathcal{S}_\kappa(f)(\psi)-f(\psi)|\leq \sup_{\psi\in\Psi}\left\|\nabla f(\psi)\right\|_2 \frac{\sqrt{d}}{10^\kappa}.
\]
Note that, this means that $|\mathcal{S}_\kappa(f)(\psi)-f(\psi)|$ is uniformly bounded. Hence, it implies
\[
\sup_{\psi\in\mathscr{E}}|\mathcal{S}_\kappa(f)(\psi)-f(\psi)|\leq \sup_{\psi\in\Psi}\left\|\nabla f(\psi)\right\|_2 \frac{\sqrt{d}}{10^\kappa}.
\]
\end{proof}

Corollary \ref{scoro1} shows that the rate of convergence of $\mathcal{S}_\kappa(f)$ to $f$ is geometric. It states that, (a) for any $\psi\in\mathscr{E}$, the rate of convergence is locally controlled by its gradient $\left\|\nabla f(\psi)\right\|_2$; and (b) the rate of convergence is uniformly controlled by the upper bound of the gradient. Hence, as is intuitively expected, convergence is faster if the target function $f$ has a smaller total variation on the set $\mathscr{E}$.

\begin{remark}
When the scale of each component of $\psi\in\Psi$ is not same, a more complex partition can be formed by choosing component-specific precision parameters $\kappa=(\kappa_1,...,\kappa_d)$. Denote $\circ$ as the Hadamard product and $10^{\pm\kappa}:=(10^{\pm\kappa_1},...,10^{\pm\kappa_d})$, we redefine
\[
\mathcal{R}_\kappa(\Psi)=\lfloor10^\kappa\circ\psi+0.5\rfloor\circ 10^{-\kappa}.
\]
We build a (partial) $d$-orthotope around $\psi_r\in\mathcal{R}_\kappa(\Psi)$
\[
\Psi_r=\Psi\cap\{\psi: |\psi-\psi_r|\leqq 5\times10^{-\kappa-1}\},\ r=1,...,R_\kappa.
\]
We do not discuss this more complex partition but all results in this paper that are based on the basic partition in \eqref{EE15} and \eqref{EE16} can be easily extended to this more complex partition.
\end{remark}

\section{Proofs of the Main Text}

\subsection{Proof of Lemma 1}
We write the explicit form of $p^{(\kappa)}(\theta|Y,\varphi)$:
\[
p^{(\kappa)}(\theta|Y,\varphi)=\mathcal{S}_\kappa(p(\cdot|Y,\varphi))(\theta)=\sum_{r=1}^{R_\kappa}\frac{1}{\mu(\Theta_r)}\int_{\Theta_r}p(\theta^\ast|Y,\varphi)d\theta^\ast\mathbbm{1}_{\{\theta\in\Theta_r\}},
\]
then we have:
\[
\begin{aligned}
&\sup_{\theta\in\Theta\setminus\bar{\Theta}_\kappa ,\varphi\in\Phi}\left|p_n^{(\kappa)}(\theta|Y,\varphi)-p^{(\kappa)}(\theta|Y,\varphi)\right| \\
&=\sup_{\theta\in\Theta\setminus\bar{\Theta}_\kappa ,\varphi\in\Phi}\left| \sum_{r=1}^{R_\kappa}\frac{1}{\mu(\Theta_r)}\left(W_n(\Theta_r|Y,\varphi)-\int_{\Theta_r}p(\theta^\ast|Y,\varphi)d\theta^\ast\right)\mathbbm{1}_{\{\theta\in\Theta_r\}}\right| \\
&\leq \sup_{\theta\in\Theta\setminus\bar{\Theta}_\kappa ,\varphi\in\Phi}\sum_{r=1}^{R_\kappa}\frac{1}{\mu(\Theta_r)}\left|W_n(\Theta_r|Y,\varphi)-\int_{\Theta_r}p(\theta^\ast|Y,\varphi)d\theta^\ast\right|\mathbbm{1}_{\{\theta\in\Theta_r\}} \\
&= \sup_{\varphi\in\Phi;1\leq r\leq R_\kappa}\frac{1}{\mu(\Theta_r)}\left|W_n(\Theta_r|Y,\varphi)-\int_{\Theta_r}p(\theta^\ast|Y,\varphi)d\theta^\ast\right|.
\end{aligned}
\]
Thus, using Equation 10 from the main text, it is clear that
\[
\lim_{n\rightarrow\infty}\sup_{\theta\in\Theta\setminus\bar{\Theta}_\kappa ,\varphi\in\Phi}\left|p_n^{(\kappa)}(\theta|Y,\varphi)-p^{(\kappa)}(\theta|Y,\varphi)\right|=0.
\]
Since $\mu(\bar{\Theta}_\kappa)=0$, we are done.

\subsection{Proof of Theorem 1}
The theorem naturally holds when $n=1$, we consider the case when $n\geq 2$. Since the dimension $d$ and precision parameter $\kappa$ are known and fixed, we suppose that the parameter space $\Theta$ is equally partitioned and the total number of $d$-orthotopes is $R_\kappa$ and each orthotope is indexed as $\Theta_r$, $r=1,...,R_\kappa$. Since we suppose that the auxiliary chain has converged before we start collecting auxiliary variable $\tilde{\theta}$, by equation 6 in the main text, we could write the probability of the original proposal distribution $P_n^\ast$ taking a value in each partition component $\Theta_r$ as the integral with respect to the target distribution $p(\theta|Y,\varphi)$:
\[
W_\infty(\Theta_r|Y,\varphi) = \int_{\Theta_r} p(\theta|Y,\varphi)d\theta,\ \ r=1,...,,R_\kappa.
\]

Now we define binary random variables $I_r$, $r=1,...,,R_\kappa$ as:
\begin{equation}
  I_r =
    \begin{cases}
      1 & \text{if orthotope $r$ is never visited by auxiliary variables $\tilde{\theta_i}$, $i=1,...,n$;}\\
      0 & \text{otherwise,}
    \end{cases}       
\end{equation}
We then have the expected number of orthotope visited is
\[
\mathbb{E}\left(|\tilde{\Theta}_n^{(\kappa)}|\right) = \mathbb{E} \left( R_\kappa - \sum_{r=1}^{R_\kappa} I_r \right) = R_\kappa - \sum_{r=1}^{R_\kappa} \left(1-W_\infty(\Theta_r|Y,\varphi)\right)^n.
\]
By the method of the Lagrange multipliers, we write the Lagrange function as:
\[
\mathcal{L}(W_\infty(\Theta_1|Y,\varphi),...,W_\infty(\Theta_{R_\kappa}|Y,\varphi),\lambda) = R_\kappa - \sum_{r=1}^{R_\kappa} \left(1-W_\infty(\Theta_r|Y,\varphi)\right)^n + \lambda\left(\sum_{r=1}^{R_\kappa} W_\infty(\Theta_r|Y,\varphi) -1 \right).
\]
Conduct first order partial derivatives, we have
\[
\begin{aligned}
&\frac{\partial \mathcal{L}}{\partial W_\infty(\Theta_r|Y,\varphi)} = -n\left(1-W_\infty(\Theta_r|Y,\varphi)\right)^{n-1} + \lambda =0,  \ \ r=1,...,R_\kappa;\\
&\frac{\partial \mathcal{L}}{\partial \lambda} = \sum_{r=1}^{R_\kappa} W_\infty(\Theta_r|Y,\varphi) - 1 = 0.
\end{aligned}
\]
These equations hold when $W_\infty(\Theta_r|Y,\varphi)=1/R_\kappa$, $r=1,...,R_\kappa$. We now consider the second order derivatives, we have
\[
\begin{aligned}
&\frac{\partial^2 \mathcal{L}}{\partial^2 W_\infty(\Theta_r|Y,\varphi)} = -n(n-1)\left(1-W_\infty(\Theta_r|Y,\varphi)\right)^{n-2}, \ \ r=1,...,R_\kappa;\\
&\frac{\partial^2 \mathcal{L}}{\partial W_\infty(\Theta_r|Y,\varphi) \partial W_\infty(\Theta_t|Y,\varphi)} = 0,\ \ r\neq t.
\end{aligned}
\]
Hence the Hessian matrix is negative definite, $\mathbb{E}\left(|\tilde{\Theta}_n^{(\kappa)}|\right)$ achieves its maxima when $W_\infty(\Theta_r|Y,\varphi)=1/R_\kappa$, $r=1,...,R_\kappa$. If we additionally require this to be held for any precision parameter $\kappa$, the target distribution $p(\theta|Y,\varphi)$ has to be uniform distribution.

\subsection{Proof of Lemma 2}
Given a $(\theta,\varphi)\in\Theta \times\Phi$, for any Borel set $\mathcal{B}=\mathcal{B}_\Theta\times\mathcal{B}_\Phi\subset\Theta \times\Phi$, define a signed measure $D_n$ on $\Theta \times\Phi$ as
\[
\begin{aligned}
&D_n(\mathcal{B}|(\theta,\varphi))=\textbf{T}_n^{(1)}\left(\mathcal{B}|(\theta,\varphi),\mathcal{G}_n\right)-\textbf{U}^{(1)}(\mathcal{B}|(\theta,\varphi)) \\
&=\int_{\mathcal{B}_\Phi}\int_{\mathcal{B}_\Theta}\left( \alpha(\varphi^\prime|\varphi)p^{(\kappa)}(\theta^\prime|Y,\varphi^\prime)q(\varphi^\prime|\varphi)-\alpha(\varphi^\prime|\varphi)p_n^{(\kappa)}(\theta^\prime|Y,\varphi^\prime)q(\varphi^\prime|\varphi)\right) d\theta^\prime d\varphi^\prime \\
&=\int_{\mathcal{B}_\Phi} \left(\int_{\mathcal{B}_\Theta}\left(p^{(\kappa)}(\theta^\prime|Y,\varphi^\prime)-p_n^{(\kappa)}(\theta^\prime|Y,\varphi^\prime)\right)d\theta^\prime\right)\alpha(\varphi^\prime|\varphi)q(\varphi^\prime|\varphi)d\varphi^\prime.
\end{aligned}
\]
Since $p(\varphi|Z)$ and $q(\varphi^\prime|\varphi)$ are continuous on a compact set, then $\alpha(\varphi^\prime|\varphi)$ and $q(\varphi^\prime|\varphi)$ are bounded. Let $C=\sup_{\varphi^\prime\in\Phi, \varphi\in\Phi}\alpha(\varphi^\prime|\varphi)q(\varphi^\prime|\varphi)$, we have

\[
\begin{aligned}
&|D_n(\mathcal{B}|(\theta,\varphi))| \\
&=\left|\int_{\mathcal{B}_\Phi} \left(\int_{\mathcal{B}_\Theta\setminus\bar{\Theta}_\kappa}\left(p^{(\kappa)}(\theta^\prime|Y,\varphi^\prime)-p_n^{(\kappa)}(\theta^\prime|Y,\varphi^\prime)\right)d\theta^\prime\right)\alpha(\varphi^\prime|\varphi)q(\varphi^\prime|\varphi)d\varphi^\prime\right| \\
&\leq \int_{\mathcal{B}_\Phi} \sup_{\varphi^\ast\in\Phi}\left|\int_{\mathcal{B}_\Theta\setminus\bar{\Theta}_\kappa}\left(p^{(\kappa)}(\theta^\prime|Y,\varphi^\ast)-p_n^{(\kappa)}(\theta^\prime|Y,\varphi^\ast)\right)d\theta^\prime\right|Cd\varphi^\prime \\
&\leq\mu(\Phi)C\int_{\mathcal{B}_\Theta\setminus\bar{\Theta}_\kappa}\sup_{\theta^\ast\in\Theta\setminus\bar{\Theta}_\kappa ,\varphi^\ast\in\Phi}\left|p^{(\kappa)}(\theta^\ast|Y,\varphi^\ast)-p_n^{(\kappa)}(\theta^\ast|Y,\varphi^\ast)\right|d\theta^\prime \\
&\leq \mu(\Phi)\mu(\Theta)C\sup_{\theta^\ast\in\Theta\setminus\bar{\Theta}_\kappa ,\varphi^\ast\in\Phi}\left|p^{(\kappa)}(\theta^\ast|Y,\varphi^\ast)-p_n^{(\kappa)}(\theta^\ast|Y,\varphi^\ast)\right|.\\
\end{aligned}
\]
The important fact here is that $|D_n(\mathcal{B}|(\theta,\varphi))|$ can be uniformly (with respect to $\theta$, $\varphi$ and Borel set $\mathcal{B}$) bounded by
\[
\sup_{\theta^\ast\in\Theta\setminus\bar{\Theta}_\kappa ,\varphi^\ast\in\Phi}\left|p^{(\kappa)}(\theta^\ast|Y,\varphi^\ast)-p_n^{(\kappa)}(\theta^\ast|Y,\varphi^\ast)\right|
\]
up to a constant. 

Given Lemma 1, we have that the density $p_n^{(\kappa)}$ converges almost surely to $p^{(\kappa)}$ and this convergence is uniformly on $\Theta\setminus\bar{\Theta}_\kappa \times\Phi$, and so we have
\[
\lim_{n\rightarrow\infty}\sup_{\theta\in\Theta ,\varphi\in\Phi}\left\|D_n(\cdot|(\theta,\varphi))\right\|_{TV}=0.
\]
Now by the triangle inequality, we have
\[
\begin{aligned}
&\lim_{n\rightarrow\infty}\sup_{\theta\in\Theta ,\varphi\in\Phi}\left\|\textbf{T}_{n+1}^{(1)}\left(\cdot|(\theta,\varphi),\mathcal{G}_{n+1}\right)-\textbf{T}_n^{(1)}\left(\cdot|(\theta,\varphi),\mathcal{G}_n\right)\right\|_{TV} \\
&\leq \lim_{n\rightarrow\infty}\sup_{\theta\in\Theta ,\varphi\in\Phi}\left\|D_{n+1}(\cdot|(\theta,\varphi))\right\|_{TV}+\lim_{n\rightarrow\infty}\sup_{\theta\in\Theta ,\varphi\in\Phi}\left\|D_n(\cdot|(\theta,\varphi))\right\|_{TV}.
\end{aligned}
\]
It follows that:
\[
\lim_{n\rightarrow\infty}\sup_{\theta\in\Theta ,\varphi\in\Phi}\left\|\textbf{T}_{n+1}^{(1)}\left(\cdot|(\theta,\varphi),\mathcal{G}_{n+1}\right)-\textbf{T}_n^{(1)}\left(\cdot|(\theta,\varphi),\mathcal{G}_n\right)\right\|_{TV}=0.
\]

\subsection{Proof of Lemma 3}
Define a function $g:\Phi\rightarrow\mathbb{R}$ as
\[
g(\varphi)=\min_{\theta\in\Theta } p_n^{(\kappa)}(\theta|Y,\varphi).
\]
Since the support of $p_n^{(\kappa)}$ is $\Theta $, we have $g(\varphi)>0$, for all $\varphi\in\Phi$. In addition, since each element of $\mathscr{W}_n(\varphi)$ is a continuous function on the compact set $\Phi$ (see equation 5 and 9 in the main text), then $g(\varphi)$ is also a continuous function on $\Phi$. Since $\Phi$ is compact, $g(\varphi)$ reaches its minima
\[
\varepsilon=\min_{\varphi\in\Phi}g(\varphi).
\]
Thus $p_n^{(\kappa)}(\theta|Y,\varphi)>\varepsilon$ for all $\theta\in\Theta$ and $\varphi\in\Phi$, and local positivity holds.

By the same reasoning, it is also true for the proposal distribution with density $p^{(\kappa)}$.

\subsection{Necessary definitions}
\begin{definition}
Given any function $V:\Psi\rightarrow [1,\infty)$ and any signed measure $\mathcal{M}$ on $\Psi$, define the $V$-norm as
\[
\left\|\mathcal{M}\right\|_V=\sup_{|g|\leq V}\left|\int_{\Psi} g(\psi)\mathcal{M}(d\psi)\right|.
\]
\end{definition}

\begin{definition}
For simplicity, for any function $f:\Psi\rightarrow\mathbb{R}$ and any measure $\mathcal{M}$ on $\Psi$, write
\[
\mathcal{M}f:=\int_\Psi f(\psi)\mathcal{M}(d\psi).
\]
\end{definition}

\begin{definition}
Given any two measures $\textbf{M}_{(x)}(dz):=\textbf{M}(dz|x)$, where $x\in\mathbb{X}$, and $\textbf{N}_{(y)}(dx):=\textbf{N}(dx|y)$ which concentrates on $\mathbb{X}$, for any Borel set $\mathcal{B}$, we write
\[
\textbf{M}\textbf{N}_{(y)}(\mathcal{B}):=\int_\mathcal{B}\int_\mathbb{X} \textbf{M}_{(x)}(dz)\textbf{N}_{(y)}(dx).
\]
The definition can be extended to cases with more than two measures in a natural way.
\end{definition}

\subsection{Proof of Lemma 4}
Given the filtration $\mathcal{G}_n$, the transition kernel $\textbf{U}^{(1)}$ and $\textbf{V}_n^{(1)}$ both admit an irreducible and aperiodic Markov chain by assumption. Therefore, to prove that transition kernel $\textbf{T}_n^{(1)}$ also holds same property, it suffices to prove that for any $s\in\mathbb{N}$, $(\theta_0,\varphi_0)\in\Theta \times\Phi$, and Borel set $\mathcal{B}=\mathcal{B}_\Theta\times\mathcal{B}_\Phi\subset\Theta \times\Phi$ such that $\textbf{V}_n^{(s)}(\mathcal{B})>0$, we have $\textbf{T}_n^{(s)}(\mathcal{B})>0$. We prove this by mathematical induction. 

Consider first when $s=1$. We write $\alpha(\varphi^\prime|\varphi)=\min(1,\beta(\varphi^\prime|\varphi))$ where
\[
\beta(\varphi^\prime|\varphi)=\frac{p(\varphi^\prime|Z)q(\varphi|\varphi^\prime)}{p(\varphi|Z)q(\varphi^\prime|\varphi)},
\]
and $\alpha_n((\theta^\prime,\varphi^\prime)|(\theta,\varphi))=\min\left(1,\beta_n((\theta^\prime,\varphi^\prime)|(\theta,\varphi))\right)$, where
\[
\beta_n((\theta^\prime,\varphi^\prime)|(\theta,\varphi))=\frac{p^{(\kappa)}(\theta^\prime|Y,\varphi^\prime)p(\varphi|Z)q(\varphi|\varphi^\prime)p_n^{(\kappa)}(\theta|Y,\varphi)}{p^{(\kappa)}(\theta|Y,\varphi)p(\varphi|Z)q(\varphi^\prime|\varphi)p_n^{(\kappa)}(\theta^\prime|Y,\varphi^\prime)},
\]
and
\[
r((\theta^\prime,\varphi^\prime),(\theta,\varphi))=\frac{\beta(\varphi^\prime|\varphi)}{\beta_n((\theta^\prime,\varphi^\prime)|(\theta,\varphi))},
\]
noting that both $p_n^{(\kappa)}$ and $p^{(k)}$ are bounded away from $0$ and $\infty$. Now we denote 
\[
r^\ast=\min_{(\theta^\prime,\varphi^\prime),(\theta,\varphi)\in\Theta \times\Phi}r((\theta^\prime,\varphi^\prime),(\theta,\varphi)),
\]
and it is easy to see that $r^\ast>0$.

Now given any Borel set $\mathcal{B}=\mathcal{B}_\Theta\times\mathcal{B}_\Phi\subset\Theta \times\Phi$ and initial value $(\theta_0,\varphi_0)\in\Theta \times\Phi$, we have
\[
\resizebox{.95\hsize}{!}{$\begin{aligned}
&\textbf{T}_n^{(1)}\left(\mathcal{B}|(\theta_0,\varphi_0),\mathcal{G}_n\right) \\
&=\textbf{T}_n^{(1)}\left(\mathcal{B}\setminus \{(\theta_0,\varphi_0)\}|(\theta_0,\varphi_0),\mathcal{G}_n\right)\\
&=\int_\mathcal{B}\alpha(\varphi|\varphi_0) p_n^{(\kappa)}(\theta|Y,\varphi)q(\varphi|\varphi_0)d\theta d\varphi \\
&=\int_\mathcal{B}\min\left\lbrace 1, r((\theta,\varphi),(\theta_0,\varphi_0))\beta_n((\theta,\varphi)|(\theta_0,\varphi_0)) \right\rbrace p_n^{(\kappa)}(\theta|Y,\varphi)q(\varphi|\varphi_0)d\theta d\varphi \\
&\geq \int_\mathcal{B}\min\left\lbrace1, r((\theta,\varphi),(\theta_0,\varphi_0))\right\rbrace\min\left\lbrace 1, \beta_n((\theta,\varphi)|(\theta_0,\varphi_0)) \right\rbrace p_n^{(\kappa)}(\theta|Y,\varphi)q(\varphi|\varphi_0)d\theta d\varphi \\
&\geq \min\left\lbrace1, r^\ast\right\rbrace\int_\mathcal{B}\alpha_n((\theta,\varphi)|(\theta_0,\varphi_0)) p_n^{(\kappa)}(\theta|Y,\varphi)q(\varphi|\varphi_0)d\theta d\varphi \\
\end{aligned}$}
\]
Since $\min\left\lbrace1, r^\ast\right\rbrace>0$, we have 
\[
\textbf{V}_n^{(1)}\left(\mathcal{B}|(\theta_0,\varphi_0),\mathcal{G}_n\right)>0 \Rightarrow \textbf{T}_n^{(1)}\left(\mathcal{B}|(\theta_0,\varphi_0),\mathcal{G}_n\right)>0.
\]
Thus, the induction assumption holds when $s=1$.

Now assume that the induction assumption holds up to step $s=s^\ast$, i.e.
\[
\textbf{V}_n^{(s^\ast)}(\mathcal{B})>0 \Rightarrow\textbf{T}_n^{(s^\ast)}(\mathcal{B})>0.
\]
We need to show that it also holds at step $s=s^\ast+1$. For an initial value $(\theta_0,\varphi_0)$, consider a Borel set $\mathcal{B}$ such that $\textbf{V}_n^{(s^\ast+1)}(\mathcal{B})>0$. We proceed by contradiction. Suppose that
\[
\textbf{T}_n^{(s^\ast+1)}(\mathcal{B})=\int_{\Theta \times\Phi} \textbf{T}_n^{(1)}\left(\mathcal{B}|(\theta,\varphi),\mathcal{G}_n\right)\textbf{T}_n^{(s^\ast)}(d\theta,d\varphi)=0.
\]
This implies that the function $\textbf{T}_n^{(1)}\left(\mathcal{B}|\cdot,\mathcal{G}_n\right)=0$ almost surely with respect to the measure $\textbf{T}_n^{(s^\ast)}$. Because the induction assumption holds at step $s^\ast$, which means that any $\textbf{V}_n^{(s^\ast)}$-measurable set of positive measure is a subset of a $\textbf{T}_n^{(s^\ast)}$-measurable set of positive measure, we have that the function $\textbf{T}_n^{(1)}\left(\mathcal{B}|\cdot,\mathcal{G}_n\right)=0$ almost surely with respect to the measure $\textbf{V}_n^{(s^\ast)}$. This further implies that the function $\textbf{V}_n^{(1)}\left(\mathcal{B}|\cdot,\mathcal{G}_n\right)=0$ almost surely with respect to the measure $\textbf{V}_n^{(s^\ast)}$. It is clear that this contradicts the fact that $\textbf{V}_n^{(s^\ast+1)}(\mathcal{B})>0$. Hence, we are done.

Given that $q(\varphi^\prime|\varphi)$ and $p^{(\kappa)}(\theta^\prime|Y,\varphi^\prime)$ satisfy the local positivity by Lemma 3, it is easy to check that
\[
q((\theta^\prime,\varphi^\prime)|(\theta,\varphi))=p^{(\kappa)}(\theta^\prime|Y,\varphi^\prime)q(\varphi^\prime|\varphi)
\]
also satisfies local positivity. Hence, by Theorem 2.2 of \cite{10.2307/2337435}, since the target distribution is bounded away from 0 and $\infty$ on a compact set and the proposal distribution satisfies local positivity, the Partial Gibbs chain is irreducible and aperiodic, and every nonempty compact set is small. Moreover, $\Theta\times\Phi$ is a small set for the transition kernel $\textbf{u}^{(1)}(\cdot|(\theta,\varphi))$, since it is compact. Hence, it is straightforward to verify that, for any $(\theta,\varphi)\in\Theta\times\Phi$ and Borel set $\mathcal{B}\subset\Theta\times\Phi$, there exists a $\delta>0$ such that
\[
\textbf{U}^{(1)}(\mathcal{B}|(\theta,\varphi))\geq \delta\mu(\mathcal{B}).
\]

Since 
\[
q_n((\theta^\prime,\varphi^\prime)|(\theta,\varphi))=p_n^{(\kappa)}(\theta^\prime|Y,\varphi^\prime)q(\varphi^\prime|\varphi)
\]
also satisfies local positivity, following the proof of Theorem 2.2 in \cite{10.2307/2337435} \footnote{The difference is that there is an additional term, the ratio of $p_n^{(\kappa)}$ to $p^{(\kappa)}$, in our case. Since they are positive and bounded functions defined on $\Theta\times\Phi$, this ratio has a positive minimum on $\Theta\times\Phi$. Hence, the inequality in the original proof still holds.}, one can show that, $\Theta\times\Phi$ is also a small set for the transition kernel $\textbf{t}_n^{(1)}$. Let the ``geometric drift function'' $V(\theta,\varphi)\equiv1$, there exists $\lambda<1$ and $b<\infty$ such that
\[
1=\int_{\Theta\times\Phi} V(\theta^\ast,\varphi^\ast)\textbf{T}_n^{(1)}\left((d\theta^\ast,d\varphi^\ast)|(\theta,\varphi),\mathcal{G}_n\right)\leq \lambda V(\theta,\varphi)+b\mathbbm{1}_{\{(\theta,\varphi)\in\Theta\times\Phi\}}
\]
then by Theorem 3.1 of \cite{10.2307/2337435}, for all $(\theta_0,\varphi_0)\in\Theta\times\Phi$, there exists a probability measure $\Pi_n$ on $\Theta \times\Phi$ and constant $\rho<1$ and $R<\infty$ such that for all $s=1,2,...$ and all $(\theta_0,\varphi_0)\in\Theta\times\Phi$,
\[
\left\|\textbf{T}_n^{(s)}-\Pi_n\right\|_V\leq R\ V(\theta_0,\varphi_0)\rho^s.
\]
Since $V=1$, we have uniformly geometric convergence:
\[
\lim_{s\rightarrow\infty}\sup_{(\theta_0,\varphi_0)\in\Theta\times\Phi} \left\|\textbf{T}_n^{(s)}-\Pi_n\right\|_V=0
\]
In addition, for any $(\theta_0,\varphi_0)\in\Theta\times\Phi$,
\[
0\leq \left\|\textbf{T}_n^{(s)}(\cdot)-\Pi_n\left(\cdot\right)\right\|_{TV}\leq \left\|\textbf{T}_n^{(s)}-\Pi_n\right\|_V,
\]
by the squeeze theorem, we have:
\[
\lim_{s\rightarrow\infty}\sup_{(\theta_0,\varphi_0)\in\Theta\times\Phi} \left\|\textbf{T}_n^{(s)}(\cdot)-\Pi_n\left(\cdot\right)\right\|_{TV}=0.
\]
\begin{remark}
Following the fact that, for any $(\theta,\varphi)\in\Theta\times\Phi$ and Borel set $\mathcal{B}\subset\Theta\times\Phi$, there exists a $\delta>0$ such that
\[
\textbf{U}^{(1)}(\mathcal{B}|(\theta,\varphi))\geq \delta\mu(\mathcal{B}).
\]
following the proof of Lemma 2, we have:
\[
\begin{aligned}
&\textbf{U}^{(1)}(\mathcal{B}|(\theta,\varphi))=\textbf{U}^{(1)}(\mathcal{B}|(\theta,\varphi))-\textbf{T}_n^{(1)}\left(\mathcal{B}|(\theta,\varphi),\mathcal{G}_n\right)+\textbf{T}_n^{(1)}\left(\mathcal{B}|(\theta,\varphi),\mathcal{G}_n\right) \\
&\leq \sup_{\theta\in\Theta, \varphi\in\Phi}\left|\textbf{U}^{(1)}(\mathcal{B}|(\theta,\varphi))-\textbf{T}_n^{(1)}\left(\mathcal{B}|(\theta,\varphi),\mathcal{G}_n\right)\right|+\textbf{T}_n^{(1)}\left(\mathcal{B}|(\theta,\varphi),\mathcal{G}_n\right)\\
&\leq C\mu(\mathcal{B})\sup_{\theta^\ast\in\Theta\setminus\bar{\Theta}_\kappa ,\varphi^\ast\in\Phi}\left|p^{(\kappa)}(\theta^\ast|Y,\varphi^\ast)-p_n^{(\kappa)}(\theta^\ast|Y,\varphi^\ast)\right|+\textbf{T}_n^{(1)}\left(\mathcal{B}|(\theta,\varphi),\mathcal{G}_n\right). \\
\end{aligned}
\]
where $C$ is a constant. Therefore, for any $(\theta,\varphi)\in\Theta\times\Phi$ and Borel set $\mathcal{B}\subset\Theta\times\Phi$, we have
\[
\textbf{T}_n^{(1)}\left(\mathcal{B}|(\theta,\varphi),\mathcal{G}_n\right)\geq \left(\delta-C\sup_{\theta^\ast\in\Theta\setminus\bar{\Theta}_\kappa ,\varphi^\ast\in\Phi}\left|p^{(\kappa)}(\theta^\ast|Y,\varphi^\ast)-p_n^{(\kappa)}(\theta^\ast|Y,\varphi^\ast)\right|\right)\mu(\mathcal{B}).
\]
Note that, by Lemma 1, for any outcome $\omega$ in probability space $\Omega$, we have
\[
\sup_{\theta^\ast\in\Theta\setminus\bar{\Theta}_\kappa ,\varphi^\ast\in\Phi}\left|p^{(\kappa)}(\theta^\ast|Y,\varphi^\ast)-p_n^{(\kappa)}(\theta^\ast|Y,\varphi^\ast)\right|\rightarrow 0,\ \text{when}\ n\rightarrow\infty.
\]
This is important. Since for any positive constant $a<\delta$, there exists a $N$ such that for all $n>N$, we have
\[
\textbf{T}_n^{(1)}\left(\mathcal{B}|(\theta,\varphi),\mathcal{G}_n\right)\geq (\delta-a)\mu(\mathcal{B}).
\]
Hence, a common and same lower bound is well defined on this outcome $\omega$.
\end{remark}

\subsection{Proof of Lemma 5}
For any initial value $(\theta_0,\varphi_0)$ and $s>1$ and function $f:\Theta\times\Phi\rightarrow [-1,1]$, write
\[
\textbf{T}_n^{(s)}f-P_{cut}^{(\kappa)}f= \textbf{U}^{(s)}f- P_{cut}^{(\kappa)}f + \textbf{T}_n^{(s)}f- \textbf{U}^{(s)}f.
\]
We first concentrate on the second term $\textbf{T}_n^{(s)}f- \textbf{U}^{(s)}f$, for any $1\leq s_0<s$, denote $\textbf{U}^{(0)}=1$ and $\textbf{T}_n^{(0)}=1$, we have, by a telescoping argument,
\[
\resizebox{.95\hsize}{!}{$\begin{aligned}
&\left|\textbf{T}_n^{(s)}f- \textbf{U}^{(s)}f\right| \\
&\leq \left|\textbf{T}_n^{(s)}f-\textbf{T}_n^{(s_0)}f\right|+\left|\textbf{T}_n^{(s_0)}f-\textbf{U}^{(s_0)}f\right|+\left|\textbf{U}^{(s)}f-\textbf{U}^{(s_0)}f\right| \\
&=\left|\textbf{T}_n^{(s)}f-\textbf{T}_n^{(s_0)}f\right|+\left|\sum_{k=0}^{s_0-1}\left(\textbf{U}^{(k)}\textbf{T}_n^{(s_0-k)}f-\textbf{U}^{(k+1)}\textbf{T}_n^{(s_0-k-1)}f\right)\right|+\left|\textbf{U}^{(s)}f-\textbf{U}^{(s_0)}f\right| \\
&=\left|\textbf{T}_n^{(s)}f-\textbf{T}_n^{(s_0)}f\right|+ \left|\sum_{k=0}^{s_0-1}\textbf{U}^{(k)}\left(\textbf{T}_n^{(1)}-\textbf{U}^{(1)}\right)\textbf{T}_n^{(s_0-k-1)}f\right| + \left|\textbf{U}^{(s)}f-\textbf{U}^{(s_0)}f\right|.\\
\end{aligned}$}
\]
Note that, $\left(\textbf{T}_n^{(1)}-\textbf{U}^{(1)}\right)$ is the signed measure $D_n$ defined in the proof of Lemma 2. By the result of Lemma 2, we have
\[
\sup_{\theta\in\Theta ,\varphi\in\Phi}\left\|D_n(\cdot|(\theta,\varphi))\right\|_{TV}\xrightarrow{\text{a.s.}}0,
\]
on the probability space $\Omega$. Then by Egorov's theorem, for any $e>0$, there exists a set $E_1\subset\Omega$ with $\mathbb{P}(E_1)>1-\frac{e}{2}$ such that $\sup_{\theta\in\Theta ,\varphi\in\Phi}\left\|D_n(\cdot|(\theta,\varphi))\right\|_{TV}$ uniformly converges to 0 on $E_1$. Hence, for any $\epsilon>0$, there exists a $N_1(\epsilon)$, such that for all $n>N_1(\epsilon)$, $\sup_{\theta\in\Theta ,\varphi\in\Phi}\left\|D_n(\cdot|(\theta,\varphi))\right\|_{TV}\leq\epsilon$ on $E_1$. Then, since the remaining terms are bounded by 1, there exist a constant $C$ such that
\[
\left|\sum_{k=0}^{s_0-1}\textbf{U}^{(k)}\left(\textbf{T}_n^{(1)}-\textbf{U}^{(1)}\right)\textbf{T}_n^{(s_0-k-1)}f\right|\leq C s_0 \epsilon.
\]

Now, following the same reasoning as Lemma 4 and Theorem 3.1 of \cite{10.2307/2337435}, $\textbf{U}^{(s)}$ uniformly converges to $P_{cut}^{(\kappa)}$ in the sense of $V$-norm ($V\equiv 1$). Hence, for the same $\epsilon$, there exists a $S_1(\epsilon)$ such that for any $s>s_0>S_1(\epsilon)$,
\[
\left|\textbf{U}^{(s)}f-\textbf{U}^{(s_0)}f\right|\leq \epsilon,\quad \left|\textbf{U}^{(s)}f- P_{cut}^{(\kappa)}f\right|\leq \epsilon.
\]

By Lemma 1, we have that $p_n^{(\kappa)}(\theta|Y,\varphi)$ converges to $p^{(\kappa)}(\theta|Y,\varphi)$ almost surely on probability space $\Omega$. Then by Egorov's theorem, for same $e$, there exists a set $E_2\subset\Omega$ with $\mathbb{P}(E_2)>1-\frac{e}{2}$ such that $p_n^{(\kappa)}(\theta|Y,\varphi)$ uniformly converges to $p^{(\kappa)}(\theta|Y,\varphi)$ on $E_2$. Hence on $E_2$, by the Remark of the proof of Lemma 4, for any Borel set $\mathcal{B}\subset\Theta\times\Phi$ and $(\theta,\varphi)\in\Theta\times\Phi$, there exists a $N_2$ such that for all $n>N_2$,
\[
\textbf{T}_n^{(1)}\left(\mathcal{B}|(\theta,\varphi),\mathcal{G}_n\right)\geq \frac{\delta}{2}\mu(\mathcal{B}).
\]
By Theorem 2.3 of \cite{10.2307/2245077}, we have all $\textbf{T}_n^{(1)}\left(\cdot|(\theta,\varphi),\mathcal{G}_n\right)$, when $n>N_2$, are uniformly ergodic in $V$-norm and have the same geometric convergence rate. Hence on $E_2$, there exists a $S_2(\epsilon)$, such that for all $s>s_0>S_2(\epsilon)$ and $n>N_2$,
\[
\left|\textbf{T}_n^{(s)}f-\textbf{T}_n^{(s_0)}f\right|\leq \epsilon.
\]

Let $N(\epsilon)=\max(N_1(\epsilon),N_2)$ and $S(\epsilon)=\max(S_1(\epsilon),S_2(\epsilon))$. On set $E_2$, all convergences which involve $S_1(\epsilon)$ and $S_2(\epsilon)$ have geometric convergence rate. Thus, one can select a $S(\epsilon)$ such that $\epsilon S(\epsilon)\rightarrow 0$ when $\epsilon\rightarrow 0$.

Let $\varepsilon=(C S(\epsilon)+3)\epsilon$ and set $E=E_1\cap E_2$ with $\mathbb{P}(E)>1-e$. It is clear that $\varepsilon\rightarrow 0$ when $\epsilon\rightarrow 0$. We can conclude that, on set $E$, there exists $N(\epsilon)$ and $S(\epsilon)$ such that for any $n>N(\epsilon)$ and $s>S(\epsilon)$,
\[
\left|\textbf{T}_n^{(s)}f-P_{cut}^{(\kappa)}f\right|\leq \varepsilon.
\]
Note that, for any Borel set $\mathcal{B}\subset\Theta\times\Phi$, we can let function $f$ be an indicator function $\mathbbm{1}_{\{x\in\mathcal{B}\}}$. Hence, for any initial value $(\theta_0,\varphi_0)\in\Theta\times\Phi$, and any $\varepsilon>0$ and $e>0$, there exists constants $S(\varepsilon)>0$ and $N(\varepsilon)>0$ such that
\[
\mathbb{P}\left(\left\lbrace P_n^{(\kappa)}:\ \left\|\textbf{T}_n^{(s)}\left(\cdot\right)-P_{cut}^{(\kappa)}\left(\cdot\right)\right\|_{TV}\leq\varepsilon\right\rbrace\right)>1-e.
\]
for all $s>S(\varepsilon)$ and $n>N(\varepsilon)$.

\subsection{Proof of Corollary 2}
Given the result of global convergence in Corollary 1, and given a $\varphi$, there is a subset $\Theta^\ast\subset\Theta$ such that
\[
\sup_{\theta\in\Theta^\ast}\left|p(\theta|Y,\varphi)-p^{(\kappa)}(\theta|Y,\varphi)\right|\leq \sup_{\theta\in\Theta^\ast}\left\|\nabla_{\theta}p(\theta|Y,\varphi)\right\|_2 \frac{\sqrt{d}}{10^\kappa},
\]
where $d$ is the dimension of $\theta$. Following the proof of Lemma 1, we know that the construction of the set $\Theta^\ast$ is only related to the geometric shape of $\Theta$, and it is not related to the function and thus not related to $\varphi$. Since $p_{cut}$ is continuously differentiable, then $\nabla_{\theta,\varphi}p_{cut}(\theta,\varphi)$ is continuous. This further implies $\nabla_{\theta}p(\theta|Y,\varphi)$ is continuous with respect to $\theta$ and $\varphi$. Because $\Phi$ is compact, we have
\[
\sup_{\theta\in\Theta^\ast,\varphi\in\Phi}\left|p(\theta|Y,\varphi)-p^{(\kappa)}(\theta|Y,\varphi)\right|\leq \sup_{\theta\in\Theta^\ast,\varphi\in\Phi}\left\|\nabla_{\theta}p(\theta|Y,\varphi)\right\|_2 \frac{\sqrt{d}}{10^\kappa}<\infty.
\]

Now since $\mu(\Theta^\ast)=\mu(\Theta)$, we have the following bias term
\[
\begin{aligned}
&\left|\int_{\Theta\times\Phi}f(\theta,\varphi)P_{cut}(d\theta,d\varphi)-\int_{\Theta\times\Phi}f(\theta,\varphi)P_{cut}^{(\kappa)}(d\theta,d\varphi)\right| \\
&=\left|\int_{\Theta^\ast\times\Phi} f(\theta,\varphi)\left(p(\theta|Y,\varphi)-p^{(\kappa)}(\theta|Y,\varphi)\right)p(\varphi|Z) d\theta d\varphi\right| \\
&\leq \int_{\Theta^\ast\times\Phi} f(\theta,\varphi)\left|p(\theta|Y,\varphi)-p^{(\kappa)}(\theta|Y,\varphi)\right|p(\varphi|Z) d\theta d\varphi \\
&\leq \sup_{\theta\in\Theta^\ast,\varphi\in\Phi}\left|p(\theta|Y,\varphi)-p^{(\kappa)}(\theta|Y,\varphi)\right|\int_{\Theta^\ast\times\Phi} f(\theta,\varphi)p(\varphi|Z) d\theta d\varphi \\
&\leq \sup_{\theta\in\Theta^\ast,\varphi\in\Phi}\left\|\nabla_{\theta}p(\theta|Y,\varphi)\right\|_2 \frac{\sqrt{d}}{10^\kappa}\left(\int_{\Theta^\ast\times\Phi} f(\theta,\varphi)p(\varphi|Z) d\theta d\varphi\right).
\end{aligned}
\]
For any $\varepsilon>0$, let
\[
\sup_{\theta\in\Theta^\ast,\varphi\in\Phi}\left\|\nabla_{\theta}p(\theta|Y,\varphi)\right\|_2 \frac{\sqrt{d}}{10^\kappa}\left(\int_{\Theta^\ast\times\Phi} f(\theta,\varphi)p(\varphi|Z) d\theta d\varphi\right)=\frac{\varepsilon}{2},
\]
let the solution of this equation be $\kappa^\ast$. We have the following bias term
\[
\left|\int_{\Theta\times\Phi}f(\theta,\varphi)P_{cut}(d\theta,d\varphi)-\int_{\Theta\times\Phi}f(\theta,\varphi)P_{cut}^{(\kappa^\ast)}(d\theta,d\varphi)\right|\leq \frac{\varepsilon}{2},
\]
and this is always true in probability space $\Omega$. Now by Theorem 2, for the same $\varepsilon$ and $\kappa^\ast$, there exists a $N(\kappa^\ast,\varepsilon)$ such that for any $N>N(\kappa^\ast,\varepsilon)$, there is a set $E\subset\Omega$ with $\mathbb{P}(E)>1-e$ and on this set the error term satisfies
\[
\left|\frac{1}{N}\sum_{n=1}^N f(\theta_n,\varphi_n) - \int_{\Theta\times\Phi} f(\theta,\varphi)P_{cut}^{(\kappa^\ast)}(d\theta,d\varphi)\right|\leq \frac{\varepsilon}{2}.
\]
Hence, combining the error term and bias term, on the set $E$ we have
\[
\resizebox{.95\hsize}{!}{$\begin{aligned}
&\left|\frac{1}{N}\sum_{n=1}^N f(\theta_n,\varphi_n)-\int_{\Theta\times\Phi} f(\theta,\varphi)P_{cut}(d\theta,d\varphi)\right| \\
&\leq \left|\frac{1}{N}\sum_{n=1}^N f(\theta_n,\varphi_n) - \int_{\Theta\times\Phi} f(\theta,\varphi)P_{cut}^{(\kappa^\ast)}(d\theta,d\varphi)\right|+ \left|\int_{\Theta\times\Phi}f(\theta,\varphi)P_{cut}(d\theta,d\varphi)-\int_{\Theta\times\Phi}f(\theta,\varphi)P_{cut}^{(\kappa^\ast)}(d\theta,d\varphi)\right|\leq \varepsilon 
\end{aligned}$}
\]
Hence, we are done.

\end{document}